  \renewcommand*\showkeyslabelformat[1]{%
    {\normalfont\tiny\ttfamily\kern-3ex #1\kern-3ex}
  }
\tikzset{>=stealth}
\tikzset{
    tension/.initial=1,
    xwobble/.initial=.5,
    sigma line/.style={
        xshift/.initial=10,
        to path={
          .. controls ($ (\tikztostart  -| \tikztotarget) !\pgfkeysvalueof{/tikz/tension} * \pgfkeysvalueof{/tikz/xwobble}! (\tikztostart) $) and
                     ($ (\tikztotarget -| \tikztostart)  !\pgfkeysvalueof{/tikz/tension} * (1-\pgfkeysvalueof{/tikz/xwobble})! (\tikztotarget)  $)
          .. (\tikztotarget)\tikztonodes
        }
    },
    ystretch/.initial=20,
    sigma line/.default=.5cm
}
\tikzset{
    commutative diagrams/.cd,
    arrow style=tikz,
    diagrams={>=stealth},
    row sep=large,
    column sep = huge
}
\newcommand{\pbk}{\arrow[dr, phantom, "\text{\tiny\pigpenfont A}", pos=0.05]}
\providecommand{\catname}{\mathbf} 
\providecommand{\clsname}{\mathcal}
\providecommand{\oname}[1]{\operatorname{\mathsf{#1}}}
\def\defcatname#1{\expandafter\def\csname B#1\endcsname{\catname{#1}}}
\def\defcatnames#1{\ifx#1\defcatnames\else\defcatname#1\expandafter\defcatnames\fi}
\def\defclsname#1{\expandafter\def\csname C#1\endcsname{\clsname{#1}}}
\def\defclsnames#1{\ifx#1\defclsnames\else\defclsname#1\expandafter\defclsnames\fi}
\def\defbbname#1{\expandafter\def\csname BB#1\endcsname{\mathbb{#1}}}
\def\defbbnames#1{\ifx#1\defbbnames\else\defbbname#1\expandafter\defbbnames\fi}
\def\Set{\catname{Set}}
\providecommand{\eps}{\operatorname\epsilon}
\providecommand{\from}{\leftarrow}
\providecommand{\argument}{\operatorname{-\!-}}
\providecommand{\ul}{\underline}			                     %
\DeclareOldFontCommand{\bf}{\normalfont\bfseries}{\mathbf}
\providecommand{\PSet}{{\mathcal P}}			                 %
\providecommand{\id}{\mathsf{id}}
\providecommand{\op}{\mathsf{op}}
\providecommand{\comp}{\mathbin{\circ}}
\providecommand{\iso}{\mathbin{\cong}}
\providecommand{\bang}{\operatorname!}				             %
\providecommand{\ito}{\hookrightarrow}				             %
\providecommand{\tto}{\mathrel{\Rightarrow}}			         %
\providecommand{\mto}{\mapsto}
\providecommand{\xto}[1]{\,\xrightarrow{#1}\,}
\providecommand{\xfrom}[1]{\,\xleftarrow{\;#1}\,}
\providecommand{\To}{\mathrel{\Rightarrow}}			           %
\providecommand{\dar}{\kern-1.2pt\operatorname{\downarrow}}	
\providecommand{\uar}{\kern-1.2pt\operatorname{\uparrow}}
\providecommand{\bigor}{\bigvee}
\providecommand{\bigand}{\bigwedge}
\providecommand{\impl}{\Rightarrow}
\providecommand{\fst}{\oname{fst}}
\providecommand{\snd}{\oname{snd}}
\providecommand{\pr}{\oname{pr}}
\providecommand{\brks}[1]{\langle #1\rangle}
\providecommand{\Brks}[1]{\bigl\langle #1\bigr\rangle}
\providecommand{\inl}{\oname{inl}}
\providecommand{\inr}{\oname{inr}}
\providecommand{\inj}{\oname{in}}
\DeclareSymbolFont{Symbols}{OMS}{cmsy}{m}{n}
\DeclareMathSymbol{\iobj}{\mathord}{Symbols}{"3B}
\providecommand{\curry}{\oname{curry}}
\providecommand{\ev}{\oname{ev}}
\providecommand{\lsem}{\llbracket}
\providecommand{\rsem}{\rrbracket}
\providecommand{\sem}[1]{\lsem #1 \rsem}
\providecommand{\comma}{,\operatorname{}\linebreak[1]}		 %
\providecommand{\erule}{\rule{0pt}{0pt}}		               %
\providecommand{\by}[1]{\text{/\!\!/~#1}}			             %
\providecommand{\pacman}[1]{}					                     %
\newcommand{\undefine}[1]{\let #1\relax}					                       %
\providecommand{\mone}{{\text{\kern.5pt\rmfamily-}\mathsf{\kern-.5pt1}}}
\providecommand{\smin}{\smallsetminus}
\newlist{citemize}{itemize}{1}
\setlist[citemize]{label=\labelitemi,wide} 
\newlist{cenumerate}{enumerate}{1}
\setlist[cenumerate,1]{label=\arabic*.~,ref={\arabic*},wide} 
\def\mfix#1{\oname{#1}\@ifnextchar\bgroup\@mfix{}}	       %
\def\@mfix#1{#1\@ifnextchar\bgroup\mfix{}}			           %
\providecommand{\case}[3]{\mfix{case}{\mathbin{}#1}{of}{#2}{\kern-1pt;}{\mathbin{}#3}}
\newcommand\sep{\mathrel{\ooalign{\kern1pt\rule[-1pt]{.41pt}{7pt}\kern1pt}}}
\newcommand{\superimpose}[2]{%
  {\ooalign{$#1\@firstoftwo#2$\cr\hfil$#1\@secondoftwo#2$\hfil\cr}}}
\newcommand{\dom}{\oname{dom}}
\newcommand{\img}{\oname{img}}
\newcommand{\nat}{\mathbb{N}}
\newcommand{\Hpl}{H}
\newcommand{\Hplb}{\CH}
\newcommand{\Hid}{P}
\newcommand{\St}{T}
\newcommand{\TA}{\Theta} %
\let\Ref\undefined
\newcommand{\Ref}{\oname{Ref}}
\newcommand{\CType}{\oname{CType}}
\newcommand{\inacc}{{\scalebox{.8}{$\boxtimes$}}}
\newcommand{\Nat}{\oname{Nat}}
\newcommand{\mbind}[2]{\mfix{let\,}{#1}{in}{#2}}
\newcommand{\letref}[2]{\mfix{letref}{#1}{in}{#2}}
\newcommand{\ret}{\oname{ret}}
\newcommand{\set}[1]{\{#1\}}
\newcommand{\htri}[3]{\set{#1}{#2}\set{#3}}
\renewcommand{\paragraph}[1]{{\setlength{\parskip}{1.5ex}\par\noindent\textbf{#1}~~}}
\newcommand{\ttrue}{\oname{true}}
\newcommand{\ffalse}{\oname{false}}
\newcommand{\nnew}{\oname{new}}
\newcommand{\ass}{\mathrel{:\kern-2pt\text{\textdblhyphen}}}
\newcommand{\ctx}{\vdash}
\newcommand{\cctx}{\ctx_{\mathsf{c}}}
\newcommand{\vctx}{\ctx_{\mathsf{v}}}
\newcommand{\bigior}{\bigcurlyvee}
\newcommand{\bigiand}{\bigcurlywedge}
\newcommand{\pred}{\mathsf{P}}
\DeclareRobustCommand{\ite}[3]{\mfix{{if}\,}{#1}{\,{then}\,}{#2}{\,{else}\,}{#3}}
\renewcommand{\xto}[1]{
\newdimen\stringwidth
\setbox0=\hbox{\tiny$#1$}
\stringwidth=\wd0
  \mathrel{\raisebox{-.1ex}{\kern3pt\ensuremath{\mathrel{\tikz{\draw [-stealth,line width=0.4] (0.6ex,.1ex) -- node[font=\tiny, midway,above=-.3ex,xshift=-.2ex] {$#1$\kern2pt\erule} ++([xshift=2ex]\stringwidth,0) ;}
  }}\kern3pt}}
}
\renewcommand{\xfrom}[1]{
\newdimen\stringwidth
\setbox0=\hbox{\tiny$#1$}
\stringwidth=\wd0
  \mathrel{\raisebox{-.1ex}{\kern3pt\ensuremath{\mathrel{\tikz{\draw [stealth-,line width=0.4] (0.6ex,.1ex) -- node[font=\tiny, midway,above=-.3ex,xshift=.2ex] {$#1$\kern2pt\erule} ++([xshift=2ex]\stringwidth,0) ;}
  }}\kern3pt}}
}
\newcommand{\klstar}{\text{\kreuz}}  				%
\newcommand{\into}{\rightsquigarrow}
\newcommand\sepimp{\mathrel{-\mkern-6mu\star}}
\newcommand{\geoF}{\mathfrak{f}}
\newcommand{\geoU}{\mathfrak{u}}
\def\d@r#1#2{\m@th\hfil\rotatebox[origin=c]{-90}{$#1#2$}\mkern3mu}
\renewcommand{\dar}{\mathbin{\mathpalette\d@r\to}}
\newcommand{\ucl}{\oname{cl}}   %
\newcommand{\pro}{\mathrm{prop}}  
\newcommand{\Poset}{\catname{Poset}}
\newcommand{\lrule}[3]{\textbf{(#1)}~~\frac{#2}{#3}}
\newcommand{\infrule}[2]{\frac{#1}{#2}}
\newcommand{\anonrule}[3]{\infrule{#2}{#3}}
\renewcommand{\case}[3]{\mfix{case}{\mathbin{}#1}{of}{#2}{\kern-1pt;}{\mathbin{}#3}}
\DeclareRobustCommand{\pcase}[2]{\mfix{case\,}{#1}{\,of\,}{#2}}
\renewcommand{\ul}[1]{\mkern2mu\underline{\mkern-2mu #1\mkern-2mu}\mkern2mu } %
\spnewtheorem{thm}[theorem]{Theorem}{\bfseries}{\itshape}
\spnewtheorem{cor}[theorem]{Corollary}{\bfseries}{\itshape}
\spnewtheorem{cnj}[theorem]{Conjecture}{\bfseries}{\itshape}
\spnewtheorem{lem}[theorem]{Lemma}{\bfseries}{\itshape}
\spnewtheorem{lemdefn}[theorem]{Lemma and Definition}{\bfseries}{\itshape}
\spnewtheorem{prop}[theorem]{Proposition}{\bfseries}{\itshape}
\spnewtheorem{defn}[theorem]{Definition}{\bfseries}{\upshape}
\spnewtheorem{rem}[theorem]{Remark}{\bfseries}{\upshape}
\spnewtheorem{notation}[theorem]{Notation}{\bfseries}{\upshape}
\spnewtheorem{expl}[theorem]{Example}{\bfseries}{\upshape}
\spnewtheorem{thmdefn}[theorem]{Theorem and Definition}{\bfseries}{\itshape}
\spnewtheorem{propdefn}[theorem]{Proposition and Definition}{\bfseries}{\itshape}
\spnewtheorem{assumption}[theorem]{Assumption}{\bfseries}{\upshape}
\spnewtheorem{algorithm}[theorem]{Algorithm}{\bfseries}{\upshape}
 \renewenvironment{theorem}{\begin{thm}}{\end{thm}}
 \renewenvironment{lemma}{\begin{lem}}{\end{lem}}
 \renewenvironment{proposition}{\begin{prop}}{\end{prop}}
 \renewenvironment{definition}{\begin{defn}}{\end{defn}}
 \renewenvironment{remark}{\begin{rem}}{\end{rem}}
 \renewenvironment{example}{\begin{expl}}{\end{expl}}
   \def\@citecolor{blue}%
   \def\@urlcolor{blue}%
   \def\@linkcolor{blue}%
\def\orcidID#1{\smash{\href{http://orcid.org/#1}{\protect\raisebox{-1.25pt}{\protect\includegraphics{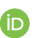}}}}}
\renewcommand{\G}{\Gamma}
\renewcommand{\c}{\colon}
\renewcommand{\l}{\lambda}
\begin{document}

\title{Local Local Reasoning:\\ A BI-Hyperdoctrine for Full Ground Store\thanks{Sergey Goncharov acknowledges support by German Research Foundation (DFG) under project GO~2161/1-2.}}  

\author{Miriam Polzer\and Sergey Goncharov\textsuperscript{(\Letter)}\orcidID{0000-0001-6924-8766} }

\institute{FAU Erlangen-N\"urnberg, Erlangen, Germany\\
\email{\{miriam.polzer,sergey.goncharov\}@fau.de}}

\authorrunning{M.~Polzer, S.~Goncharov}

\maketitle

\begin{abstract}
Modelling and reasoning about dynamic memory allocation is one of the well-established
strands of theoretical computer science, which is particularly well-known as a source
of notorious challenges in semantics, reasoning, and proof theory. We capitalize 
on recent progress on categorical semantics of \emph{full ground store}, 
in terms of a \emph{full ground store monad}, to build a corresponding  
semantics of a higher order logic over the corresponding programs. Our main result
is a construction of an \emph{(intuitionistic) BI-hyperdoctrine}, which is arguably 
the semantic core of higher order logic over local store. Although we have made an 
extensive use of the existing generic tools, certain principled changes
had to be made to enable the desired construction: while the original monad works
over total heaps (to disable dangling pointers), our version involves partial 
heaps (\emph{heaplets}) to enable compositional reasoning using separating 
conjunction. Another remarkable feature of our construction is that, in contrast 
to the existing generic approaches, our BI-algebra does not directly stem from
an internal categorical partial commutative~monoid. 
\end{abstract}

\section{Introduction}\label{sec:intro}
Modelling and reasoning about dynamic memory allocation is a
sophisticated subject in denotational semantics with a long history
(e.g.~\cite{Reynolds97,Oles82,OHearnTennent92,PlotkinPower02}).
Denotational models for dynamic references vary over a large spectrum, and in
fact, in two dimensions: depending on the expressivity of the features being
modelled (\emph{ground store} -- \emph{full ground store} -- \emph{higher order
store}) and depending on the amount of \emph{intensional} information included
in the model (\emph{intensional} -- \emph{extensional}), using the terminology
of Abramsky~\cite{Abramsky14}.

Recently, Kammar et
al~\cite{KammarLevyEtAl17} constructed an extensional monad-based denotational model of the 
\emph{full ground store}, i.e.\ permitting not only memory allocation 
for discrete values, but also storing mutually linked data. The key idea of the latter work is an
explicit delineation between the target presheaf category $[\BW,\Set]$ on which the full ground store
monad acts, and an auxiliary presheaf category $[\BE,\Set]$ of \emph{initializations}, naturally hosting a  
\emph{heap functor} $H$. The latter category also hosts a \emph{hiding monad} $P$,
which can be loosely understood as a semantic mechanism for idealized garbage 
collection. The full ground store monad is then assembled according to the scheme given in Fig.~\ref{fig:main-dia}. As a slogan:
the \emph{local} store monad is a \emph{global} store monad transform of the hiding 
monad sandwiched within a geometric morphism.

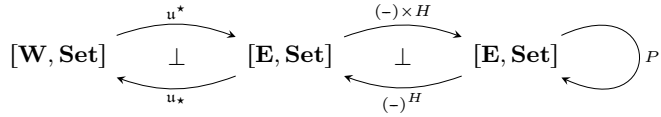
\begin{figure}[t]
\begin{center}
	\begin{tikzcd}[column sep=huge]
		{[\BW, \Set]}
				\arrow[r, bend left=20, "\geoU^\star"]
				\arrow[r, phantom, "\bot"] &
			{[\BE, \Set]}
				\arrow[l, bend left=20, "\geoU_\star "]
				\arrow[r, bend left=20, "(\argument)\times H"]
				\arrow[r, phantom, "\bot"]&
			{[\BE, \Set]}
				\arrow["\Hid", out=25, in=-25, distance=1.5cm, looseness=0.3, loop]
				\arrow[l, bend left=20, "(\argument)^{H}"]
	\end{tikzcd}
\end{center}
	\caption{Construction of the full ground store monad.}
	\label{fig:main-dia}
\end{figure}

The fundamental reason, why extensional models of local store involve intricate 
constructions, such as presheaf categories is that the desirable program equalities
include 
\begin{flalign*}
\qquad&\mbind{\ell\ass\nnew v;\, \ell'\ass\nnew w}{p} ~=~ \mbind{\ell'\ass\nnew w;\, \ell\ass\nnew v}{p} && (\ell\nequiv \ell')\\
\qquad&\mbind{\ell\ass\nnew v}{\ret\star}          ~=~ \ret\star&&\\
\qquad&\mbind{\ell\ass\nnew v}{~(\ite{\ell = \ell'}{\ttrue}{\ffalse})} ~=~ \ffalse &&(\ell\nequiv\ell')
\end{flalign*}
and these jointly do not have set-based models
over countably infinite sets of locations~\cite[Proposition~6]{Staton13}. 
The first equation expresses irrelevance of the memory allocation order,
the second expresses the fact that an unused cell is always garbage collected
and the third guarantees that allocation of a fresh cell does indeed produce a cell 
different from any other. The aforementioned construction validates these
equations and enjoys further pleasant properties, e.g.\ soundness and adequacy 
of a higher order language with user defined storable data structures. 

The goal of our present work is to complement the semantics of programs over 
local store with a corresponding principled semantics of \emph{higher order logic}. 
In order to be able to specify and reason modularly about local store,
more specifically, we seek a model of higher order \emph{separation logic}~\cite{Reynolds02}.
It has been convincingly argued in previous work on categorical models of separation
logic~\cite{BieringBirkedalEtAl07,BizjakBirkedal18} that a core abstraction device
unifying such models is a notion of \emph{BI-hyperdoctrine}, extending
Lawvere's hyperdoctrines~\cite{Lawvere69}, which provide a corresponding abstraction for 
the first order logic. BI-hyperdoctrines are standardly built on \emph{BI-algebras},
which are also standardly constructed from \emph{partial commutative monoids (pcm)}, or more generally
from \emph{resource algebras} as in the \textsc{Iris} state of the art 
advanced framework for higher order separation logic~\cite{JungKrebbersEtAl18}. 
One subtlety our construction reveals is that it does not seem to be possible 
to obtain a \emph{BI-algebra} following general recipes from a pcm (or a resource
algebra), due to the inherent local nature of the storage model, which does not allow
one to canonically map store contents into a global address space.    
Another subtlety is that the devised logic is 
necessarily non-classical, which is intuitively explained by the fact that the 
semantics of programs must be suitably irrelevant to garbage collection, and in our 
case this follows from entirely formal considerations (Yoneda lemma). It is also 
worth mentioning that for this reason the logical theory that we obtain is incompatible 
with the standard (classical or intuitionistic) predicate logic. E.g.\ the formula
$\exists\ell.\,\ell\ito 5$ is always valid in our setup, which expresses the fact 
that a heap \emph{potentially} contains a cell equal to $5$ (which need not be reachable) -- this 
is in accord with the second equation above -- and correspondingly, the formula 
$\forall\ell.\,\neg (\ell\ito 5)$ is unsatisfiable. This and other similar phenomena
are explained by the fact that our semantics essentially behaves as a Kripke semantics 
along two orthogonal axes: (proof relevant)\emph{cell allocation} and 
(proof irrelevant)\emph{cell accessibility}. While the latter captures a \emph{programming}
view of locality, the latter captures a \emph{reasoning} view of locality, and as we 
argue (e.g.\ Example~\ref{expl:impl}), they are generally mutually irreducible.

\paragraph{Related previous work} As we already pointed out, we take inspiration from the 
recent categorical approaches to modelling program semantics for dynamic 
references~\cite{KammarLevyEtAl17}, as well as from higher order separation logic 
semantic frameworks~\cite{BieringBirkedalEtAl07}. Conceptually, the problem of combining 
separation logic with garbage collection mechanisms goes back to Reynolds~\cite{Reynolds00},
who indicated that standard semantics of separation logic is not compatible 
with garbage collection, as we also reinforce with our construction. Calcagno et 
al~\cite{CalcagnoOHearnEtAl03} addressed this issue by providing two models. The 
first model is based on total heaps, featuring the aforementioned effect of ``potential'' 
allocations. To cope with heap separation the authors introduced another model based on partial 
heaps, in which this effect again disappears, and has to be compensated by syntactic 
restrictions on the assertion language.

\paragraph{Plan of the paper} After preliminaries (Section~\ref{sec:prelim}), 
we give a modified presentation of a call-by-value language with full ground references 
and the full ground store monad (Sections~\ref{sec:lang} and~\ref{sec:fgs}) following the 
lines of~\cite{KammarLevyEtAl17}. In Section~\ref{sec:bi} we provide some general 
results for constructing semantics of higher order separation logics. The main 
development starts in Section~\ref{sec:sep} where we provide a construction of a 
BI-hyperdoctrine. We show some example illustrating our semantics in 
Section~\ref{sec:exam} and draw conclusions in Section~\ref{sec:conc}.

This is an extended version of our conference paper~\cite{PolzerGoncharov20}. All 
omitted proofs are collected in appendix.

\section{Preliminaries}\label{sec:prelim}
We assume basic familiarity with the elementary concepts of category
theory~\cite{MacLane71,Johnstone02}, all the way up to monads,  toposes, (co)ends and Kan extensions.
We denote by $|\BC|$ the class of objects of a category~$\BC$;
we often suppress subscripts of natural transformation components if no confusion
arises.

In this paper, we work with special kinds of \emph{covariant presheaf toposes},
i.e.\ functor categories of the form $[\BC,\Set]$, where $\BC$ is small and satisfies
the following \emph{amalgamation condition}: for any $f\c a\to b$ and $g\c a\to c$ there
exist $g'\c b\to d$ and $f'\c c\to d$ such that $f'\comp g = g'\comp f$. Such toposes
are particularly well-behaved, and fall into the more general class of \emph{De~Morgan}
toposes~\cite{Johnstone79}. As presheaf toposes, De~Morgan toposes are precisely
characterized by the condition that ${2=1+1}$ is a retract of the subobject
classifier $\Omega$. More specifically, our~$\BC$ support further useful
structure, in particular, a strict monoidal tensor $\oplus$ with jointly
epic injections $\inj_1$, $\inj_2,$ forming an \emph{independent coproduct} structure,
as recently identified by Simpson~\cite{Simpson18}.
Moreover, if the coslices $c\,\dar\BC$ support independent products, we obtain
\emph{local independent coproducts} in $\BC$, which are essentially cospans
$c_1\to c_1\oplus_c c_2\from c_2$ in $c\,\dar\BC$. Given
$\rho_1\c c\to c_1$ and $\rho_2\c c\to c_2$, we thus always have $\rho_1\bullet\rho_2\c c_1\to c_1\oplus_c c_2$
and $\rho_2\bullet\rho_1\c S_2\to c_1\oplus_c c_2$, such that $(\rho_1\bullet\rho_2)\comp\rho_1=(\rho_2\bullet\rho_1)\comp\rho_2$,
and as a consequence, $[\BC,\Set]$ is a De~Morgan topos. Intuitively, the category
$\BC$ represents worlds in the sense of \emph{possible world semantics}~\cite{Oles82,Reynolds97}.
A morphism $\rho\c a\to b$ witnesses the fact that $b$ is a \emph{future} world w.r.t.\
$a$. Existence of local independent products intuitively ensures that diverse
futures of a given world can eventually be unified in a canonical way.

Every functor $\geoF\c\BC\to\BD$ induces a functor $\geoF^\star\c[\BD,\Set]\to [\BC,\Set]$
by precomposition with $\geoF$. By general considerations, there is a right
adjoint $\geoF_\star\c[\BC,\Set]\to [\BD,\Set]$, computed as $\oname{Ran}_\geoF$,
the right Kan extension along~$\geoF$. This renders the adjunction $\geoF^\star\dashv \geoF_\star$,
as a \emph{geometric morphism}, in particular, $\geoF^\star$ preserves all
finite limits.

\begin{figure}[t!]
\begin{center}%
{\parbox{\textwidth}{%
\small%
\begin{subfigure}{\textwidth}
\begin{flalign*}
 \lrule{var}{%
		  x\c A\text{~~~in~~~}\G
	  }{%
		 \G\vctx x\c A
	  }
&&
\lrule{inl}{%
		\G\vctx v\c A 
	}{%
		\G\vctx\inl v\c A+B
	}
&&
\lrule{inr}{%
		\G\vctx w\c B 
	}{%
		\G\vctx\inr w\c A+B
	}
\end{flalign*}\\[-5ex]
\begin{flalign*}
\lrule{unit}{}{\G\vctx\star\c 1}
&&
\lrule{pair}{%
		\G\vctx v\c A\quad\G\vctx w\c B 
	}{%
		\G\vctx (v,w)\c A\times B
	}
&&
\lrule{abs}{%
		\G, x\c A\cctx p\c B
	}{%
		\G\vctx\l x.\, p\c A\to B
	}
\end{flalign*}
\medskip
\end{subfigure}

\dotfill

\medskip
\begin{subfigure}{\textwidth}
\begin{flalign*}
 \lrule{c-case}{%
 \G\vctx v\c A+B\qquad
 \G,x\c A\cctx p\c C\qquad\G,y\c B\cctx q\c C
  }{%
 \G\cctx\case{v}{\inl x\mto p}{
   \inr y\mto q}\c C
  }
\end{flalign*}\\[-4ex]
\begin{flalign*}
 \lrule{p-case}{%
 \G\vctx v\c A\times B\quad
 \G,x\c A,y\c B\cctx q\c C
  }{%
 \G\cctx\pcase{v}{(x,y)\mto q}\c C
  }
&&
\lrule{init}{\G\vctx v\c 0}{\G\cctx\oname{init} v\c A}
\end{flalign*}\\[-4ex]
\begin{flalign*}
&&
\lrule{ret}{%
		\G\vctx v\c A 
	}{%
		\G\cctx\ret v\c A
	}
&&
\lrule{let}{%
\G\cctx p\c A\qquad \G,x\c A\cctx q\c B   
}
{
\G\cctx\mbind{x\ass p}{q}\c B 
}
&&
\end{flalign*}\\[-4ex]
\begin{flalign*}
 \lrule{app}{%
 \G\vctx v\c A\to B\qquad\G\vctx w\c A
  }{%
 \G\cctx v w\c B
  }
\end{flalign*}\\[-4ex]
\begin{flalign*}
\lrule{put}{%
  \G\vctx \ell\c\Ref_S\qquad\G\vctx v\c\CType(S)
  }{%
  \G\cctx \ell\ass v\c 1 
  }
&&
\lrule{get}{%
  \G\vctx \ell\c\Ref_S
  }{%
  \G\cctx\bang \ell\c \CType(S)
  }
\end{flalign*}\\[-5ex]
\begin{flalign*}
&&
\lrule{new}{%
\begin{array}{r@{}l}
  \G, \ell_1\c\Ref_{S_1},\ldots, \ell_n\c\Ref_{S_n} &\vctx v_1\c\CType(S_1)\\
    &\vdots\\
  \G, \ell_1\c\Ref_{S_1},\ldots, \ell_n\c\Ref_{S_n} &\vctx v_n\c\CType(S_n)\\[1ex]
  \G, \ell_1\c\Ref_{S_1},\ldots, \ell_n\c\Ref_{S_n} &\cctx p\c A
\end{array}
  }{%
  \G\cctx\letref{\ell_1\ass v_1,\ldots,\ell_n\ass v_n}{p}\c A 
  }
&&
\end{flalign*}
\end{subfigure}
}}
\end{center}
\caption{Term formation rules for values (top) and computations (bottom).}
\label{fig:prog.lang}
\end{figure}

\section{A Call-by-Value Language with Local References}\label{sec:lang}
To set the context, we consider the following higher order language of programs with local references
by slightly adapting the language of Kammar et al~\cite{KammarLevyEtAl17} to match 
with the \emph{fine-grain call-by-value} perspective~\cite{LevyPowerEtAl02}. This allows us to formally 
distinguish \emph{pure} and \emph{effectful} judgements.
First, we postulate a collection of \emph{cell sorts} $\CS$ and then introduce 
further types with the grammar:
\begin{flalign}\label{eq:gramma}
&&A,B\ldots\Coloneqq  0\mid 1\mid A\times B\mid A+B\mid A\to B\mid\Ref_S && (S\in\CS) &&
\end{flalign}
A type is \emph{first order} if it does not involve the function type
constructors $A\to B$. We then fix a map~$\CType$, assigning a first order type
to every given sort from~$\CS$. The corresponding term formation rules over
these data are given in Fig.~\ref{fig:prog.lang} where the $\mathsf{v}$-indices 
at the turnstiles indicate \emph{values} and the $\mathsf{c}$-indices indicate 
\emph{computations}. The only non-standard rules~\textbf{(put)},~\textbf{(get)}
and~\textbf{(new)} are expected to handle references in the expected way: \textbf{(put)}
updates the cell referenced by $\ell$ with a value $v$,~\textbf{(get)} returns a value
under the reference $\ell$ and~\textbf{(new)} simultaneously allocates new cells filled
with the values $v_1,\ldots,v_n$ and makes them accessible in $p$ under the corresponding 
references $\ell_1,\ldots,\ell_n$. As a fine-grain call-by-value language, the language 
in Fig.~\ref{fig:prog.lang} can be interpreted in a standard way over a category with
a strong monad, as long as the latter can provide a semantics to the rules~\textbf{(put)},~\textbf{(get)} 
and~\textbf{(new)}. We present this monad in detail in the next section.
\begin{example}[Doubly Linked Lists]
Let $\CS = \{\mathit{DLList}\}$ and let
$\CType(\mathit{DLList}) = 2\times(\Ref_{\mathit{DLList}}+1)
\times(\Ref_{\mathit{DLList}}+1)$, which indicates that a list
element is a Boolean (i.e.\ an element of $2=1+1$) and two pointers (forwards and 
backwards) to list elements, each of which may be missing. Note that we thus avoid
empty lists and null-pointers: every list contains at least one element, and the 
elements added by $+1$ cannot be dereferenced. This example provides 
a suitable illustration for the $\oname{letref}$ construct. E.g.\ the program
\begin{align*}
\letref{\ell_1\ass (0,\inr\star,\inl \ell_2);\, \ell_2\ass (1,\inl \ell_1,\inr\star)}{\ret \ell_1}
\end{align*}
simultaneously creates two list elements pointing to each other and returns 
a reference to the first one.
\end{example}

\section{Full Ground Store in the Abstract}\label{sec:fgs}
We proceed to present the full ground store 
monad by slightly tweaking the original construction~\cite{KammarLevyEtAl17} 
towards higher generality. The main distinction is that we do not recur to any specific program syntax
and proceed in a completely axiomatic manner in terms of functors
and natural transformations. This mainly serves the purpose of developing our 
logic in Section~\ref{sec:sep}, which will require a coherent upgrade of the 
present model. Besides this, in this section we demonstrate flexibility of 
our formulation by showing that it also instantiates to the model previously 
developed by Plotkin and Power~\cite{PlotkinPower02} (Theorem~\ref{thm:simp-store}). 

Our present formalization is parametric in three aspects: the
set of \emph{sorts}~$\CS$, the set of \emph{locations}~$\CL$ and a map $\oname{range}$,
introduced below for interpreting~$\CS$. We assume that~$\CL$ is canonically isomorphic
to the set of natural numbers $\nat$ under $\hash\c\CL\cong\BBN$. Using this
isomorphism, we commonly use the ``shift of $\ell\in\CL$ by $n\in\nat$'', defined as follows:
$\ell + n = \hash^\mone(\hash\ell + n)$.

\paragraph{Heap layouts and abstract heap(let)s}
Let $\BW$ be a category of \emph{(heap) layouts} and injections defined as follows:
an object $w \in |\BW|$ is a finitely supported partial function
$w\c\CL\rightharpoonup_{\mathit{fin}} \CS$ and a morphism $\rho\c w\to w'$ is a type preserving injection $\rho\c\dom
w \to\dom w'$, i.e. for all $l \in\img w$, $w(\ell) = w'(\rho(\ell))$.
We will equivalently view $w$ as a left-unique subset of $\CL \times \CS$ and hence
use the notation $(\ell\c S)\in w$ as an equivalent of $w(\ell) = S$. 
Injections $\rho\c w\to w'$
with the property that $w(\ell\c S) = \ell\c S$ for all $(\ell\c S)\in w$ we
also call \emph{inclusions} and write $w\subseteq w'$ instead of
$\rho\c w\to w'$, for obviously there is at most one inclusion from $w$
to $w'$. 
If $w\subseteq w'$ then we call $w$ a \emph{sublayout} of $w'$.
We next postulate
\begin{align*}
	\oname{range}\c\CS \to [\BW, \Set].
\end{align*}
The idea is, given a sort $S \in\CS$ and a heap layout $w \in |\BW|$,
$\oname{range}(S)(w)$ yields the set of possible values for cells of type $S$ over $w$.
\begin{example}\label{ex:main}
Assuming the grammar~\eqref{eq:gramma} and a corresponding 
map $\CType$, a generic type~$A$ is interpreted as a presheaf $\ul{A}\c\BW\to\Set$, 
by obvious structural induction, e.g.\ $\ul{A\times B} = \ul{A}\times\ul{B}$, except 
for the clause for $\Ref$, for which $(\ul{\Ref_S}) w = w^{\mone}(S)$. 
This yields the following definition for $\oname{range}$: $\oname{range}(S)= \ul{\CType(S)}$~\cite{KammarLevyEtAl17}.
\end{example}
\begin{example}[Simple Store]\label{def:gs}
By taking $\CS = \{ \star \}$, $\CL=\nat$ (natural numbers) and $\oname{range} (\star) (w) = \CV$ 
where $\CV$ is a fixed set of \emph{values}, we essentially obtain the model 
previously explored by Plotkin and Power~\cite{PlotkinPower02}. 
We reserve the term \emph{simple store} for this instance.
Simple store is 
a ground store (since $\oname{range}$ is a constant functor), moreover this store 
is untyped (since $\CS = \{\star\}$) and the locations $\CL$ are precisely the 
natural numbers.
\end{example}
A \emph{heap} over a layout $w$ assigns to each $(\ell\c S)\in w$ an element from
$\oname{range}(S)(w)$. More generally, a $\emph{heaplet}$ over  $w$ assigns
an element from $\oname{range}(S)(w)$ to \emph{some}, possibly not all,
$(\ell\c S)\in w$. We thus define the following \emph{heaplet bi-functor} $\Hplb\c\BW^{\op} \times \BW \to\Set$:
\begin{align*}
	\Hplb(w^-, w^+) = \prod_{(\ell\c S)\in w^-} \oname{range}(S)(w^+)
\end{align*}
and identify the elements of $\Hplb(w^-, w^+)$ with heaplets and the elements of
$\Hplb(w, w)$ with heaps. Of course, we intend to use $\Hplb(w^-, w^+)$ for such
$w^-$ and~$w^+$ that the former is a sublayout of the latter.
The contravariant action of $H$ is given by projection and the covariant
action is induced by functoriality of $\oname{range}(S)$.
\begin{align*}
\pr_{(\ell\c S)}(\Hplb(w^-, \rho_1\c w^+_1\to w^+_2)(\eta\in\Hplb(w^-,w_1^+)))  &\,= \oname{range}(S)(\rho_1) (\pr_{(\ell\c S)} \eta)\\*
\pr_{(\ell\c S)}(\Hplb(\rho_2\c w^-_2\to w^-_1, w^+)(\eta\in\Hplb(w_1^-, w^+)))    &\,= \pr_{\rho_2(\ell\c S)} \eta 
\end{align*}
The heaplet functor preserves independent coproduct, we overload the $\oplus$
operation with the isomorphism $\oplus\c \Hplb(w_1, w)\times \Hplb(w_2, w)\iso\Hplb(w_1 \oplus w_2, w)$.

\begin{example}\label{ex:ref-N}
For illustration, consider the following simplistic example. Let $\CS = \{
\mathit{Int}\comma\Ref_{\mathit{Int}}\comma\Ref_{\Ref_{\mathit{Int}}}\comma \dots \}$
where $\mathit{Int}$ is meant to capture the ground type of integers and recursively,
$\Ref_{A}$ is the type of pointers to $A$. Then, we put
\begin{align*}
	&\oname{range} (\mathit{Int})(w) = \BBZ,&
	&\oname{range} (\Ref_{S})(w) = w^{\mone}(S) = \{\ell\in\dom w \mid
	 w(\ell) = S \}.
\end{align*}
For a heaplet example, consider
		$w^- = \{\ell_1\c\mathit{Int}, \ell_2\c\Ref_{\mathit{Int}} \}$ and
		$w^+ = \{\ell_1\c\mathit{Int}, \ell_2\c\Ref_{\mathit{Int}}, \ell_3\c\mathit{Int} \}$.
Hence,~$w^-$ is a sublayout of $w^+$. By viewing the elements of $\Hplb(w^-, w^+)$ as
lists of assignments on $w^-$, we can define $s_1, s_2 \in\Hplb(w^-, w^+)$ as follows:
		$s_1 = [\ell_1\c\mathit{Int} \mto 5\comma \ell_2\c\oname{Ref_{int}} \mto \ell_1]$, $s_2 = [\ell_1\c\mathit{Int} \mto 3\comma \ell_2\c\oname{Ref_{int}} \mto \ell_3]$.
The heaplets	$s_1$ and $s_2$ can be graphically presented as follows:
	\begin{center}
		\begin{tikzpicture}[scale=.8]
			\draw (0, 0) rectangle (1, 0.5) node[pos=0.5] {$5$};
			\draw (0, 0.5) rectangle (1, 1);
				\draw[->, thick, sigma line] (0.5,0.75) node {$\bullet$} to (3.5,0.25);
			\node at (0.5, -0.5) {$w^-$};

			\draw (3, 0) rectangle (4, 0.5);
			\draw (3, 0.5) rectangle (4, 1);
			\draw (3, 1) rectangle (4, 1.5);
			\node at (3.5, -0.5) {$w^+$};
		\end{tikzpicture}
		\hspace{4em}
		\begin{tikzpicture}[scale=.8]
			\draw (0, 0) rectangle (1, 0.5) node[pos=0.5] {$3$};
			\draw (0, 0.5) rectangle (1, 1);
				\draw[->, thick, sigma line] (0.5,0.75) node {$\bullet$} to (3.5,1.25);
			\node at (0.5, -0.5) {$w^-$};

			\draw (3, 0) rectangle (4, 0.5);
			\draw (3, 0.5) rectangle (4, 1);
			\draw (3, 1) rectangle (4, 1.5);
			\node at (3.5, -0.5) {$w^+$};
		\end{tikzpicture}
	\end{center}
\end{example}
The category $\BW$ supports (local) independent coproducts described in
Section~\ref{sec:prelim}. These are constructed as follows. For $w,w'\in |\BC|$,
$w\oplus w' = w\cup \{\ell+n+1\c S\mid (\ell,c)\in w'\}$ with $n$ being the largest
index for which $w$ is defined on $\hash^\mone(n)$. This yields a strict
monoidal structure $\oplus\c\BW \times \BW \to\BW$. Intuitively, $w_1 \oplus w_2$
is a canonical disjoint sum of $w_1$ and $w_2$, but note that $\oplus$ is not
a coproduct in~$\BW$ (e.g.\ there is no $\nabla\c 1 \oplus 1 \to 1$, for $\BW$
only contains injections).
For every $\rho\c w _1 \to w_2$, there is a canonical complement $\rho^\complement\c w_2 \ominus \rho\to w_2$ whose domain
${{w_2} \ominus \rho} = {w_2} \smin\img\rho$ consists of all such cells $(\ell\c S)\in w_2$ that~$\rho$ misses.
Given two morphisms $\rho_1 \c w\to w_1$ and $\rho_2 \c w\to w_2$, we define the
local independent coproduct $w_1\oplus_w w_2$ as the layout consisting of the
locations from~$w$, and the ones from $w_1$ and $w_2$ which are neither in the
image of $\rho_1$ nor in the image of~$\rho_2$:
\begin{align*}
	\rho_1 \oplus_w \rho_2 = w \oplus (w_1 \ominus \rho_1)\oplus (w_2 \ominus \rho_2).
\end{align*}
There are morphisms $w_1 \xto{\rho_1\bullet\rho_2} \rho_1 \oplus_w \rho_2$ and
$w_2 \xto{\rho_2\bullet\rho_1} \rho_1 \oplus_w \rho_2$ such that

\begin{center}
	\begin{tikzcd}[column sep=huge]
		w \rar{\rho_2} \dar[swap]{\rho_1} & w_2 \dar{\rho_2\bullet\rho_1}\\
		w_1 \rar{\rho_1\bullet\rho_2} & \rho_1 \oplus_w \rho_2
	\end{tikzcd}
\end{center}
Fig.~\ref{fig:ind-coproducts} illustrates this definition with a concrete
example.

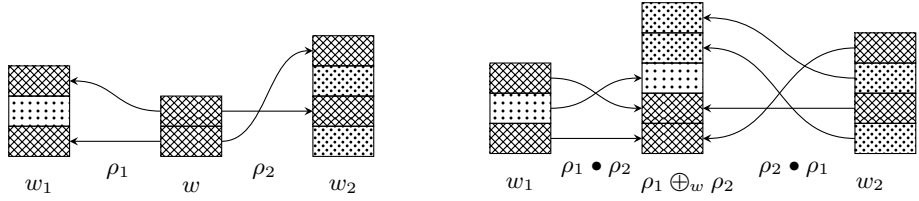
\begin{figure}[t]
	\begin{center}
	\begin{tikzpicture}[scale=.8]
		\draw [pattern=crosshatch] (0, 0) rectangle (1, 0.5);
		\draw [pattern=dots] (0, 0.5) rectangle (1, 1);
		\draw [pattern=crosshatch] (0, 1) rectangle (1, 1.5);
		\node at (0.5, -0.5) {$w_1$};

		\draw[->, sigma line] (2.5, 0.25) to (1,0.25);
		\draw[->, sigma line] (2.5, 0.75) to (1,1.25);
		\node at (1.8, -0.25) {$\rho_1$};

		\draw [pattern=crosshatch] (2.5, 0) rectangle (3.5, 0.5);
		\draw [pattern=crosshatch] (2.5, 0.5) rectangle (3.5, 1);
		\node at (3, -0.5) {$w$};

		\draw[->, sigma line] (3.5, 0.25) to (5, 1.75);
		\draw[->, sigma line] (3.5, 0.75) to (5, 0.75);
		\node at (4.2, -0.25) {$\rho_2$};

		\draw [pattern=crosshatch dots] (5, 0) rectangle (6, 0.5);
		\draw [pattern=crosshatch] (5, 0.5) rectangle (6, 1);
		\draw [pattern=crosshatch dots] (5, 1) rectangle (6, 1.5);
		\draw [pattern=crosshatch] (5, 1.5) rectangle (6, 2);
		\node at (5.5, -0.5) {$w_2$};
	\end{tikzpicture}
  \hspace{4em}
	\begin{tikzpicture}[scale=.8]
		\draw [pattern=crosshatch] (0, 0)   rectangle (1, .5);
		\draw [pattern=dots]      (0, 0.5) rectangle (1, 1);
		\draw [pattern=crosshatch] (0, 1)   rectangle (1, 1.5);
		\node at (0.5, -0.5) {$w_1$};

		\draw[->, sigma line] (1, 0.25) to (2.5, 0.25);
		\draw[->, sigma line] (1, 0.75) to (2.5, 1.25);
		\draw[->, sigma line] (1, 1.25) to (2.5, 0.75);
		\node at (1.75, -0.25) {$\rho_1\bullet\rho_2$};

		\draw [pattern=crosshatch] (2.5, 0)         rectangle (3.5, .5);
		\draw [pattern=crosshatch] (2.5, 0.5)       rectangle (3.5, 1);
		\draw [pattern=dots] (2.5, 1)               rectangle (3.5, 1.5);
		\draw [pattern=crosshatch dots] (2.5, 1.5)  rectangle (3.5, 2);
		\draw [pattern=crosshatch dots] (2.5, 2)    rectangle (3.5, 2.5);
		\node at (3.25, -0.5) {$\rho_1 \oplus_w \rho_2$};

		\draw[->, sigma line] (6, 0.25) to (3.5, 1.75);
		\draw[->, sigma line] (6, 0.75) to (3.5, 0.75);
		\draw[->, sigma line] (6, 1.25) to (3.5, 2.25);
		\draw[->, sigma line] (6, 1.75) to (3.5, 0.25);
		\node at (5, -0.25) {$\rho_2\bullet\rho_1$};

		\draw [pattern=crosshatch dots] (6, 0) rectangle (7, 0.5);
		\draw [pattern=crosshatch] (6, 0.5) rectangle (7, 1);
		\draw [pattern=crosshatch dots] (6, 1) rectangle (7, 1.5);
		\draw [pattern=crosshatch] (6, 1.5) rectangle (7, 2);
		\node at (6.25, -0.5) {$w_2$};
	\end{tikzpicture}
	\caption{Local independent coproduct}
	\label{fig:ind-coproducts}
	\end{center}
\end{figure}

\paragraph{Initialization and hiding}
Note that in the simple store model (Definition~\ref{def:gs}),~$\CH$ is equivalently
a contravariant functor $H\c\BW^{\op} \to\Set$ with $H w = \CV^w$, hence~$\CH$ can be placed
e.g.\ in $[\BW^\op,\Set]$. In general,~$\CH$ is mix-variant, which calls for
a more ingenious category where $H$ could be placed. Designing such category
is indeed the key insight of~\cite{KammarLevyEtAl17}. Closely following this work, 
we introduce a category~$\BE$, whose objects are the same as
those of $\BW$, and the morphisms $\epsilon\in\BE(w, w')$,
called \emph{initializations}, consist of an injection $\rho\c w\to w'$ and a heaplet $\eta \in\Hplb(w'\ominus \rho, w')$:
\begin{align*}
	\BE(w, w') = \sum_{\rho\c w\to w'} \Hplb(w'\ominus \rho, w').
\end{align*}
Recall that the morphism $\rho\c w\to w'$ represents a move from a world 
with~$w$ allocated memory cells a world with $w'$ allocated memory cells.
A morphism of $\BE$ is a morphism of $\BW$ augmented with a heaplet part~$\eta$, 
which provides the information how the newly allocated
cells in $w'\ominus\rho$ are filled. The heap functor now can be viewed as a representable
presheaf $\Hpl\c\BE \to\Set$ essentially because by definition, $\Hpl w = \Hplb(w,w)\cong \BE(\iobj, w)$.
Let us agree to use the notation $\eps\c w\into w'$ for morphisms in $\BE$
to avoid confusion with the morphisms in $\BW$.

Like $\BW$, $\BE$ supports local independent coproducts, but remarkably $\BE$ does
not have vanilla independent coproducts, due to the fact that $\BE$ does not
have an initial object. That is, in turn, because defining an inital morphism
would amount to defining canonical fresh values for newly allocated cells, but
those need not exist. The local independent coproducts of $\BW$ and $\BE$ agree in the sense that we
can \emph{promote} an initialization $(\rho_2, \eta)\c w\into w_2$ along an injection
$\rho_1 \c w \to w_1$ to obtain an initialization
$\rho_1\bullet (\rho_2, \eta)\c w_1\into\rho_1 \oplus_{w_1} \rho_2$.
This is accomplished by mapping the heaplet structure $\eta$
forward along $\rho_2\bullet\rho_1 \c w _2 \to\rho_1 \oplus_{w} \rho_2$.

\paragraph{Hiding monad}
Recall that the local store is supposed to be insensitive to garbage 
collection. This is captured by identifying the stores that agree on their 
observable parts using the \emph{hiding monad} $\Hid$ defined on $[\BE,\Set]$ as follows:
\begin{align}\label{eq:hid}
	(\Hid X) w = \int^{\rho\c w\to w'\in w \dar\geoU} X w'.
\end{align}
Here, $\geoU\c\BE \to\BW$ is the obvious heaplet discarding functor $\geoU(\rho, \eta) = \rho$.
Intuitively, in~\eqref{eq:hid}, we view the locations of $w$ as public and the ones of $w'\ominus \rho$
as private.
The integral sign denotes a \emph{coend}, which in this case is just an ordinary colimit on $\Set$
and is computed as a quotient of $\sum_{\rho\c w\to w'\in w \dar\geoU} X w'$
under the equivalence relation $\sim$ obtained as a symmetric-transitive closure 
of the relation
\begin{flalign*}
&&	(\rho\c w\to w_1, x \in X w_1)\preceq (\geoU\epsilon\comp \rho\c w\to w_2, (X \epsilon)(x)\in X w_2) && (\epsilon\c w_1\into w_2)
\end{flalign*}
Note that $\preceq$ is a preorder. Moreover, it enjoys the following \emph{diamond property}.
\begin{proposition}\label{prop:diamond}
If $(\rho,x)\preceq (\rho_1,x_1)$ and $(\rho,x)\preceq (\rho_2,x_2)$ then 
$(\rho_1,x_1)\preceq (\rho',x')$ and $(\rho_2,x_2)\preceq (\rho',x')$ for a suitable 
$(\rho',x')$. 

Hence $(\rho_1,x_1)\sim (\rho_2,x_2)$ iff $(\rho_1,x_1)\preceq (\rho,x)$, 
$(\rho_2,x_2)\preceq (\rho,x)$ for some~$(\rho,x)$.
\end{proposition}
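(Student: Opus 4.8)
The plan is to read the relation $\preceq$ off its definition and then realize the required common upper bound as a local independent coproduct in $\BE$. Concretely, $(\rho,x)\preceq(\bar\rho,\bar x)$, with $\rho\c w\to v$, $x\in Xv$ and $\bar\rho\c w\to\bar v$, $\bar x\in X\bar v$, means precisely that there is an initialization $\epsilon\c v\into\bar v$ with $\geoU\epsilon\comp\rho=\bar\rho$ and $(X\epsilon)(x)=\bar x$. So for the first claim I am given a base object $w$, a pair $(\rho\c w\to w_0,\,x\in Xw_0)$, and two witnessing initializations $\epsilon_1\c w_0\into w_1$, $\epsilon_2\c w_0\into w_2$ with $\geoU\epsilon_i\comp\rho=\rho_i$ and $(X\epsilon_i)(x)=x_i$. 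Write $\sigma_i=\geoU\epsilon_i\c w_0\to w_i$ for the underlying injections.

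First I would form the local independent coproduct of $\sigma_1$ and $\sigma_2$ in $\BW$, with apex $w_3=\sigma_1\oplus_{w_0}\sigma_2=w_0\oplus(w_1\ominus\sigma_1)\oplus(w_2\ominus\sigma_2)$ and commuting square $(\sigma_1\bullet\sigma_2)\comp\sigma_1=(\sigma_2\bullet\sigma_1)\comp\sigma_2$. Promoting $\epsilon_2$ along $\sigma_1$ and $\epsilon_1$ along $\sigma_2$ lifts the two legs to initializations $\sigma_1\bullet\epsilon_2\c w_1\into w_3$ and $\sigma_2\bullet\epsilon_1\c w_2\into w_3$, whose $\BW$-parts are exactly the two legs above. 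I then set
\[
\rho' = (\sigma_1\bullet\sigma_2)\comp\rho_1 = (\sigma_2\bullet\sigma_1)\comp\rho_2\c w\to w_3 ,
\]
the two expressions agreeing because $\rho_i=\sigma_i\comp\rho$ and the $\BW$-square commutes, and I put $x'=(X(\sigma_1\bullet\epsilon_2))(x_1)$. With these choices $\sigma_1\bullet\epsilon_2$ witnesses $(\rho_1,x_1)\preceq(\rho',x')$ and $\sigma_2\bullet\epsilon_1$ witnesses $(\rho_2,x_2)\preceq(\rho',x')$, provided that $x'$ also equals $(X(\sigma_2\bullet\epsilon_1))(x_2)$.

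The heart of the argument is therefore the equality $(X(\sigma_2\bullet\epsilon_1))(x_2)=(X(\sigma_1\bullet\epsilon_2))(x_1)$. Substituting $x_i=(X\epsilon_i)(x)$ and using functoriality of $X$, this reduces to showing that the two composite initializations agree in $\BE$:
\[
(\sigma_1\bullet\epsilon_2)\comp\epsilon_1 = (\sigma_2\bullet\epsilon_1)\comp\epsilon_2\c w_0\into w_3 .
\]
This is exactly the commutativity of the local independent coproduct square in $\BE$, i.e.\ the defining coherence of the local independent coproducts that $\BE$ carries and which agree with those of $\BW$. I expect this to be the main obstacle: unlike the $\BW$-square, here one must check that the \emph{heaplet} components of the two composites coincide, which comes down to verifying that the forward transports of the initialization data of $\epsilon_1,\epsilon_2$ along $\sigma_1\bullet\sigma_2$ and $\sigma_2\bullet\sigma_1$ produce the same heaplet over $w_3=w_0\oplus(w_1\ominus\sigma_1)\oplus(w_2\ominus\sigma_2)$ — this is precisely the content of the promotion construction, so I would either cite it directly or unfold it and compare the three blocks of $w_3$ component by component. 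Granting this equality, functoriality gives $x'=(X(\sigma_2\bullet\epsilon_1))(x_2)$, and both $\preceq$-relations hold.

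For the concluding ``Hence'' I would argue purely order-theoretically. Since $\sim$ is by definition the symmetric--transitive closure of the preorder $\preceq$, it suffices to show that the relation $R$ given by ``$(\rho_1,x_1)$ and $(\rho_2,x_2)$ admit a common $\preceq$-upper bound'' is already an equivalence relation containing $\preceq$. Reflexivity and symmetry are immediate (reflexivity using that $\preceq$ is a preorder), while transitivity is exactly where the diamond property enters: if $a\preceq c_1$, $b\preceq c_1$ and $b\preceq c_2$, $c\preceq c_2$, then $b$ is a common lower bound of $c_1,c_2$, so the diamond property yields $d$ with $c_1\preceq d$ and $c_2\preceq d$, whence $a\preceq d$ and $c\preceq d$, i.e.\ $aRc$. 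As $\preceq\,\subseteq R\subseteq\,\sim$ and $R$ is an equivalence relation while $\sim$ is the least equivalence relation containing $\preceq$, we conclude $R=\,\sim$, which is the desired characterization.
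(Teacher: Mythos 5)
Your proof is correct and follows essentially the same route as the paper's: the common upper bound is obtained from the local independent coproduct of the two witnessing initializations, with the promoted maps $w_1\into w_1\oplus_w w_2$ and $w_2\into w_1\oplus_w w_2$ serving as the new witnesses. You additionally spell out two points the paper leaves implicit --- that the two composite initializations into the apex agree in $\BE$ (so that $x'$ is well defined), and the order-theoretic derivation of the ``Hence'' clause from the diamond property --- both of which are verified correctly.
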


\begin{example}
To illustrate the equivalence relation $\sim$ behind $P$, we revisit the setting
of Example~\ref{ex:ref-N}. Consider the following situations:
	\begin{equation*}
		\begin{tikzpicture}
			\draw [] (0, 0) rectangle (1, 0.5) node[midway] {5};
			\draw [dotted] (0, 0.5) rectangle (1, 1);
			\draw [->, thick] (0.5, 0.75) node {$\bullet$} |- (-.75,0.75) |- (0, 0.25);
			\draw [dotted] (0, 1) rectangle (1, 1.5) node[midway] {6};
		\end{tikzpicture}\hspace{1em}\sim\hspace{1em}
		\begin{tikzpicture}
			\draw [] (0, 0) rectangle (1, 0.5) node[midway] {5};
		\end{tikzpicture} \hspace{5em}
		\begin{tikzpicture}
			\draw [dotted] (0, 0) rectangle (1, 0.5) node[midway] {5};
			\draw [] (0, 0.5) rectangle (1, 1);
			\draw [->, thick] (0.5, 0.75) node {$\bullet$} |- (-0.75,0.75) |- (0, 0.25);
			\draw [] (0, 1) rectangle (1, 1.5) node[midway] {6};
		\end{tikzpicture}
		\hspace{1em}\not\sim\hspace{1em}
		\begin{tikzpicture}
			\draw [dotted] (0, 0) rectangle (1, 0.5) node[midway] {3};
			\draw [] (0, 0.5) rectangle (1, 1);
			\draw [->, thick] (0.5, 0.75) node {$\bullet$} |- (-0.75,0.75) |- (0, 0.25);
			\draw [] (0, 1) rectangle (1, 1.5) node[midway] {6};
		\end{tikzpicture}
	\end{equation*}
Here, the solid lines indicate public locations and the dotted lines indicate private locations.
The left equivalence holds because the private locations are not reachable from
the public ones by references (depicted as arrows). On the right,
although the public parts are equal, the reachable cells of the private parts reveal the distinction,
preventing the equivalence under~$\sim$. Intuitively, 
hiding identifies those heaps that agree both on their public and reachable private part.
\end{example}
The covariant action of $\Hid X$ (on $\BE$) is defined via promotion of initializations:
\begin{align*}
	(\Hid X)(\epsilon\c w_1\into w_2)& (\rho\c w _1 \to w_1', x \in Xw_1')_\sim\\*
 =&\;	(\geoU\epsilon\bullet\rho\c w_2\to\rho\oplus_{w_1} \geoU\epsilon, X(\rho\bullet\epsilon)(x))_\sim.
\end{align*}
Furthermore, there is a contravariant hiding operation (on $\BW$) given by the canonical
action of the coend: for $\rho\c w\to w'$, we define
$ \oname{hide_\rho}\c\Hid X w'\to\Hid X w $:
\begin{align}\label{eq:P-cov-act}
	\oname{hide_\rho} (\rho'\c w '\to w'', x \in Xw'')_\sim~ =
		(\rho'\comp\rho, x)_\sim
\end{align}
This allows us to regard $P$ both as a functor $[\BE,\Set]\to[\BE,\Set]$ and as a functor 
$[\BE,\Set]\to [\BW^\op,\Set]$.

\paragraph{Full ground store monad}
We now have all the necessary ingredients to obtain the full ground store
monad~$\St$ on $[\BW,\Set]$. This monad is assembled by composing the
functors in Fig.~\ref{fig:main-dia} in the following way. First, observe
that $(\Hid(\argument\times H))^H$ is a standard (global) store monad
transform of $\Hid$ on $[\BE,\Set]$. This monad is sandwiched between
the adjunction $\geoU_\star\vdash \geoU^\star$ induced by $\geoU$ (see
Section~\ref{sec:prelim}). Since any monad itself resolves into an adjunction,
sandwiching in it between an adjunction again yields a monad. In summary,
\begin{align}\label{eq:T}
\St = \Bigl([\BW,\Set] \xto{~\geoU^\star} [\BE,\Set]\xto{\Hid(-\times H)^H} [\BE,\Set] \xto{~\geoU_\star} [\BW,\Set]\Bigr).\\[-4ex]\notag
\end{align}
\needspace{2\baselineskip}
\begin{theorem}\label{thm:T-strong}
The monad $T$, defined by~\eqref{eq:T} is strong.
\end{theorem}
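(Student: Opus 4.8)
The plan is to equip $\St$ with a tensorial strength $\tau_{A,B}\c A\times \St B\to \St(A\times B)$ for the cartesian (pointwise-product) monoidal structure on $[\BW,\Set]$, which is the structure relevant for the fine-grain call-by-value interpretation. I would build $\tau$ compositionally, following the three layers of~\eqref{eq:T}. The guiding principle is that tensorial strengths compose: a composite of strong functors linked by strong (i.e.\ monoidal-compatible) natural transformations is again strong, and in particular a monad all of whose constituents are strong is strong, provided its unit and multiplication are strong natural transformations. Here each layer is either a strong endofunctor or one half of a product-preserving adjunction, so the bookkeeping is governed entirely by how strengths transport along $\geoU^\star\dashv\geoU_\star$.

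First I would establish a strength for the hiding monad $\Hid$ on $[\BE,\Set]$, i.e.\ a map $A\times \Hid X\to \Hid(A\times X)$. As noted after~\eqref{eq:hid}, $\Hid X$ is computed pointwise by a coend that is an ordinary colimit in $\Set$, and since $(\argument)\times Aw$ preserves colimits in $\Set$ (cartesian closure), we get $(A\times\Hid X)w=Aw\times\int^{\rho\c w\to w'}Xw'\iso\int^{\rho\c w\to w'}(Aw\times Xw')$. For each $\rho\c w\to w'$ in $w\dar\geoU$ the map $A\rho\times\id\c Aw\times Xw'\to Aw'\times Xw'$ followed by the coend injection assembles into a cocone into $\int^{\rho\c w\to w'}(Aw'\times Xw')=\Hid(A\times X)w$; well-definedness with respect to the coend identifications and naturality in $w$ follow from functoriality of $A$, and the strength axioms reduce to the universal property of the coend.

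Next, the global store transform $X\mapsto(\Hid(X\times H))^{H}$ preserves strength: this is the standard fact that the state-monad transformer sends strong monads to strong monads, using only that $[\BE,\Set]$ is a presheaf topos (so the exponential $(\argument)^{H}$ exists), the strength of $\Hid$ just obtained, and the evident strength of $(\argument)\times H$. Write $S$ for the resulting strong monad on $[\BE,\Set]$, with strength $\sigma_{X,Y}\c X\times SY\to S(X\times Y)$. For the sandwich $\St=\geoU_\star\,S\,\geoU^\star$ I would use that $\geoU^\star$, being precomposition (equivalently the inverse-image part of a geometric morphism), preserves finite limits and is hence strong monoidal, $\geoU^\star(A\times B)\iso\geoU^\star A\times\geoU^\star B$, while its right adjoint $\geoU_\star$ preserves products and so is (lax) monoidal. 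The strength of $\St$ is then the composite
\begin{align*}
A\times\geoU_\star S\geoU^\star B
&\xto{\eta_A\times\id}\geoU_\star\geoU^\star A\times\geoU_\star S\geoU^\star B\\
&\to\geoU_\star(\geoU^\star A\times S\geoU^\star B)\\
&\xto{\geoU_\star\sigma}\geoU_\star S(\geoU^\star A\times\geoU^\star B)\iso\St(A\times B),
\end{align*}
where $\eta$ is the unit of $\geoU^\star\dashv\geoU_\star$, the unlabelled arrow is the lax monoidal structure of $\geoU_\star$, and the final isomorphism uses strong monoidality of $\geoU^\star$.

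The main obstacle is not the construction of $\tau$ but the verification that it obeys the coherence axioms of a strength and, crucially, that it is compatible with the unit and multiplication of $\St$, so that $\St$ is strong as a \emph{monad} and not merely as a functor. This is a diagram chase interleaving the triangle identities of $\geoU^\star\dashv\geoU_\star$, the monoidal coherence of $\geoU_\star$, and the strength axioms for $\sigma$. The genuinely delicate point is that the multiplication of the sandwiched monad uses the counit $\geoU^\star\geoU_\star\To\id$; one must invoke that $\geoU^\star\dashv\geoU_\star$ is a \emph{monoidal} adjunction (since $\geoU^\star$ is strong monoidal) to conclude that this counit is a monoidal natural transformation, so that it interacts correctly with $\sigma$ when the two copies of $S$ are merged. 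Once this compatibility is secured, the strength axioms for $\St$ follow formally from those of $\sigma$ and the adjunction identities.
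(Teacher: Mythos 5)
Your overall architecture (a strength for $\Hid$, then the state transform, then transport through the monoidal adjunction $\geoU^\star\dashv\geoU_\star$) matches the intended route --- the paper itself gives no details and simply defers to Kammar et al. But your first step contains a genuine error: the map $A\rho\times\id\c Aw\times Xw'\to Aw'\times Xw'$ does not typecheck for a general presheaf $A\c\BE\to\Set$. The coend defining $(\Hid X)w$ is indexed by the comma category $w\dar\geoU$, whose objects carry only a $\BW$-injection $\rho\c w\to w'$; to let $A$ act you would need an $\BE$-morphism $w\into w'$, i.e.\ an initialization $(\rho,\eta)$ with $\eta\in\Hplb(w'\ominus\rho,w')$, and there is no canonical choice of the heaplet $\eta$ (this is the same reason why $\BE$ lacks an initial object). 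Consequently $\Hid$ is not a strong monad on $[\BE,\Set]$ for the cartesian structure, and the off-the-shelf appeal to ``the state transformer preserves strength'' for $(\Hid((\argument)\times H))^H$ is not available.

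The gap is repairable, and the repair is the actual content of the proof: the strength is only ever needed at objects in the image of $\geoU^\star$. For $A=\geoU^\star A_0$ with $A_0\in[\BW,\Set]$ one has $(\geoU^\star A_0)w=A_0w$, so the $\BW$-action $A_0\rho$ is available and $(a,(\rho,x))\mapsto(\rho,(A_0\rho(a),x))$ is well defined on the coend, since $A_0(\geoU\eps\comp\rho)=A_0(\geoU\eps)\comp A_0\rho$ makes it compatible with the identifications. Alternatively, the strength of $\Hid$ at second arguments of the form $B\times H$ exists for arbitrary $A$, because the heap component restricted to $w'\ominus\rho$ supplies exactly the missing heaplet needed to manufacture an initialization along which to transport $Aw$ to $Aw'$. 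Either of these ``relative'' strengths suffices to run the rest of your argument --- your handling of the sandwich via the monoidal adjunction, including the role of the counit $\geoU^\star\geoU_\star\To\id$ in the multiplication, is fine --- but as written the claim that $\Hid$ is strong is false, and the proof does not go through without one of these restrictions being made explicit.
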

\begin{proof}
The proof is a straightforward generalization of the proof in~\cite{KammarLevyEtAl17}.
\qed\end{proof}
We can recover the monad previously developed by Plotkin and Power~\cite{PlotkinPower02}
by resorting to the simple store (Example~\ref{def:gs}).
\begin{theorem}\label{thm:simp-store}
Under the simple store model $\St$ is isomorphic to the local store monad from~\cite{PlotkinPower02}:
\begin{align*}
		(\St X) w \cong \bigg( \int^{\rho\c w\to w'\in w \dar \BW} Xw'\times \CV^{w'} \biggl)^{\CV^{w}}.
\end{align*}
\end{theorem}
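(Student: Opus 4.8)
The plan is to unfold the composite~\eqref{eq:T}, specialize every ingredient to the simple store (where $\oname{range}(\star)$ is constantly $\CV$), and collapse the result to the Plotkin--Power formula using the adjunction $\geoU^\star\dashv\geoU_\star$, Cartesian closure of the topos $[\BE,\Set]$, and the Yoneda lemma. The first fact I would record is that in this instance the heap functor becomes representable: $\Hpl=\BE(\iobj,-)$ with $\Hpl e=\CV^{e}$, and $\geoU^\star X\times\Hpl$ has value $Xe\times\CV^{e}$ at $e\in\BE$. This representability is what ultimately produces the outer exponential $(-)^{\CV^{w}}$.

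First I would rewrite $\St X=\geoU_\star(G^{\Hpl})$ with $G=\Hid(\geoU^\star X\times\Hpl)$. Evaluating at $w$ through $\geoU_\star Y(w)\cong\Hom_{[\BW,\Set]}(\BW(w,-),\geoU_\star Y)\cong\Hom_{[\BE,\Set]}(\geoU^\star\BW(w,-),Y)$ (Yoneda, then the geometric-morphism adjunction), and then using the exponential adjunction in $[\BE,\Set]$ to move the $\Hpl$-exponent across, I obtain
\[
  (\St X)\,w\;\cong\;\Hom_{[\BE,\Set]}\bigl(\geoU^\star\BW(w,-)\times\Hpl,\;G\bigr).
\]

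The key computational lemma I would then prove trivializes the left argument of this hom. Since $\geoU^\star\BW(w,-)\times\Hpl$ has value $\BW(w,e)\times\CV^{e}$ at $e$, and since $\BE(w,e)=\sum_{\rho\c w\to e}\CV^{e\ominus\rho}$ while $\CV^{e}\cong\CV^{w}\times\CV^{e\ominus\rho}$ along any $\rho$, I would establish a natural isomorphism $\geoU^\star\BW(w,-)\times\Hpl\cong\CV^{w}\cdot\BE(w,-)$, the $\CV^{w}$-fold copower of the representable $\BE(w,-)$. The only real content here is naturality in $e\in\BE$: an initialization must act on the left-hand side by filling fresh cells exactly as post-composition acts on $\BE(w,-)$, and the $\CV^{w}$-component is preserved precisely because initializations leave the values of old cells untouched (the range being constant makes the transport trivial). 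Feeding this iso into the hom and applying Yoneda together with the copower/product adjunction collapses everything to
\[
  (\St X)\,w\;\cong\;(Gw)^{\CV^{w}}\;=\;\Bigl(\textstyle\int^{\rho\c w\to w'\in w\dar\geoU} Xw'\times\CV^{w'}\Bigr)^{\CV^{w}},
\]
the last equality being just the definition~\eqref{eq:hid} of $\Hid$ specialized to $\geoU^\star X\times\Hpl$.

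What remains, and what I expect to be the main obstacle, is to identify this hiding coend over the comma category $w\dar\geoU$ with the Plotkin--Power coend over the coslice $w\dar\BW$: the two are coends of differently-variant data over index categories with the same objects but different morphisms (initializations versus bare injections), and the coslice coend only makes sense once one reads the heap factor contravariantly, i.e.\ as the coend of the mix-variant bifunctor $(w'^-,w'^+)\mapsto X(w'^+)\times\Hplb(w'^-,w'^+)$, which in the simple store is $X(w'^+)\times\CV^{w'^-}$. I would argue the two are literally the same quotient of $\coprod_{(w',\rho)}Xw'\times\CV^{w'}$ by comparing generators: an injection $\tau\c a\to b$ over $w$ together with a filling $\zeta\in\CV^{b\ominus\tau}$ is exactly an initialization $\epsilon=(\tau,\zeta)\c a\into b$ over $w$, and the dinaturality relation of the coslice coend identifies $(a,x,h|_a)$ with $(b,X\tau(x),h)$ for $h\in\CV^{b}$, which is precisely the relation $(a,x,h_a)\sim(b,X\tau(x),\Hpl\epsilon(h_a))$ generated by $\epsilon$ in~\eqref{eq:hid}. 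Hence the coends coincide and the displayed isomorphism is the claimed one. Finally I would observe that every step is natural in $X$, that the diamond property (Proposition~\ref{prop:diamond}) is what guarantees the hiding coend is computed by the plain $\sim$-quotient used throughout, and, if the statement is read as a monad isomorphism, that the chain commutes with units and multiplications.
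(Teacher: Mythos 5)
Your proposal is correct and follows the same two-step decomposition as the paper's proof: first collapse $\geoU_\star\bigl((\argument)^{H}\bigr)$ at $w$ to $(\argument)^{\CV^{w}}$, then re-index the hiding coend from $w\dar\geoU$ to $w\dar\BW$. The second step is literally the paper's Lemma~\ref{lem:simple-store2}, and your generators-and-relations comparison (an injection plus a filling of the complement \emph{is} an initialization, and the dinaturality relation of the coslice coend is the $\preceq$-relation of~\eqref{eq:hid}) is exactly the intended content. For the first step the paper proves Lemma~\ref{lem:simpleStore1} by computing the end $\int_{\rho\c w\to w'}\Set(\CV^{w'},Xw')$ directly and observing that every compatible family is determined by its value at $\id_{w}$, since any pair $(\rho, s\c w'\to\CV)$ arises from $\id_w$ via the initialization $(\rho, s\comp\rho^{\complement})$; your route instead transposes this same combinatorial fact into the isomorphism $\geoU^\star\BW(w,\argument)\times H\iso\CV^{w}\cdot\BE(w,\argument)$ and invokes Yoneda and the copower adjunction. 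The two arguments are adjoint presentations of one another; yours is slightly slicker and makes the role of representability of $H$ explicit, while the paper's version has the minor advantage of also exhibiting the functorial action of $X(\argument)^{\CV^{(\argument)}}$ in $w$ concretely, which you would still need to extract (routinely, from naturality of your chain of isomorphisms in the representable) if the statement is to be read as an isomorphism of functors rather than of values.
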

Using~\eqref{eq:T}, one obtains the requisite semantics to the language in Fig.~\ref{fig:prog.lang} 
using the standard clauses of fine-grain call-by-value~\cite{LevyPowerEtAl02}, 
except for the special clauses for~\textbf{(put)},~\textbf{(get)} and~\textbf{(new)}, 
which require special operations of the monad:
\begin{flalign*}
\qquad\oname{get}&\c\geoU^\star\ul{\Ref_S}\times H\to\geoU^\star\ul{\CType(S)}{\times H}\\*
\oname{put}&\c (\geoU^\star\ul{\Ref_S}\times \geoU^\star\ul{\CType(S)})\times H\to 1\times H\\*
\oname{new}&\c\geoU^\star(\ul{\CType(S)}^{\ul{\Ref_{S}}})\times H\to P(\geoU^\star\ul{\Ref_{S}}\times H)
\end{flalign*}

\section{Intermezzo: BI-Hyperdoctrines and BI-Algebras}\label{sec:bi}
To be able to give a categorical notion of higher order logic over local store, 
following Biering et~al~\cite{BieringBirkedalEtAl07}, we aim to construct a \emph{BI-hyperdoctrine}.

Note that algebraic structures, such as monoids and Heyting algebras can be 
straightforwardly internalized in any category with finite products, which gives 
rise to \emph{internal monoids}, \emph{internal Heyting algebras}, etc. The situation 
changes when considering non-algebraic properties. In particular, recall that 
a Heyting algebra~$A$ is \emph{complete} iff it has arbitrary joins, which are 
preserved by binary meets. The corresponding categorical notion is essentially obtained 
from spelling out generic definitions from internal category theory~\cite[B2]{Johnstone02} and is as follows.
\begin{definition}[Internally Complete Heyting Algebras]\label{def:ic-HA}
An internal Heyting (Boolean) algebra $A$ in a finitely complete category $\BC$ is 
\emph{internally complete} if for every $f\in\BC(I,J)$, there exist \emph{indexed joins} 
$\bigor_f\c\BC(I,A)\to\BC(J,A)$, left order-adjoint to $(\argument)\comp f\c\BC(J,A)\to\BC(I,A)$
such that 
for any pullback square on the left, the corresponding diagram on the 
right commutes (\emph{Beck-Chevalley condition}): 
\begin{equation*}
\begin{tikzcd}[column sep = 8ex,row sep = 4ex]
I  
\pbk
\dar["g"']
\rar["f"] 
&  
J
\dar["h"]
\\
I'
\rar["f'"'] 
& 
J'
\end{tikzcd}
\hspace{2cm}
\begin{tikzcd}[column sep = 12ex,row sep = 4ex]
\BC(J,A)  
\dar["\bigor_h"']
\rar["(\argument)\comp f"] 
&  
\BC(I,A)
\dar["\bigor_g"]
\\
\BC(J',A)
\rar["(\argument)\comp f'"'] 
& 
\BC(I',A)
\end{tikzcd}
\end{equation*}
\end{definition}
It follows generally that existence of indexed joins $\bigor$ implies existence of 
indexed meets $\bigand$, which then satisfy dual conditions (\cite[Corollary~2.4.8]{Johnstone02}).
\begin{remark}[Binary Joins/Meets]\label{rem:bin-joins}
The adjointness condition for indexed joins means precisely that $\bigor_f\phi\leq\psi$
iff $\phi\leq\psi\comp f$ for every $\phi\c I\to A$ and every $\psi\c J\to A$.
If $\BC$ has binary coproducts, by taking $f=\nabla\c X+X\to X$ we obtain 
that $\bigor_\nabla\phi\leq\psi$ iff $\phi\leq[\psi,\psi]$ iff $\phi\comp\inl\leq\psi$ 
and $\phi\comp\inr\leq\psi$. This characterizes $\bigor_\nabla[\phi_1,\phi_2]\c X\to A$ as the 
binary join of $\phi_1,\phi_2\c X\to A$. Binary meets are characterized analogously.
\end{remark}
\begin{definition}[(First Order) (BI-)Hyperdoctrine]
Let $\BC$ be a category with finite products. A \emph{first order hyperdoctrine
over $\BC$} is a functor $S\c\BC^\op\to\Poset$ with the following properties:
\begin{enumerate}
\item given $X\in |\BC|$, $SX$ is a Heyting algebra;
\item given $f\in\BC(X,Y)$, $Sf\c SY\to SX$ is a Heyting algebra morphism;
\item for any product projection $\fst\c X\times Y\to X$, there are 
$(\exists Y)_X\c S(X\times Y)\to SX$ and $(\forall Y)_X\c S(X\times Y)\to SX$,
which are respective left and right order-adjoints of $S\fst\c S(X\times Y)\to SX$, 
naturally in $X$;
\item for every $X\in |\BC|$, there is $=_X\,\in S(X\times X)$ such that for 
all $\phi\in S(X\times X)$, $\top\leq (S\brks{\id_X,\id_X})(\phi)$ iff $=_X\,\leq \phi$.
\end{enumerate}
If additionally 
\begin{enumerate}[start=5]
\item given $X\in |\BC|$, $SX$ is a \emph{BI-algebra}, i.e.\ a commutative monoid equipped
with a right order-adjoint to multiplication;
\item given $f\in\BC(X,Y)$, $Sf\c SY\to SX$ is a BI-algebra morphism,
\end{enumerate}
then $S$ is called a \emph{first order BI-hyperdoctrine}.

In a \emph{(higher order) hyperdoctrine}, $\BC$ is additionally required to be Cartesian 
closed and every $SX$ is required to be poset-isomorphic to $\BC(X, A)$ for a 
suitable internal Heyting algebra $A\in |\BC|$ naturally in $X$. Such a hyperdoctrine is a 
\emph{BI-hyperdoctrine} if moreover~$A$ is an internal BI-algebra.
\end{definition}
\begin{proposition}
Every internally complete Heyting algebra $A$ in a Cartesian closed               
category $\BC$ with finite limits gives rise to a canonical hyperdoctrine       
$\BC(\argument, A)$: for every $X$, $\BC(X,A)$ is a poset under $f\leq g$ iff   
$f\land g = f$.                                                                 
\end{proposition}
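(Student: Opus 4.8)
The plan is to set $S := \BC(\argument, A)$ and to check the four clauses defining a first order hyperdoctrine together with the higher order requirement. Clauses~(1) and~(2) I would dispatch uniformly by a Yoneda-style transfer of structure. Since Heyting algebras form an equational variety, an internal Heyting algebra $A$ is exactly a lift of the representable $\BC(\argument,A)\c\BC^\op\to\Set$ along the forgetful functor from Heyting algebras to sets: the internal operations $\land,\lor,\Rightarrow\c A\times A\to A$ and constants $\top,\bot\c 1\to A$ equip every hom-set $\BC(X,A)$ with a pointwise Heyting structure (e.g.\ $f\land g = {\land}\comp\brks{f,g}$ and $\top_X = \top\comp {!_X}$), and the Heyting axioms transfer from the defining diagrams of $A$ by pairing and post-composition. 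The induced lattice order is $f\leq g \Leftrightarrow f\land g = f$, which is clause~(1). For clause~(2), reindexing $Sf = (\argument)\comp f$ preserves each operation by naturality, e.g.\ $(h\land k)\comp f = {\land}\comp\brks{h\comp f,\, k\comp f} = (h\comp f)\land(k\comp f)$, so $Sf$ is a Heyting morphism.

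Clauses~(3) and~(4) are where internal completeness is used. For~(3) I would apply the indexed joins and meets of Definition~\ref{def:ic-HA} to the projection $\fst\c X\times Y\to X$: this yields $\bigor_\fst,\bigand_\fst\c\BC(X\times Y,A)\to\BC(X,A)$ that are respectively left and right order-adjoint to $(\argument)\comp\fst = S\fst$ (existence of $\bigand$ following from that of $\bigor$ as noted after Definition~\ref{def:ic-HA}), and I put $(\exists Y)_X = \bigor_\fst$ and $(\forall Y)_X = \bigand_\fst$. Naturality in $X$ is exactly the Beck-Chevalley condition: for a substitution $s\c X'\to X$ the square with top edge $s\times\id_Y$, bottom edge $s$, and the two projections as sides is a pullback (it exhibits $X'\times Y$ as $X'\times_X (X\times Y)$), so Beck-Chevalley gives $\bigor_{\fst'}\comp S(s\times\id_Y) = Ss\comp\bigor_{\fst}$ (writing $\fst'\c X'\times Y\to X'$), i.e.\ $(\exists Y)_{X'}\comp S(s\times\id_Y) = Ss\comp(\exists Y)_X$; the dual indexed-meet version handles $\forall$.

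For clause~(4) I would construct equality from the diagonal $\brks{\id_X,\id_X}\c X\to X\times X$. Internal completeness provides a left adjoint $\bigor_{\brks{\id_X,\id_X}}$ to $S\brks{\id_X,\id_X}$, and I define ${=_X} := \bigor_{\brks{\id_X,\id_X}}(\top_X)\in S(X\times X)$. The required biconditional is then immediate from this adjunction: ${=_X}\leq\phi$ iff $\bigor_{\brks{\id_X,\id_X}}(\top_X)\leq\phi$ iff $\top_X\leq S\brks{\id_X,\id_X}(\phi)$, which is precisely $\top\leq (S\brks{\id_X,\id_X})(\phi)$. The higher order requirements then hold by construction: $\BC$ is Cartesian closed by hypothesis, and each $SX$ equals, hence is naturally poset-isomorphic to, $\BC(X,A)$ for the internal Heyting algebra $A$ we started from.

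The one genuinely delicate step is the Beck-Chevalley naturality in clause~(3): one has to exhibit the relevant projection/substitution squares as honest pullbacks in $\BC$ and match them against the orientation fixed in Definition~\ref{def:ic-HA}, and separately invoke the dual indexed-meet condition for $\forall$. The remaining work --- the pointwise Heyting structure and the adjunctions underlying the quantifiers and equality --- is a routine unwinding of the internal data, so I expect Beck-Chevalley to be the only place requiring care.
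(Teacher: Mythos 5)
Your proposal is correct and follows essentially the same route as the paper's proof: pointwise Heyting structure on the hom-sets, quantifiers given by the indexed joins/meets along the product projection with naturality from Beck--Chevalley, and equality defined as $\bigor_{\brks{\id_X,\id_X}}\top$ with the required biconditional falling out of the adjunction. The paper states these steps more tersely, but the content is the same.
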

\begin{proof}
Clearly, every $\BC(X,A)$ is a Heyting algebra and every $\BC(f,A)$ is a Heyting algebra 
morphism. The quantifies are defined mutually dually as follows:
\begin{align*}
 (\exists Y)_X (\phi\c X\times Y\to A) = \bigor_{\fst\c X\times Y\to X}\phi,\\
 (\forall Y)_X (\phi\c X\times Y\to A) = \bigand_{\fst\c X\times Y\to X}\phi.
\end{align*}
Naturality in $X$ follows from the corresponding Beck-Chevalley conditions. 

Finally, internal equality $=_X\c X\times X\to A$ is defined as $\bigor_{\brks{\id_X,\id_X}}\top$.
\qed\end{proof}

\begin{figure}[t!]
\begin{center}%
{
\parbox{\textwidth}{%
\small%
\begin{flalign*}
\anonrule{}{%
  \G\vctx v\c A \quad \G\ctx\phi\c\pred A 
  }{%
  \G\ctx\phi(v)\c\pro
  }
&&
\anonrule{}{%
  \G,x\c A\ctx\phi\c\pro 
  }{%
  \G\ctx x.\,\phi\c\pred A
  }
&&
\anonrule{}{%
  \G\vctx \ell\c\Ref_S \quad \G\vctx v\c \CType(S)
  }{%
  \G\ctx \ell\ito v\c\pro
  }
\end{flalign*}
\begin{flalign*}
&&
\anonrule{}{%
  \G\ctx\phi\c\pred A
  }{%
  \G\ctx Q\,\phi\c\pro
  }
\quad (Q\in\{\forall,\exists\})
&&
\anonrule{}{%
  \G\vctx v\c A \qquad \G\vctx w\c A
  }{%
  \G\ctx v=w\c\pro
  }
&&
\end{flalign*}
\begin{flalign*}
\anonrule{}{%
  }{%
  \G\ctx c\c\pro
  }
\quad (c\in\{\top,\bot\})
&&
\anonrule{}{%
  \G\ctx\phi\c\pro
    \qquad
  \G\ctx\psi\c\pro
  }{%
  \G\ctx\phi\,\$\,\psi\c\pro
  }
\quad (\$\in\{\land,\lor,\tto,\star,\sepimp\})
\end{flalign*}
}}
\end{center}
\caption{Term formation rules for the higher order separation logic.}
\label{fig:logic.lang}
\end{figure}

\noindent
A standard way to obtain an (internally) complete BI-algebra is to resort to 
ordered partial commutative monoids~\cite{PymOHearnEtAl04}.
\begin{definition}[Ordered PCM~\cite{PymOHearnEtAl04}]
An \emph{ordered partial commutative monoid (pcm)} is a tuple $(\CM, \CE, \cdot\,, \leq)$ 
where~$\CM$ is a set, $\CE\subseteq\CM$ is a set of \emph{units}, \emph{multiplication} $\cdot$
is a partial binary operation on $\CM$, and $\leq$ is a preorder on $\CM$, such that the following axioms 
are satisfied (where $n\simeq m$ denotes \emph{Kleene equality} of~$n$ and~$m$,
i.e.\ both $n$ and $m$ are defined and equal):
\begin{enumerate}
\item $m\cdot n\simeq m\cdot n$;
\item $(m\cdot n) \cdot k \simeq m\cdot (n\cdot k)$;
\item for any $m\in\CM$ there is $e \in\CE$ such that $m\cdot e \simeq m$;
\item for any $m\in\CM$ and any $e \in\CE$, if $m\cdot e$ is defined then $m\cdot e \simeq m$;
\item if $n'\leq n$, $m'\leq m$, and $n\cdot m$ is defined then so is $n'\cdot m'$ 
and $n'\cdot m'\leq n\cdot m$.
\end{enumerate}
\end{definition}
We note that using general recipes~\cite{BizjakBirkedal18}, for every internal 
ordered pcm $M$ in a topos $\BC$ with subobject classifier $\Omega$, 
$\BC(\argument\times M,\Omega)$ forms a BI-hyperdoctrine, on particular, if $\BC=\Set$
then $\Set(\argument\times M, 2)$ is a BI-hyperdoctrine.

\section{A Higher Order Logic for Full Ground Store}\label{sec:sep}
We proceed to develop a local version of separation logic using semantic
principles explored in the previous sections. That is, we seek an interpretation 
for the language in Fig.~\ref{fig:logic.lang} in the category 
$[\BW,\Set]$ over the type system~\eqref{eq:gramma}, extended with \emph{predicate 
types} $\pred A$. The judgements $\G\ctx\phi\c\pro$ type formulas depending 
on a variable context $\G$. Additionally, we have judgements of the form
$\G\ctx\phi\c\pred A$ for \emph{predicates in context}. Both kinds of 
judgements are mutually convertible using the standard application-abstraction 
routine. Note that expressions for quantifiers $\exists x.\,\phi$ are thus obtained
in two steps: by forming a predicate $x.\,\phi$, and subsequently applying $\exists$.
Apart from the standard logical connectives, we postulate \emph{separating conjunction} 
$\star$ and \emph{separating implication}~$\sepimp$. 

Our goal is to build a BI-hyperdoctrine, using the recipes, summarized in the 
previous section. That is, we construct a certain internal BI-algebra $\TA$ in $[\BW,\Set]$,
and subsequently conclude that $[\argument,\TA]$ is a BI-hyperdoctrine in question. In what 
follows, most of the effort is invested into constructing an internally complete 
Boolean algebra $\check\PSet\comp\hat P\hat H$ (hence $[\argument,\check\PSet\comp\hat P\hat H]$
is a hyperdoctrine), from which $\TA$ is carved out as a subfunctor, 
identified by an upward closure condition. Here, $\check\PSet$ is a contravariant
powerset functor, and~$\hat P$ and~$\hat H$ are certain modifications of the 
hiding and the heap functors from Section~\ref{sec:fgs}. As we shall see, the move from 
$\check\PSet\comp\hat P\hat H$ to $\TA$ remedies the problem of the former 
that the natural separation conjunction operator $\star$ on it does not have unit 
(Remark~\ref{rem:nempty}).

In order to model resource separation, we must identify a domain of logical 
assertions over partial heaps, i.e.\ heaplets, instead of total heaps.
We thus need to derive a unary (covariant) heaplet functor from the binary,
mix-variant one $\CH$ used before. We must still cope not 
only with heaplets, but with partially hidden heaplets, to model information hiding.
A seemingly natural candidate functor for hidden heaplets is the composition
\begin{displaymath}
  P\bigl(\BE
  \xto{\sum_{w\subseteq\argument} \Hplb (w, \argument)} 
\Set\bigr)\c\BW^\op\to\Set. 
\end{displaymath}
One problem of this definition is that the equivalence relation 
$\sim$ underlying the construction of $P$ in~\eqref{eq:hid} is too fine. Consider, for example, 
$e_w = (\emptyset\subseteq w, \star)\in\sum_{w'\subseteq w}\CH(w',w)$. 
Then $(\id\c w\to w, e_w)\nsim (\inl\c w \to w \oplus \{\star\c 1\}\comma e_{w \oplus\{\star\c 1\}})$,
i.e.\ two hidden heaplets would not be equivalent if one extends the other by 
an inaccessible hidden cell. In order to arrive at a more reasonable model of logical 
assertions, we modify the previous model by replacing the category 
of initializations~$\BE$ is a category $\hat\BE$ of \emph{partial initializations}.
This will induce a hiding monad $\hat P$ over $[\hat\BE,\Set]$ using exactly the 
same formula~\eqref{eq:hid} as for $P$.

A partial initialization is a pair
$(\rho, \eta)$ with $\rho\in\BW(w^-_1, w^+_2)$ and
$\eta \in\sum_{w^- \subseteq w^+_2 \ominus \rho} \Hplb(w^-, w^+_2)$.
Let $\hat\BE$ be the category of heap layouts and partial initializations.
Analogously to~$\geoU$, there is an obvious partial-heap-forgetting functor $\hat\geoU\c\hat\BE \to\BW$.
Let $\hat H\c\hat\BE \to\Set $ be the following \emph{heaplet functor}:
\begin{align*}
	\hat H w =&\; \sum_{w'\subseteq w} \Hplb (w', w).
\end{align*}
Given a partial initialization $\epsilon = (\rho\c w \to w',(w''\subseteq w'\ominus \rho, \eta
\in\Hplb(w'', w')))\c w\into w'$, $\hat H\epsilon\c\hat H w\to\hat Hw'$ 
extends a given heaplet over $w$ to a heaplet over~$w'$ via $\eta$:
\begin{align*}
 (\hat H \epsilon)(w_1\subseteq w,\eta'\in\Hplb(w_1,w)) =&\; (\rho [w_1]\cup w''\subseteq w',\eta'')
\end{align*}
where $\eta''\in\Hplb(\rho[w_1]\cup w''\subseteq w',w')$ is as follows
\begin{flalign*}
&&\pr_{\rho(\ell\c S)}\eta'' =\;& \oname{range}(S)(\rho)(\pr_{(\ell\c S)}\eta') & ((\ell\c S)\in w_1)\\
&&\pr_{(\ell\c S)}\eta'' =\;& \pr_{(\ell\c S)}\eta & ((\ell\c S)\in w'')
\end{flalign*}
With $\hat\BE$ and $\hat H$ as above instead of $\BE$ and $H$, the framework 
described in Section~\ref{sec:fgs} transforms coherently.
\begin{remark}
Let us fix a fresh symbol $\inacc$, and note that 
\begin{align*}
\hat H w =\sum_{w'\subseteq w}\prod_{(\ell\c S)\in w'} \oname{range}(S)(w)
\iso \prod_{(\ell\c S)\in w} (\oname{range}(S)(w)\uplus\{\inacc\}),
\end{align*}
meaning that the passage from $\BE$, $H$ and $P$ to $\hat\BE$, $\hat H$ and 
$\hat P$ is equivalent to extending the $\oname{range}$ function with 
designated values $\inacc$ for \emph{inaccessible locations}. 
We prefer to think of $\inacc$ this way and not as a content of \emph{dangling pointers}, 
to emphasize that we deal with a \emph{reasoning phenomenon} and not with a 
\emph{programming phenomenon}, for our programs neither create nor process 
dangling pointers.
\end{remark}

For the next proposition we need the following concrete description of the set 
$\hat\geoU_\star(2^X)w$ as the end $\int_{\rho\c w\to w'\in w \dar\hat\geoU} \Set(X w',2)$:
this set is a space of dependent functions $\phi$ sending every injection $\rho\c w\to w'$ 
to a corresponding subset of~$Xw'$, and satisfying the constraint: 
$x \in \phi (\rho)$ iff $(X \eps)(x)\in \phi(\hat\geoU\eps \comp \rho)$ for every
${\eps\c w'\into w''}$.
\begin{proposition}\label{prop:p-exp}
\needspace{3\baselineskip}
The following diagram commutes up to isomorphism:
\begin{center}
\begin{tikzcd}[column sep=huge]
{[\hat\BE,\Set]} 
\rar["2^{(\argument)}"] 
\dar["\hat P"'] 
& 
{[\hat\BE,\Set]^\op} 
\dar["\hat\geoU_\star"]
\\
{[\BW,\Set^\op]^\op} 
\rar["{\check\PSet\comp\, {(\argument)}}"]
& 
{[\BW,\Set]^\op}
\end{tikzcd}
\end{center}
(using the fact that $[\BW,\Set^\op]^\op\iso[\BW^\op,\Set]$) where $\check\PSet$ is 
the contravariant powerset functor $\check\PSet\c\Set^\op\to\Set$ and for every 
$X\c\hat\BE\to\Set$ the relevant isomorphism 
$\Phi_w\c\hat\geoU_\star (2^X)w\iso\check\PSet (\hat PX w)$ is as follows:
\begin{align}\label{eq:Phi-iso}
&(\rho\c w \to w', x \in X w')_\sim\in\Phi_w (\phi\in\hat\geoU_\star (2^X) w)\iff	x\in\phi(\rho).
\end{align}
\end{proposition}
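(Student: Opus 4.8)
The strategy is to prove the isomorphism pointwise in $w$ and read off naturality afterwards, exploiting the single structural fact that the contravariant powerset $\check\PSet=\Set(\argument,2)$ is continuous, i.e.\ carries colimits to limits and, in particular, coends to ends. Recall the two concrete descriptions involved. On one hand, $\hat P X w$ is by~\eqref{eq:hid} the coend $\int^{\rho\c w\to w'\in w\dar\hat\geoU} Xw'$, computed as the quotient of $\sum_{\rho\c w\to w'} Xw'$ under the symmetric-transitive closure $\sim$ of the generating preorder $\preceq$. On the other hand, the end description given just above the statement identifies $\hat\geoU_\star(2^X)w=\int_{\rho\c w\to w'\in w\dar\hat\geoU}\Set(Xw',2)$ with the set of families $\phi$ assigning to each $\rho\c w\to w'$ a subset $\phi(\rho)\subseteq Xw'$ subject to the constraint $x\in\phi(\rho)\iff (X\eps)(x)\in\phi(\hat\geoU\eps\comp\rho)$ for all $\eps\c w'\into w''$.

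First I would establish the pointwise isomorphism abstractly:
\[
  \check\PSet(\hat P X w)=\Set\Bigl(\int^{\rho\c w\to w'} Xw',\,2\Bigr)\iso\int_{\rho\c w\to w'}\Set(Xw',2)=\hat\geoU_\star(2^X)w,
\]
where the middle isomorphism is the canonical one turning a contravariant hom out of a coend into the corresponding end. Unwinding this canonical map yields exactly $\Phi_w$ of~\eqref{eq:Phi-iso}: a subset $U\subseteq\hat P X w$ corresponds to the family $\phi$ with $x\in\phi(\rho)\iff(\rho,x)_\sim\in U$. To make the argument self-contained I would check directly that $\Phi_w$ is well defined, which is its crux: the value ``$x\in\phi(\rho)$'' must be invariant under the choice of representative of $(\rho,x)_\sim$. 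It suffices to check invariance along the generators of $\sim$, and this is precisely the end constraint on $\phi$: for $\eps\c w'\into w''$ one has $(\rho,x)\preceq(\hat\geoU\eps\comp\rho,(X\eps)(x))$, and the constraint says $x\in\phi(\rho)\iff(X\eps)(x)\in\phi(\hat\geoU\eps\comp\rho)$. Since the constraint is an equivalence and $\sim$ is the symmetric-transitive closure of $\preceq$ (with the span characterization from Proposition~\ref{prop:diamond} available if a single cospan is preferred), membership of $(\rho,x)_\sim$ in $\Phi_w(\phi)$ is well defined.

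The inverse $\Psi_w$ sends $U\subseteq\hat P X w$ to the family $\psi$ with $\psi(\rho)=\{x\in Xw'\mid(\rho,x)_\sim\in U\}$. This $\psi$ satisfies the end constraint because $(\rho,x)$ and $(\hat\geoU\eps\comp\rho,(X\eps)(x))$ denote the same $\sim$-class, hence lie in $U$ simultaneously; and $\Phi_w,\Psi_w$ are mutually inverse by a direct unwinding of the definitions. For naturality in $w$, I would unwind the two functorial actions: for $f\c w_1\to w_2$ the Kan-extension side reindexes, $\phi\mapsto\phi(\argument\comp f)$, while the powerset side applies the preimage of $\oname{hide}_f$, namely $U\mapsto\{(\sigma,x)_\sim\mid(\sigma\comp f,x)_\sim\in U\}$. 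Chasing $\phi$ both ways around the square reduces both composites to the subset $\{(\sigma,x)_\sim\in\hat P X w_2\mid x\in\phi(\sigma\comp f)\}$, so the square commutes; this is of course just naturality of the hom/coend-continuity isomorphism. Hence $\Phi$ is a natural isomorphism $\hat\geoU_\star(2^X)\iso\check\PSet\comp\hat P X$, and since isomorphisms are stable under $(\argument)^\op$, the square in the statement commutes up to isomorphism.

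The only real obstacle is the well-definedness of $\Phi_w$: one must recognize that the compatibility constraint carving out the end $\hat\geoU_\star(2^X)w$ is literally the condition that $\phi$ descends along $\sim$ to the coend $\hat P X w$. Once this match is made (via the equivalence shape of the constraint together with Proposition~\ref{prop:diamond}), bijectivity and naturality are routine bookkeeping. A secondary point worth stating explicitly is the variance tracking, so that ``commutes up to isomorphism'' is correctly read in $[\BW,\Set]^\op$; this is harmless, as isomorphisms transport across opposites.
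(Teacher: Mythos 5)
Your proposal is correct and follows essentially the same route as the paper: the paper's proof is exactly the chain $\check\PSet(\hat PXw)=\Set(\int^{\rho} Xw',2)\iso\int_{\rho}\Set(Xw',2)\iso\hat\geoU_\star(2^X)w$ using continuity of $\Set(\argument,2)$ against the coend, asserted to be functorial in $w$ and natural in $X$. You merely make explicit what the paper leaves implicit, namely that the end constraint on $\phi$ is precisely descent of membership along the generators of $\sim$, and the naturality check against $\oname{hide}_f$; both verifications are accurate.
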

Let us clarify the significance of Proposition~\ref{prop:p-exp}. The exponential 
$2^{\hat H}$ in $[\hat\BE,\Set]$ can be thought of as a carrier of Boolean predicates
over $\hat H$, and as we see next
those form an internally complete Boolean algebra, which is carried from 
$[\hat\BE,\Set]$ to $[\BW,\Set]$ by $\hat\geoU_\star$. The alternative route
via $\hat P$ and $\check \PSet$ induces a Boolean algebra of predicates 
over hidden heaplets $\hat P\hat H$ directly in $[\BW,\Set]$. The equivalence 
established in Proposition~\ref{prop:p-exp} witnesses agreement of these two structures. 

\begin{theorem}\label{thm:int-BA}
For every $X\c\hat\BE\to\Set$, $\check\PSet\comp \hat PX$ is an internally 
complete Boolean algebra in $[\BW,\Set]$ under
\begin{align*}
\Bigl(\bigor_f\phi\c I& \to\check\PSet\comp \hat P X\Bigr)_w (j \in J w)\\*
 =&\, \{ (\rho\c w\to w', x \in Xw')_\sim\mid
\exists\eps\c w	'\into w'', \exists i \in I w''.\,\\*
&\qquad f_{w''}(i) = J(\hat\geoU\eps\comp \rho)(j)\land (\id_{w''}, (X\eps)(x))_\sim\in 
\phi_{w''}(i)\},\\
\Bigl(\bigand_f\phi\c I& \to\check\PSet\comp \hat P X\Bigr)_w (j \in J w)\\
 =&\, \{ (\rho\c w\to w', x \in Xw')_\sim\mid
\forall\eps\c w'\into w'', \forall i \in I w''.\,\\
&\qquad f_{w''}(i) = J(\hat\geoU\eps\comp \rho)(j)\impl (\id_{w''}, (X\eps)(x))_\sim\in 
\phi_{w''}(i)\}.
\end{align*}
for every $f \c I \to J$, and the corresponding Boolean algebra operations are 
computed as set-theoretic unions, intersections and complements.

\end{theorem}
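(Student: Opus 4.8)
The plan is to establish the two halves of internal completeness (Definition~\ref{def:ic-HA}) separately: first the fibrewise Boolean-algebra structure, then the indexed adjoints together with the Beck--Chevalley condition, reading both off from the concrete description of $\check\PSet\comp\hat P X$ afforded by Proposition~\ref{prop:p-exp}.

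For the algebra structure I would argue as follows. By Proposition~\ref{prop:p-exp} we have $\check\PSet\comp\hat P X\iso\hat\geoU_\star(2^X)$, and $2$ is the constant two-element internal Boolean algebra $1+1$ in $[\hat\BE,\Set]$; since $(\argument)^X$ is a right adjoint it preserves finite products, so $2^X$ is again an internal Boolean algebra, and as $\hat\geoU_\star$ is likewise a right adjoint preserving finite products it transports this structure to $[\BW,\Set]$. Concretely, under the isomorphism $\Phi_w$ of~\eqref{eq:Phi-iso} each fibre $(\check\PSet\comp\hat P X)w=\check\PSet(\hat P X w)$ is a genuine powerset, the transition maps are preimages along the hiding operation $\oname{hide}_\rho$ of~\eqref{eq:P-cov-act}, and preimages are complete Boolean homomorphisms; hence the Boolean operations are computed fibrewise as set-theoretic union, intersection and complement, exactly as claimed, and are natural in $w$. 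This disposes of the Boolean part and of the last sentence of the statement.

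For completeness proper I would verify the two displayed formulas directly. Writing $A=\check\PSet\comp\hat P X$ and fixing $f\c I\to J$, the obligations are (i) that $\bigor_f\phi$ and $\bigand_f\phi$ are well-defined morphisms of $[\BW,\Set]$, (ii) the adjunctions $\bigor_f\dashv(\argument)\comp f\dashv\bigand_f$ for fibrewise inclusion, and (iii) Beck--Chevalley. The adjointness is the most transparent: unfolding $\bigor_f\phi\leq\psi$ and $\phi\leq\psi\comp f$ fibrewise, for the forward implication one takes a class $(\rho\c w\to w',x)_\sim\in(\bigor_f\phi)_w(j)$ with witness $(\eps,i)$, rewrites $\psi(J(\hat\geoU\eps\comp\rho)(j))$ via naturality of $\psi$ as a preimage under $\oname{hide}$, and collapses the result back to $(\rho,x)_\sim$ using the generating step $(\rho,x)\preceq(\hat\geoU\eps\comp\rho,(X\eps)(x))$ of the coend~\eqref{eq:hid}; for the converse one feeds a representative $(\sigma\c w\to w',y)_\sim\in\phi_w(i_0)$ into the join formula at $f_w(i_0)$ with the identity initialization and $i=I(\sigma)(i_0)$, invoking naturality of $\phi$ and of $f$. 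The meet formula is handled dually, and Beck--Chevalley follows by unwinding the existential (resp.\ universal) quantifier against a pullback square, limits in $[\BW,\Set]$ being computed pointwise.

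The main obstacle is obligation (i), specifically that the membership conditions defining $(\bigor_f\phi)_w(j)$ and $(\bigand_f\phi)_w(j)$ do not depend on the chosen representative $(\rho\c w\to w',x)$ of a class in $\hat P X w$. Here the coend quotient $\sim$ of~\eqref{eq:hid} must be taken seriously: I would reduce, via Proposition~\ref{prop:diamond}, to single generating steps $(\rho,x)\preceq(\hat\geoU\delta\comp\rho,(X\delta)(x))$ for $\delta\c w'\into w''$, and then transport a witness $(\eps,i)$ along $\delta$ using the promotion of partial initializations supplied by the local independent coproducts of $\hat\BE$, pushing $i$ forward by functoriality of $I$. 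The diamond property guarantees that the two directions of $\preceq$ can always be reconciled, so that the existential condition is stable both upward and downward along $\sim$ and the universal condition dually; naturality in $w$ is then a further application of~\eqref{eq:P-cov-act}. Once (i) is secured, (ii) and (iii) are routine, and an appeal to Definition~\ref{def:ic-HA} in its Boolean variant---indexed meets being forced by the indexed joins---completes the proof.
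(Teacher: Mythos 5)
Your proposal is correct in outline, but it takes a genuinely different route from the paper. The paper does not verify the displayed formulas directly: it observes that $[\hat\BE,\Set]$ is a De~Morgan topos, so $2$ is a retract of $\Omega$ and hence an \emph{internally complete} Boolean algebra there; it then transports the whole internally complete structure along the composite adjunction $\hat\geoU^\star(\argument)\times X\dashv\hat\geoU_\star((\argument)^X)$ (whose left adjoint preserves pullbacks), citing a general result of Johnstone [B2.3.7] which hands over indexed joins, their adjointness and Beck--Chevalley for free; the explicit formulas in the statement are then obtained by mechanically unwinding the isomorphisms $\Psi$ and $\Phi$ of Proposition~\ref{prop:p-exp}, and the only thing checked by hand is that the induced binary joins and meets coincide with set-theoretic unions and intersections. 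Your plan instead posits the set-theoretic Boolean structure from the outset and verifies well-definedness, the order-adjunctions and Beck--Chevalley directly on the coend representatives. This is more elementary and avoids the consistency check between the transported operations and the set-theoretic ones, but it shifts all the burden onto the three obligations you list; you are right that independence of the membership condition from the chosen representative of $(\rho,x)_\sim$ is the delicate point, and your strategy of reducing to generating steps of $\preceq$ via Proposition~\ref{prop:diamond} and promoting witnesses $(\eps,i)$ along local independent coproducts in $\hat\BE$ is the correct way to discharge it (the paper's transport argument makes this automatic, since the formula arises as the unfolding of an already well-defined composite). Note also that your opening observation that $2$ is an internal Boolean algebra and that right adjoints preserve finite products only yields the \emph{finite} operations; the paper additionally needs the De~Morgan retraction $2\triangleleft\Omega$ to get internal completeness of $2$, whereas your direct verification of the indexed joins renders that step unnecessary. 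Both routes are viable; the paper's buys brevity at the price of the cited general machinery, yours buys self-containedness at the price of the representative-independence and Beck--Chevalley computations, which you should carry out in full rather than declare routine.
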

By Theorem~\ref{thm:int-BA}, we obtain a hyperdoctrine $[\argument, \check\PSet\comp (\hat P\hat H)]$, 
which provides us with a model of (classical) higher order logic in $[\BW,\Set]$.
In particular, this allows us to interpret the language from
Fig.~\ref{fig:logic.lang} over $[\BW,\Set]$ excluding the separation logic constructs, 
in such a way that 
\begin{align*}
\sem{\G\ctx\phi\c\pro}\c\ul{\G}\to\check\PSet\comp (\hat P\hat H),&&
\sem{\G\ctx\phi\c\pred A}\c\ul{\G}\times\ul{A}\to\check\PSet\comp (\hat P\hat H)
\end{align*}
where $\ul{\G}=\ul{A_1}\times\ldots\times\ul{A_n}$ for $\G=(x_1\c A_1,\ldots,x_n\c A_n)$
where, additionally to the standard clauses, $\ul{\pred A} = \check\PSet\comp\hat P(\geoU^\star \ul{A}\times\hat H)$.
The latter interpretation of predicate types $\pred A$ is justified by the natural isomorphism:
\begin{align*}
(\check\PSet\comp (\hat P\hat H))^X 
\iso (\hat\geoU_\star(2^{\hat H}))^X
\iso \hat\geoU_\star((2^{\hat H})^{\hat\geoU^\star X})
\iso \check\PSet\comp (\hat P(\hat\geoU^\star X\times\hat H)).
\end{align*}
Here, the first and the last transitions are by $\Phi$ from
Proposition~\ref{prop:p-exp} and the middle one is due to the fact that clearly
both $(\hat\geoU_\star(\argument))^X\vdash\hat\geoU^\star (X\times(\argument))$
and $\hat\geoU_\star((\argument)^{\hat\geoU^\star X})\vdash\hat\geoU^\star
(X\times(\argument))$. %

Since every set $\hat Hw$ models a heaplet in the standard sense~\cite{PymOHearnEtAl04}, we can 
equip~$\hat Hw$ with a standard pointer model structure. 
\begin{proposition}
For every $w\in |\BW|$, $(\hat H w,\{(\emptyset\subseteq w,\star)\},\cdot\,,\leq)$ 
is an ordered pcm where for every $w\in |\BW|$, $\hat Hw$ is partially ordered as follows:
\begin{flalign*}
&&(w_1\subseteq w,\Hplb(w_1\subseteq w_2,w)\eta\in\Hplb(w_1,w))\leq (w_2\subseteq w\comma\eta\in\Hplb(w_2,w))&&
(w_1\subseteq w_2)
\end{flalign*}
and for $w_1\subseteq w$, $w_2\subseteq w$ and $\eta_1\in\Hplb(w_1,w)$, 
$\eta_2\in\Hplb(w_2,w)$, $(w_1\subseteq w,\eta_1)\cdot (w_2\subseteq w,\eta_2)$ 
equals $(w_1\cup w_2,\eta_1\cup\eta_2)$ if $w_1\cap w_2=\emptyset$, and otherwise
undefined.
\end{proposition}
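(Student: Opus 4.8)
The plan is to recognize the stated order as the standard \emph{subheaplet} order and the stated multiplication as disjoint heaplet union, and then to discharge the five pcm axioms by elementary bookkeeping on domains and restrictions. First I would record the explicit reading of the order: writing $\eta|_{w_1}$ for the restriction $\Hplb(w_1\subseteq w_2,w)\,\eta$ of a heaplet $\eta\in\Hplb(w_2,w)$ along an inclusion $w_1\subseteq w_2$, the clause in the statement says $(w_1,\eta_1)\leq(w_2,\eta_2)$ iff $w_1\subseteq w_2$ and $\eta_1=\eta_2|_{w_1}$. That this is a preorder (in fact a partial order) follows purely from functoriality of $\Hplb(\argument,w)$: reflexivity is $\eta=\Hplb(\id,w)\,\eta$, transitivity is the composition law $\Hplb(w_1\subseteq w_2,w)\comp\Hplb(w_2\subseteq w_3,w)=\Hplb(w_1\subseteq w_3,w)$ for nested inclusions, and antisymmetry is immediate since mutual inclusion forces $w_1=w_2$ and hence $\eta_1=\eta_2$.

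Next I would treat commutativity (axiom~1) and associativity (axiom~2) together. Both multiplicands live over the same ambient $w$, so the only data are the domains $w_i\subseteq w$ and the heaplets $\eta_i$; the product is defined exactly when the relevant domains are disjoint, and its value is the union of the $\eta_i$ viewed as partial functions. Commutativity is then immediate, since disjointness of $w_1,w_2$ is a symmetric condition and $\eta_1\cup\eta_2=\eta_2\cup\eta_1$. For associativity I would observe that both bracketings are defined precisely when $w_1,w_2,w_3$ are pairwise disjoint, and in that case both evaluate to $(w_1\cup w_2\cup w_3,\eta_1\cup\eta_2\cup\eta_3)$; the Kleene equality $\simeq$ thus holds because the two definedness conditions coincide.

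The unit axioms (3 and~4) are trivial. The sole candidate unit $(\emptyset\subseteq w,\star)$ has empty domain, hence is disjoint from every $w'\subseteq w$, so $(w',\eta)\cdot(\emptyset,\star)$ is always defined and equals $(w'\cup\emptyset,\eta\cup\star)=(w',\eta)$; since $\CE$ is a singleton this simultaneously yields existence of a unit for each element and the absorption condition.

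The only axiom demanding a little care is monotonicity (axiom~5). Suppose $(w_1',\eta_1')\leq(w_1,\eta_1)$ and $(w_2',\eta_2')\leq(w_2,\eta_2)$ with $(w_1,\eta_1)\cdot(w_2,\eta_2)$ defined, i.e.\ $w_1\cap w_2=\emptyset$. From $w_1'\subseteq w_1$ and $w_2'\subseteq w_2$ I obtain $w_1'\cap w_2'=\emptyset$, so $(w_1',\eta_1')\cdot(w_2',\eta_2')$ is defined; it remains to check that $\eta_1'\cup\eta_2'$ is the restriction of $\eta_1\cup\eta_2$ to $w_1'\cup w_2'$. This is the one point where disjointness is genuinely used: on $w_1'$ the restriction of $\eta_1\cup\eta_2$ agrees with $\eta_1|_{w_1'}=\eta_1'$ because $w_1'$ misses $w_2$, and symmetrically on $w_2'$; as $w_1'$ and $w_2'$ are themselves disjoint these agree to $\eta_1'\cup\eta_2'$, which is exactly the claimed restriction, giving $(w_1',\eta_1')\cdot(w_2',\eta_2')\leq(w_1,\eta_1)\cdot(w_2,\eta_2)$. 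I expect this gluing step, together with the matching of definedness conditions in associativity, to be the only places needing more than a one-line verification; everything else is immediate from functoriality of $\Hplb$ and the elementary algebra of disjoint unions of partial functions.
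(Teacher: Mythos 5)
Your proof is correct. The paper states this proposition without proof, treating the verification as routine, and your argument supplies exactly the intended bookkeeping: the subheaplet order from functoriality of $\Hplb(\argument,w)$, commutativity and associativity from the symmetry and pairwise-disjointness of the definedness conditions, the empty heaplet as the unique unit, and the monotonicity axiom via the observation that disjointness of $w_1,w_2$ makes the restriction of $\eta_1\cup\eta_2$ to $w_1'\cup w_2'$ split as $\eta_1'\cup\eta_2'$.
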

As indicated in Section~\ref{sec:bi}, we automatically obtain a BI-algebra 
structure over the set of all subsets of $\hat H w$. The same strategy does not
apply to $\hat P\hat H w$, roughly because we cannot predict mutual arrangement 
of hidden partitions of two heaplets wrt to each other, for we do not have a global
reference space for pointers as contrasted to the standard separation logic setting.
We thus define a separating conjunction operator directly on every 
$\check\PSet (\hat P \hat H w)$ as follows:
\begin{align*}
	\phi \star_w \psi = \{(\rho\c w \to w',&\; (w_1 \uplus w_2 \subseteq w', 
		\eta \in\Hplb(w_1 \uplus w_2, w')))_\sim\mid\\*  
		&\qquad(\rho, (w_1 \subseteq w', \Hplb(w_1 \subseteq w_1 \uplus w_2, w')\eta))_\sim
		\in\phi, \\*
		&\qquad (\rho, (w_2 \subseteq w', \Hplb(w_2 \subseteq w_1 \uplus w_2, w')\eta))_\sim
		\in\psi\}.
\end{align*}

\begin{lemma}\label{lem:sep-props}
The operator $\star_w$ on $\check\PSet (\hat P \hat H w)$
satisfies the following properties.
	\begin{enumerate}
		\item\label{it:sep-prop1} $\star_w$ is natural in $w$.
		\item\label{it:sep-prop2} $\star_w$ is associative and commutative.
		\item\label{it:sep-prop3} $(\rho\c w \to w', (w''\subseteq w', 
		\eta \in\Hplb(w'', w')))_\sim\in\phi \star_w \psi$ if and only if
		there exist~$w_1$ and~$w_2$ such that $w_1 \uplus w_2 = w''$,
		$(\rho, (w_1 \subseteq w', \Hplb(w_1 \subseteq w'', w')\eta))_\sim
		\in\phi$ and
		$(\rho, (w_2 \subseteq w', \Hplb(w_2 \subseteq w'', w')\eta))_\sim
		\in\psi$.		
	\end{enumerate}
\end{lemma}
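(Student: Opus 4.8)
The plan is to prove the membership characterization~\ref{it:sep-prop3} first, since it gives a representative-independent description of $\phi\star_w\psi$ from which naturality~\ref{it:sep-prop1} and the monoid laws~\ref{it:sep-prop2} follow with little effort. The ``if'' direction of~\ref{it:sep-prop3} is immediate: given a splitting $w''=w_1\uplus w_2$ with the two restrictions of $\eta$ lying in $\phi$ and $\psi$, the class $(\rho,(w''\subseteq w',\eta))_\sim$ matches the defining shape of $\phi\star_w\psi$ verbatim. The ``only if'' direction carries the real content: it amounts to showing that the splitting condition is invariant under the equivalence $\sim$, so that it may be tested on the arbitrary representative $(\rho,(w''\subseteq w',\eta))$ rather than on the specific split representative used in the definition of $\star_w$.

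For this I would reduce $\sim$-invariance to invariance under the generating preorder $\preceq$ and then invoke the diamond property (Proposition~\ref{prop:diamond}): since any two equivalent classes share a common $\preceq$-upper bound, it suffices to transport a witnessing splitting both up and down a single $\preceq$-step $x'=\hat H\epsilon(x)$ along some $\epsilon\c w_1^+\into w_2^+$ with underlying injection $\rho_\epsilon=\hat\geoU\epsilon$. Going up, the accessible domain grows from $w''$ to $\rho_\epsilon[w'']\cup v$, where $v$ are the cells freshly initialized by $\epsilon$; I would extend a splitting $w''=w_a\uplus w_b$ by assigning all of $v$ to one side, say the $\phi$-side, which makes the $\phi$-part exactly $\hat H\epsilon$ applied to the $w_a$-restriction of $x$ (hence $\preceq$-related, hence in $\phi$ by $\sim$-closure), while the $\psi$-part equals $\hat H\epsilon'$ of the $w_b$-restriction for the injection $\rho_\epsilon$ carrying the \emph{empty} initialization $\epsilon'$ (again $\preceq$-related, hence in $\psi$). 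Going down, a splitting of $\rho_\epsilon[w'']\cup v$ is intersected with $\rho_\epsilon[w'']$ and pulled back along $\rho_\epsilon$. The hard part will be this downward transport: the fresh cells $v$ may be distributed across both sides, so neither projected part is literally an $\hat H\epsilon$-image, and one must argue that the discarded $v$-cells are unreachable from the retained data and are therefore hidden away by $\sim$; here the explicit formula for $\hat H\epsilon$ and the reachability reading of the coend in~\eqref{eq:hid} do the work.

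Granting~\ref{it:sep-prop3}, naturality~\ref{it:sep-prop1} is a direct unfolding. The covariant action of $\rho_0\c w\to\bar w$ on $\check\PSet\comp\hat P\hat H$ is the preimage under $\oname{hide}_{\rho_0}$, which merely precomposes the outer injection and leaves the heaplet $\eta$ untouched. Since the splitting in~\ref{it:sep-prop3} concerns only $\eta$, I would check that $(\sigma,(w_1\uplus w_2\subseteq w',\eta))_\sim$ lies in the image of $\phi\star_w\psi$ iff each restriction, carried along $\sigma\comp\rho_0$, lies in $\phi$ respectively $\psi$, iff each restriction along $\sigma$ lies in the preimages of $\phi$ and $\psi$; this is exactly membership in $\oname{hide}_{\rho_0}^{\mone}\phi\,\star_{\bar w}\,\oname{hide}_{\rho_0}^{\mone}\psi$.

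Finally, the monoid laws~\ref{it:sep-prop2} are read off from~\ref{it:sep-prop3}. Commutativity is immediate from the symmetry of the defining clause under swapping $\phi\leftrightarrow\psi$ and $w_1\leftrightarrow w_2$. For associativity I would apply~\ref{it:sep-prop3} twice to each side, obtaining in both cases the condition that $w''$ admits a three-way partition $w_1\uplus w_2\uplus w_3$ with the corresponding restrictions of $\eta$ in $\phi$, $\psi$ and $\chi$; the nested restrictions collapse to single restrictions because the contravariant $\Hplb$-action is functorial (projections compose), and the flattened condition is manifestly symmetric in the grouping, so $(\phi\star_w\psi)\star_w\chi=\phi\star_w(\psi\star_w\chi)$.
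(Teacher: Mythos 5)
Your overall architecture is sound and, as far as I can tell, is the only viable route: the paper itself omits a proof of this lemma, but its proof of Lemma~\ref{lem:cl-preserve} relies on item~\emph{\ref{it:sep-prop3}} in exactly the representative-independent form you establish, so proving~\emph{\ref{it:sep-prop3}} first by showing the splitting condition is $\sim$-invariant (reducing to $\preceq$-invariance via Proposition~\ref{prop:diamond}) and then reading off~\emph{\ref{it:sep-prop1}} and~\emph{\ref{it:sep-prop2}} is the right plan. The upward transport, the derivation of naturality from the fact that $\check\PSet\comp\hat P\hat H$ acts on $\rho_0$ by $\oname{hide}_{\rho_0}^{\mone}$, and the two-fold application of~\emph{\ref{it:sep-prop3}} for associativity are all correct.

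The one place you go astray is the downward transport, which you flag as the hard part and propose to settle by a reachability argument. No reachability is needed, and invoking it would be both informal (the ``reachable private part'' reading of $\sim$ is only an intuition offered in an example, not a characterization one can compute with) and off-target: the cells of $v$ assigned to, say, the $\phi$-side are not discarded into an unreachable region --- they are absorbed into the initialization. Concretely, let $\epsilon=(\rho_\epsilon,\eta_v)\c w^+_1\into w^+_2$ with $v=\dom\eta_v$, let $x=(w''\subseteq w^+_1,\eta)$, and suppose $\hat H\epsilon(x)$ splits along $u_1\uplus u_2=\rho_\epsilon[w'']\cup v$. Put $w_i=\rho_\epsilon^{\mone}[u_i\cap\rho_\epsilon[w'']]$ and $\epsilon_i=(\rho_\epsilon,\eta_v|_{u_i\cap v})$, which is a legitimate morphism of $\hat\BE$ precisely because initializations are partial. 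The explicit formula for $\hat H$ then gives that $\hat H\epsilon_i$ applied to the $w_i$-restriction of $x$ equals the $u_i$-restriction of $\hat H\epsilon(x)$ on the nose, so the two are related by a single $\preceq$-step over the same underlying injection $\rho_\epsilon$, hence $\sim$-equivalent, and membership in $\phi$ (resp.\ $\psi$) transfers because $\phi$ and $\psi$ are sets of $\sim$-classes. This is exactly the point where the move from $\BE$ to $\hat\BE$ pays off; your worry that ``neither projected part is literally an $\hat H\epsilon$-image'' dissolves once the heaplet part of the initialization is allowed to vary. The same device (taking $\eta_v$ in full on one side and empty on the other) is what makes your upward transport work, so stating it once covers both directions and makes the whole of~\emph{\ref{it:sep-prop3}} a short computation rather than an appeal to hiding heuristics.
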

\noindent
Property~\textit{\ref{it:sep-prop3}}. specifically tells us that any representative of an equivalence 
class contained in a separating conjunction can be split in such a way that the 
respective pieces belong to the arguments of the separating conjunction.

\begin{remark}\label{rem:nempty}
The only candidate for the unit of the separating conjunction $\star_w$ would be the emptiness predicate
$\oname{empty}_w\c 1\to\check\PSet (\hat P \hat H w)$, identifying precisely
the empty heaplets. However, $\oname{empty}_w$ is not natural in $w$. In fact,
it follows by Yoneda lemma that there are exactly two natural transformations $1
\to\check\PSet\comp\hat P\hat H$, which are the total truth and the total
false, none of which is a unit for $\star_w$.
\end{remark}
Remark~\ref{rem:nempty} provides a formal argument why we cannot interpret 
classical separation logic over $\check\PSet\comp\hat P\hat H$. We thus 
proceed to identify for every $w$ a subset of $\check\PSet (\hat P \hat Hw)$, for which the total truth predicate becomes 
the unit of the separating conjunction. Concretely, let $\TA$ be the subfunctor
of $\check\PSet\comp\hat P\hat H$ identified by the following \emph{upward 
closure condition:} $\phi\in\TA w$ if
\begin{align*}
(\rho,\eta)_\sim\in\phi,~ \eta \leq \eta'\text{\qquad imply\qquad} (\rho,\eta')_\sim\in\phi.
\end{align*}
For every $w$, let $\ucl_w\c\PSet(\hat P \hat H w)\to\TA w$ send a set 
$\phi\subseteq\hat P \hat H w$ to the smallest  upward closed subset of $\hat P \hat H w$ 
containing $\phi$.
\begin{lemma}\label{lem:B-cHA}
$\TA$ is an internal complete sublattice of $\check\PSet\comp\hat P\hat H$, i.e.\ the inclusion 
$\iota\c\TA\ito\check\PSet\comp\hat P\hat H$ preserves all meets and all joins. 
This canonically equips $\TA$ with an internally complete Heyting algebra structure.
\end{lemma}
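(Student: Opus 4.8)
The plan is to first pin down the pointwise structure and then lift it to the internal (indexed) operations. At a fixed world $w$, an element $\phi\in\TA w$ is exactly an \emph{up-set} of the poset $(\hat P\hat H w,\leq)$, where $\leq$ is induced by the sub-heaplet order on representatives $(\rho,\eta)_\sim$. Since arbitrary unions and arbitrary intersections of up-sets are again up-sets, $\TA w$ is closed under both operations inside the complete Boolean algebra $\check\PSet(\hat P\hat H w)$; in particular the top $\hat P\hat H w$ and the bottom $\emptyset$ lie in $\TA w$, and all binary and arbitrary meets and joins of up-sets are computed exactly as in the ambient powerset. Thus, pointwise, the inclusion $\iota$ already preserves meets and joins. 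What remains, and what carries the actual content of the lemma, is that the \emph{internal} indexed operations $\bigor_f$ and $\bigand_f$ of Theorem~\ref{thm:int-BA} also restrict to $\TA$.

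The key auxiliary fact I would establish is that each functorial action $\hat H\eps$ is \emph{monotone} with respect to the sub-heaplet order: enlarging a heaplet and then extending it along $\eps$ yields a larger heaplet, essentially because $\hat H\eps$ acts by transporting values along $\rho$ on the old domain and by the fixed initialization on the newly filled cells. Granting this, I would check that $\bigor_f\phi$ and $\bigand_f\phi$ are $\TA$-valued whenever $\phi$ is. Concretely, suppose $(\rho,x)_\sim\in(\bigor_f\phi)_w(j)$ with witnesses $\eps\c w'\into w''$ and $i\in Iw''$, and let $x\leq x'$; by monotonicity $(\hat H\eps)(x)\leq(\hat H\eps)(x')$, so upward closure of $\phi_{w''}(i)$ places $(\id_{w''},(\hat H\eps)(x'))_\sim$ in $\phi_{w''}(i)$, and the \emph{same} witnesses $\eps,i$ show $(\rho,x')_\sim\in(\bigor_f\phi)_w(j)$. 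The universal case $\bigand_f$ is dual: any witness pair $\eps,i$ forces $(\id_{w''},(\hat H\eps)(x))_\sim\in\phi_{w''}(i)$, hence $(\id_{w''},(\hat H\eps)(x'))_\sim\in\phi_{w''}(i)$ by monotonicity and upward closure, so $(\rho,x')_\sim$ again satisfies the defining condition.

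With the indexed operations shown to land in $\TA$, the rest is inheritance. The restricted $\bigor_f$ and $\bigand_f$ serve as the internal indexed joins and meets of $\TA$; the adjunction $\bigor_f\dashv(\argument)\comp f$, the dual adjunction $(\argument)\comp f\dashv\bigand_f$, and the Beck--Chevalley squares all transfer from $\check\PSet\comp\hat P\hat H$ because $\iota$ is an order-reflecting inclusion and reindexing visibly preserves up-sets. Hence $\iota$ preserves all meets and all joins, proving the first assertion. For the Heyting structure, I would observe that $\TA$ inherits the frame distributivity law---binary meet distributing over arbitrary join---from the complete Boolean algebra $\check\PSet\comp\hat P\hat H$, since both operations are computed identically in $\TA$. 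Being an internal frame, $\TA$ therefore carries a canonical internal Heyting implication, the right adjoint to binary meet obtained internally as the largest up-set whose meet with the antecedent is below the consequent; together with the indexed joins this is exactly an internally complete Heyting algebra structure. Note that this implication is \emph{not} the ambient Boolean one (whose complement would leave $\TA$), which is precisely why the resulting logic is non-classical.

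The main obstacle I anticipate is the interaction between the order $\leq$ and the coend equivalence $\sim$: one must verify both that \emph{upward closed} is well-defined on $\sim$-classes and that the across-worlds quantifiers in the formulas for $\bigor_f,\bigand_f$ respect it, for which the diamond property of Proposition~\ref{prop:diamond} is the natural tool. The monotonicity of $\hat H\eps$ is routine but must be established with care, as it is the single fact on which the whole restriction argument rests; and turning the pointwise frame structure into a genuinely \emph{internal} Heyting implication relies on the internal adjoint functor theorem for posets, which is available once internal completeness is in hand.
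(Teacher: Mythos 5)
Your proposal is correct and matches the paper's proof in its essential content: the decisive step in both is that monotonicity of $\hat H\eps$ together with upward closure of the values $\phi_{w''}(i)$ ensures that the explicit formulas for $\bigor_f$ and $\bigand_f$ from Theorem~\ref{thm:int-BA} already produce up-sets, so the ambient indexed operations restrict to $\TA$ and $\iota$ preserves them. The only difference is organizational: the paper first transfers the complete Heyting structure along the reflection $\ucl\dashv\iota$ (defining joins and meets on $\TA$ by applying $\ucl$ to the ambient ones) and then checks that these applications of $\ucl$ are vacuous, whereas you argue the restriction directly and inherit the adjunctions, Beck--Chevalley, and the frame law afterwards.
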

\begin{proof}[Sketch]
The key idea is to use the retraction $(\iota,\ucl)$.
The requisite structure is then transferred from $\check\PSet\comp\hat P\hat H$ 
to $\TA$ along it. The Heyting implication for $\TA$ is obtained using the 
standard formula ${(\phi\To\psi)} =\bigor\{\xi\mid \phi\land\xi\le\psi\}$
interpreted in the internal language.
\qed\end{proof}

\begin{lemma}\label{lem:cl-preserve}
Separating conjunction preserves upward closure:
for $\phi, \psi \in\TA w$, $\phi \star_w \psi = \ucl_w (\phi \star_w \psi)$.
\end{lemma}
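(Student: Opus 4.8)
The plan is to reduce the claimed equality to a single inclusion and then verify it by a direct splitting argument. Since $\ucl_w$ returns the least upward closed subset of $\hat P \hat H w$ containing its argument, the inclusion $\phi \star_w \psi \subseteq \ucl_w(\phi \star_w \psi)$ holds by construction, with equality iff $\phi \star_w \psi$ is already upward closed. So I would first record this reduction, and then prove that $\phi \star_w \psi$ is upward closed whenever $\phi,\psi\in\TA w$. Note that by Lemma~\ref{lem:sep-props}(\textit{\ref{it:sep-prop2}}) the operator is commutative, so it will suffice to exploit the upward closure of one of the two arguments.

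To establish upward closure I would fix a representative $(\rho\c w\to w',(w''\subseteq w',\eta))_\sim\in\phi\star_w\psi$ together with an extension $\eta\leq\eta'$ in the pcm order of the preceding proposition, where $\eta'$ lives over a domain $w'''\supseteq w''$ inside $w'$ and restricts to $\eta$ on $w''$; the goal is $(\rho,\eta')_\sim\in\phi\star_w\psi$. Applying Lemma~\ref{lem:sep-props}(\textit{\ref{it:sep-prop3}}) to the original element yields a splitting $w''=w_1\uplus w_2$ such that the induced restrictions of $\eta$ to $w_1$ and to $w_2$ represent classes lying in $\phi$ and $\psi$ respectively. The extra cells $e=w'''\smin w''$ created by the extension then have to be distributed over the two summands, and I would simply assign all of them to one side, say the $\phi$-side, giving the split $w'''=(w_1\uplus e)\uplus w_2$.

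It then remains to check that this split witnesses membership via the converse direction of Lemma~\ref{lem:sep-props}(\textit{\ref{it:sep-prop3}}). On the $\psi$-side the restriction of $\eta'$ to $w_2$ coincides with the restriction of $\eta$ to $w_2$, because $w_2\subseteq w''$ and $\eta'$ extends $\eta$, so that class still lies in $\psi$. On the $\phi$-side the restriction of $\eta'$ to $w_1\uplus e$ restricts further to $w_1$ as the restriction of $\eta$ to $w_1$; since restricting a heaplet to a smaller domain produces a smaller element in the pcm order, the $w_1$-restriction of $\eta$ sits below the $(w_1\uplus e)$-restriction of $\eta'$. As the former represents a class in $\phi$ and $\phi\in\TA w$ is upward closed, the $(w_1\uplus e)$-restriction of $\eta'$ also represents a class in $\phi$. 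Lemma~\ref{lem:sep-props}(\textit{\ref{it:sep-prop3}}) then gives $(\rho,\eta')_\sim\in\phi\star_w\psi$, completing the upward closure and hence the equality.

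The step I expect to be the main obstacle is not the combinatorics of distributing $e$, but the bookkeeping with the equivalence relation $\sim$: I must ensure that choosing a representative of $(\rho,\eta)_\sim$, extending it to $\eta'$, and re-splitting are all compatible with $\sim$, so that the membership conclusions genuinely concern equivalence classes rather than individual representatives, and that the order $\leq$ descends coherently to $\hat P\hat H w$. The representative-level characterization in Lemma~\ref{lem:sep-props}(\textit{\ref{it:sep-prop3}}) is what keeps this manageable, since it lets me argue about chosen representatives while concluding about classes.
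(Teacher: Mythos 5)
Your proof is correct and follows essentially the same route as the paper's: both reduce the claim to showing that $\phi\star_w\psi$ is upward closed, split a given element via Lemma~\ref{lem:sep-props}(\textit{\ref{it:sep-prop3}}), assign all the extra cells of the extension to one summand, and use the upward closure of the corresponding argument ($\TA w$-membership) to conclude via the converse direction of the same lemma. The only cosmetic difference is that you absorb the new cells into the $\phi$-component while the paper puts them into the $\psi$-component, which is immaterial by commutativity.
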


\begin{lemma}\label{lem:B-BI}
$\TA$ is a BI-algebra: $\star_w$ is obtained by restriction from $\check\PSet(\hat P\hat Hw)$ by 
Lemma~\ref{lem:cl-preserve}, $\hat P\hat Hw$ is the unit for it and
\begin{align*}
\phi \sepimp_w \psi =&\; \{(\rho,\eta)_\sim\in\TA w\mid \forall \rho'\c w \to w', 
\eta_1, \eta_2 \in\hat H w', 
\eta_1 \cdot\eta_2 \text{ defined } \land\\&\qquad\qquad\quad
(\rho, \eta)\sim (\rho',\eta_1)\land (\rho', \eta_2)_\sim\in\phi\impl
(\rho', \eta_1\cdot\eta_2)_\sim\in\psi \}.
\end{align*}
\end{lemma}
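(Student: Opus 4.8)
The plan is to verify the two ingredients of a BI-algebra separately: first that $(\TA w,\star_w,\hat P\hat Hw)$ is an internal commutative monoid, and then that $\phi\sepimp_w(\argument)$ is right order-adjoint to $(\argument)\star_w\phi$, i.e.\ that $\chi\star_w\phi\subseteq\psi$ iff $\chi\subseteq\phi\sepimp_w\psi$ for all $\chi,\phi,\psi\in\TA w$ (where $\subseteq$ is the order on $\check\PSet$). Commutativity and associativity of $\star_w$ are already recorded in Lemma~\ref{lem:sep-props}(\ref{it:sep-prop2}), and Lemma~\ref{lem:cl-preserve} shows that $\phi\star_w\psi$ is again upward closed, so $\star_w$ restricts to an operation on $\TA w$; naturality in $w$ is Lemma~\ref{lem:sep-props}(\ref{it:sep-prop1}).

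The only monoid law that needs work is the unit law, and this is precisely where the upward-closure condition defining $\TA$ enters (cf.\ Remark~\ref{rem:nempty}). I would compute $\phi\star_w\hat P\hat Hw$ using the splitting criterion of Lemma~\ref{lem:sep-props}(\ref{it:sep-prop3}): an element $(\rho\c w\to w',(w''\subseteq w',\eta))_\sim$ lies in $\phi\star_w\hat P\hat Hw$ iff $w''$ decomposes as $w_1\uplus w_2$ with the $w_1$-restriction of $\eta$ lying in $\phi$, the $w_2$-part being automatically in the top element $\hat P\hat Hw$. Taking $w_1=w''$, $w_2=\emptyset$ gives $\phi\subseteq\phi\star_w\hat P\hat Hw$. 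Conversely, the $w_1$-restriction is $\leq(w''\subseteq w',\eta)$ in the pcm order, so upward closure of $\phi$ forces $(\rho,(w'',\eta))_\sim\in\phi$; hence $\phi\star_w\hat P\hat Hw\subseteq\phi$.

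For the adjunction I would argue both inclusions directly from Lemma~\ref{lem:sep-props}(\ref{it:sep-prop3}). For the implication from right to left, given $(\rho,(w''\subseteq w',\eta))_\sim\in\chi\star_w\phi$, I split it into disjoint restrictions $\eta_1,\eta_2\in\hat Hw'$ with $(\rho,\eta_1)_\sim\in\chi\subseteq\phi\sepimp_w\psi$ and $(\rho,\eta_2)_\sim\in\phi$; instantiating the defining clause of $\sepimp_w$ at $\rho'=\rho$ and these $\eta_1,\eta_2$, whose product is the original heaplet, yields $(\rho,\eta_1\cdot\eta_2)_\sim\in\psi$. For the converse, I take $(\rho,\eta)_\sim\in\chi$ and any wand-test $(\rho'\c w\to w'',\eta_1,\eta_2)$ with $\eta_1\cdot\eta_2$ defined, $(\rho,\eta)\sim(\rho',\eta_1)$ and $(\rho',\eta_2)_\sim\in\phi$; since membership in $\chi$ is by $\sim$-classes we get $(\rho',\eta_1)_\sim\in\chi$, and then Lemma~\ref{lem:sep-props}(\ref{it:sep-prop3}) places $(\rho',\eta_1\cdot\eta_2)_\sim$ in $\chi\star_w\phi\subseteq\psi$, which is exactly the required wand condition.

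The main obstacle is to confirm that $\phi\sepimp_w\psi$ is itself a legitimate element of $\TA w$, i.e.\ upward closed, so that it can serve as the adjoint value. This is the one place where the interplay of the heaplet order $\leq$, the partial multiplication $\cdot$, and the hiding equivalence $\sim$ must be controlled: given $(\rho,\eta)_\sim\in\phi\sepimp_w\psi$ and $\eta\leq\eta''$, and a representative $(\rho',\eta_1)\sim(\rho,\eta'')$, I would use the diamond property (Proposition~\ref{prop:diamond}) together with monotonicity of the extension maps $\hat H\eps$ to descend to a compatible smaller representative $(\rho',\eta_1^{-})\sim(\rho,\eta)$ with $\eta_1^{-}\leq\eta_1$; applying the wand property of $(\rho,\eta)$ to $\eta_1^{-},\eta_2$ and then upward closure of $\psi$ together with $\eta_1^{-}\cdot\eta_2\leq\eta_1\cdot\eta_2$ delivers the claim. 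Alternatively, since $\TA w$ is a complete lattice (Lemma~\ref{lem:B-cHA}) and $(\argument)\star_w\phi$ visibly preserves unions, the adjoint exists abstractly and automatically lands in $\TA w$, and the explicit formula is then identified with it via the two inclusions above and uniqueness of adjoints. Finally, naturality of $\sepimp_w$ in $w$, the unit being natural since the top element is preserved by the functorial action, upgrades the fibrewise structure to an internal BI-algebra by the same kind of computation as for $\star_w$.
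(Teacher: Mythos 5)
Your proposal is correct and lands on the same key technical point as the paper, but it distributes the work differently. The paper's proof is top-down: it invokes the fact that BI-algebras form a variety to reduce the internal statement to naturality plus a fibrewise check, then observes that $(\argument)\star_w\phi$ preserves arbitrary joins so that the right adjoint exists abstractly via the standard formula $\phi\sepimp_w\psi=\bigcup\{\xi\mid\phi\star_w\xi\leq\psi\}$, and finally unfolds that union to recover the displayed expression (upward closure of the result then comes for free, as a union of upward closed sets). You instead verify the adjunction $\chi\star_w\phi\subseteq\psi\iff\chi\subseteq\phi\sepimp_w\psi$ directly against the explicit formula using Lemma~\ref{lem:sep-props}(\ref{it:sep-prop3}), which is equally valid and arguably more informative. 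Two things you do that the paper leaves implicit are genuinely worth having: the explicit verification that $\hat P\hat Hw$ is a unit, which is exactly where the upward closure condition defining $\TA$ is consumed (cf.\ Remark~\ref{rem:nempty}), and the observation that upward closure of $\phi\sepimp_w\psi$ is the one nontrivial well-definedness issue. On that last point your direct argument (descending along $\sim$ to a smaller representative via the diamond property) is only sketched and would need the compatibility of $\leq$ with the generating relation $\preceq$ spelled out; but your fallback via join-preservation and uniqueness of adjoints is precisely the paper's route, so nothing essential is missing.
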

\begin{proof}
In view of Lemma~\ref{lem:B-cHA}, we are left to show that the given operations
are natural and that $\TA$ is an internal BI-algebra w.r.t.\ them. Since BI-algebras 
form a variety~\cite{GalatosJipsenEtAl07}, it suffices to show that each $\TA w$ is a BI-algebra. By 
Lemma~\ref{lem:sep-props}~(ii), it suffices to show that every 
$(\argument)\star_w\phi$ preserves arbitrary joins, for then we can use the 
standard formula to calculate $\phi\sepimp_w \psi$, which happens to be
natural in~$w$:
\begin{displaymath}
\phi\sepimp_w \psi = \bigcup\, \{\xi\mid \phi\star_w\xi\leq \psi \}.  
\end{displaymath}
By unfolding the right-hand side, we obtain the expression for $\sepimp_w$ figuring 
in the statement of the lemma.
\qed\end{proof}
\begin{figure}[t!]
\begin{itemize}[itemsep=1ex]
\item $s,\rho,\eta\models\top$ %
\item $s,\rho,\eta\models\phi\land\psi$ ~~if~~ $s,\rho,\eta\models\phi$ and $s,\rho,\eta\models\psi$
\item $s,\rho,\eta\models\phi\lor\psi$ ~~if~~ $s,\rho,\eta\models\phi$ or $s,\rho,\eta\models\psi$
\item $s,\rho,\eta\models\phi\To\psi$ ~~if~~ for all $(\rho,\eta)\sim (\rho',\eta')$ and
 $\eta'\leq\eta''$,\\[1ex]\erule\quad  $s,\rho',\eta''\models\phi$ implies $s,\rho',\eta''\models\psi$
\item $s,\rho,\eta\models\phi(v)$ ~~if~~ $s,\rho,((\sem{\G\vctx v\c A}_{w'}\comp\ul\Gamma\rho)s,\eta)\models\phi$ 
\item $s,\rho,(a,\eta)\models x.\,\phi$ ~~if~~ $a=(X\rho) b$ and $(s,b),\rho,\eta\models\phi$ 
\item $s,\rho,\eta\models  \ell\ito v$ ~~if~~  $\eta=(w''\subseteq w',\delta\in\CH(w'',w'))$ 
and\\[1ex]\erule\quad $\delta(r\c S) = (\sem{\G\vctx v\c \CType(S)}_{w'}\comp\ul{\G}\rho)s$\\[1ex]\erule\quad where 
$(\sem{\G\vctx \ell\c\Ref_S}_{w'}\comp\ul{\G}\rho)s = (r\c S)\in w''$
\item $s,\rho,\eta\models v = u$ ~~if~~
  $(\sem{\G\vctx v\c A}_{w''}\comp\ul{\G}\rho'\comp\ul{\G}\rho)(s) = (\sem{\G\vctx u\c A}_{w''}\comp\ul{\G}\rho'\comp\ul{\G}\rho)(s)$
  \\[1ex]\erule\quad for some $\rho'\c w'\to w''$
\item $s,\rho,\eta\models\phi\star\psi$ ~~if~~ for suitable $w_1$, $w_2$, 
$\eta\in\Hplb(w_1 \uplus w_2, w')$,\\[1ex]\erule\quad $s,\rho,(w_1 \subseteq w', \Hplb(w_1 \subseteq w_1 \uplus w_2, w')\eta)\models\phi$
and\\[1ex]\erule\quad $s,\rho,(w_2 \subseteq w', \Hplb(w_2 \subseteq w_1 \uplus w_2, w')\eta)\models\psi$
\item $s,\rho,\eta\models\phi\sepimp\psi$ ~~if~~ for all $(\rho',\eta_1)\sim (\rho,\eta)$
and for all $\eta_2$ such that $\eta_1\cdot\eta_2$ is defined,\\[1ex]\erule\quad 
$s,\rho',\eta_2\models\phi$ implies $s,\rho',\eta_1\cdot\eta_2\models\psi$
\item $s,\rho,\eta\models\exists\phi$ ~~if~~ $\ul{\G}(\hat\geoU\epsilon\comp\rho)s,\id_{w''},(a,\hat H\epsilon\comp\eta)\models\phi$ for some 
$\epsilon\c w'\into w''$, $a\in\ul{A} w''$
\item $s,\rho,\eta\models\forall\phi$ ~~if~~ $\ul{\G}(\hat\geoU\epsilon\comp\rho)s,\id_{w''},(a,\hat H\epsilon\comp\eta)\models\phi$ for all 
$\epsilon\c w'\into w''$, $a\in\ul{A} w''$
\end{itemize}
\caption{Semantics of the logic.}
\label{fig:log.sem}
\end{figure}
\needspace{2\baselineskip}
\begin{theorem}
$\TA$ is an internally complete Heyting BI-algebra, hence $[\argument,\TA]$ is a 
$BI$-hyperdoctrine. 
\end{theorem}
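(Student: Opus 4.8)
The plan is to read off the theorem as an assembly of the preparatory lemmas, with no essentially new computation. First I would note that $[\BW,\Set]$ is a presheaf topos, hence Cartesian closed and finitely complete, so the hypotheses of the Proposition producing a canonical hyperdoctrine from an internally complete Heyting algebra are satisfied.

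Then I would invoke Lemma~\ref{lem:B-cHA}, which furnishes $\TA$ with the structure of an internally complete Heyting algebra in $[\BW,\Set]$. Applying that Proposition with $A := \TA$ yields at once that $S := [\argument,\TA]$ is a (higher order) hyperdoctrine: the Heyting connectives act pointwise, the quantifiers $(\exists Y)_X$ and $(\forall Y)_X$ are the indexed joins $\bigor_{\fst}$ and meets $\bigand_{\fst}$, equality is $\bigor_{\brks{\id,\id}}\top$, and the naturality in $X$ is the Beck-Chevalley condition built into internal completeness. The higher order representability clause holds by construction, since $SX = [\BW,\Set](X,\TA)$ with $A = \TA$ internal.

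It remains only to promote this to a BI-hyperdoctrine, i.e.\ to certify $\TA$ as an internal BI-algebra. This is exactly Lemma~\ref{lem:B-BI}: the separating conjunction $\star$, which restricts to $\TA$ by Lemma~\ref{lem:cl-preserve}, is a natural, associative and commutative multiplication with unit $\hat P\hat H$, and $\sepimp$ is its right order-adjoint, again natural in $w$. Since a Heyting BI-algebra is nothing more than a Heyting algebra carrying a residuated commutative monoid on the same lattice---the multiplicative layer $(\star,\sepimp)$ and the Heyting layer being tied together only through the shared order, with no further axiom relating them---possessing both the internally complete Heyting structure of Lemma~\ref{lem:B-cHA} and the internal BI-algebra structure of Lemma~\ref{lem:B-BI} on the single object $\TA$ is precisely to be an internally complete Heyting BI-algebra. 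Conditions (5)--(6) of a BI-hyperdoctrine are then automatic: each $SX$ inherits the BI operations pointwise from $\TA$, and each reindexing $Sf$, being precomposition with $f$, preserves them and is thus a BI-algebra morphism.

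The real work has already been spent in the preparatory lemmas---above all the verification inside Lemma~\ref{lem:B-BI} that every map $(\argument)\star_w\phi$ preserves arbitrary joins, which is what manufactures the residual $\sepimp_w$ and secures its naturality. At the level of the present theorem the only points to watch are that both structures sit on one and the same internal object, so that their combination is meaningful, and that no hidden compatibility between $\land,\To$ and $\star,\sepimp$ is demanded; granting these, the statement is immediate from the Proposition together with Lemmas~\ref{lem:B-cHA} and~\ref{lem:B-BI}.
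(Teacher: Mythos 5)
Your proposal is correct and follows essentially the same route as the paper, whose proof is exactly the one-line observation that the theorem follows from Lemmas~\ref{lem:B-cHA} and~\ref{lem:B-BI} (together with the general Proposition turning an internally complete Heyting algebra into a hyperdoctrine). You merely spell out the assembly in more detail, including the correct observation that no extra compatibility axiom between the Heyting and BI layers is required beyond sharing the underlying order.
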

\begin{proof}
Follows from Lemmas~\ref{lem:B-cHA} and~\ref{lem:B-BI}.\qed
\end{proof}
This now provides us with a complete semantics of the language in Fig.~\ref{fig:logic.lang}
with $\sem{\G\vdash\phi\c\pro}\c\ul{\G}\to\TA$ and
$\sem{\G\vdash\phi\c\pred{A}}\c\ul{\G}\to\ul{\pred A}$ where
$\ul{\pred A}$ is the upward closed subfunctor of $\check\PSet\comp (\hat P (\hat\geoU \ul{A}\times \hat H))$, with upward closure only on the $\hat H$-part, which
is isomorphic to $\TA^{\ul{A}}$.
The resulting semantics is defined in Fig.~\ref{fig:log.sem} where we write 
$s,\rho,\eta\models\phi$ for $(\rho,\eta)_\sim\in\sem{\G\vdash\phi\c\pro}(s)$
and $s,\rho,(a,\eta)\models\phi$ for $(\rho,(a,\eta))_\sim\in\sem{\G\vdash\phi\c\pred A}(s)$.
The following properties~\cite{CalcagnoOHearnEtAl03} are then automatic.
\begin{proposition}
\begin{itemize}
	\item \emph{(Monotonicity)} If $s,\rho,\eta\models\phi$  and $\eta\leq\eta'$ then
		$s, \rho, \eta' \models \phi$.
	\item \emph{(Shrinkage)} If $s,\rho,\eta\models\phi$, $\eta'\leq\eta$ and $\eta'$ contains all cells reachable from $s$ and $w$ then
	$s, \rho, \eta' \models \phi$.
		\end{itemize}
\end{proposition}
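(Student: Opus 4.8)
The plan is to treat the two clauses separately, since they concern the two orthogonal Kripke axes of the model: \emph{accessibility} (the order $\leq$ on heaplets, i.e.\ the $\inacc$\nobreakdash-marking) and \emph{allocation/hiding} (the equivalence $\sim$ underlying $\hat P$).

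Monotonicity is immediate and requires no induction. By construction $\sem{\G\vdash\phi\c\pro}(s)\in\TA w$, and every element of $\TA w$ is, by the very upward closure condition carving out $\TA$ as a subfunctor of $\check\PSet\comp\hat P\hat H$, closed under the accessibility order: if $(\rho,\eta)_\sim$ lies in it and $\eta\leq\eta'$, then so does $(\rho,\eta')_\sim$. Unfolding $s,\rho,\eta\models\phi$ as $(\rho,\eta)_\sim\in\sem{\G\vdash\phi\c\pro}(s)$ yields the claim verbatim.

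For shrinkage I would argue by structural induction on $\phi$. First I would record the bookkeeping: writing $R$ for the set of cells reachable from $s$ and the public part $\rho[w]$, the hypothesis that $\eta'$ retains all of $R$ (together with $\eta'\leq\eta$) forces reachability in $\eta'$ and in $\eta$ to coincide, since $R$ is closed under following pointers through accessible cells. The routine cases follow: $\top$ and $v=u$ do not inspect $\eta$; for $\ell\ito v$ the cell named by $\ell$ lies in $R$ and $\eta'$ agrees with $\eta$ there; $\land$ and $\lor$ are immediate from the inductive hypothesis. The separating conjunction is handled via Lemma~\ref{lem:sep-props}(\ref{it:sep-prop3}): a splitting $\eta=\eta_1\uplus\eta_2$ witnessing $\phi\star\psi$ restricts along the accessible domain of $\eta'$ to a splitting $\eta'=\eta_1'\uplus\eta_2'$, and since reachable-in-$\eta_i\subseteq R\cap w_i\subseteq$ the accessible domain of $\eta_i'$, each component retains its reachable cells, so the inductive hypothesis applies to $\phi$ and $\psi$ and the halves recombine.

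The genuine difficulty, and the step I expect to be the main obstacle, is the family of clauses quantifying over \emph{future} worlds and heaplets: $\phi\To\psi$, $\phi\sepimp\psi$, and the quantifiers $\forall\phi$, $\exists\phi$. Here discarding unreachable cells interacts nontrivially with the allocation axis, because a world\nobreakdash-changing initialization $\epsilon$ occurring in these clauses may initialize a fresh cell whose value \emph{points back into the discarded region}, reviving a previously unreachable cell; naively, $\hat H\epsilon$ applied to the shrunk heaplet would then drop a cell that has become reachable in $\hat H\epsilon(\eta)$. I would control this through a commutation lemma between the two axes: using that each functorial action $\hat H\epsilon$ is monotone for $\leq$ together with the diamond property of Proposition~\ref{prop:diamond}, one shows that whenever $(\rho,\eta')\sim(\rho_1,\zeta)$ with $\eta'\leq\eta$ there is $\bar\zeta\geq\zeta$ with $(\rho,\eta)\sim(\rho_1,\bar\zeta)$, the cells separating $\zeta$ from $\bar\zeta$ being exactly the transported copies of the discarded unreachable cells. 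For the universal clauses one then combines this commutation with monotonicity (to push a witness for $\phi$ up to the join of the two heaplets) and with the inductive hypothesis for $\psi$ (to shrink the resulting witness for $\psi$ back down), the point being that the cells added and then removed are precisely unreachable ones, so the inductive hypothesis is applicable; the existential clause is dual. Discharging this cleanly — in particular verifying that the relevant joins exist and that the transported discarded cells provably remain outside the reachable set throughout the argument — is where essentially all of the work lies, and is the part I would write out in full.
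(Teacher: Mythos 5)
Your treatment of monotonicity is exactly the paper's: the semantics lands in $\TA$, whose defining upward closure condition is literally the first clause, so nothing more is needed.

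For shrinkage, however, you have missed the idea that makes the paper's proof a one-liner (the paper calls both properties ``automatic''), and your replacement plan has genuine unresolved gaps. The key observation is that the hypotheses of shrinkage already imply $(\rho,\eta)\sim(\rho,\eta')$, i.e.\ the two configurations are \emph{the same element} of $\hat P\hat H w$. Since $s,\rho,\eta\models\phi$ is by definition membership of the equivalence class $(\rho,\eta)_\sim$ in the set $\sem{\G\ctx\phi\c\pro}(s)\subseteq\hat P\hat H w$, shrinkage then holds for \emph{every} semantic predicate whatsoever -- definable or not -- with no induction on $\phi$ and no analysis of the clauses in Fig.~\ref{fig:log.sem}. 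The equivalence itself is established by exhibiting a common $\preceq$-upper bound (Proposition~\ref{prop:diamond}): work in $w''=w'\oplus\widetilde{w'}$; from the $\eta$-side promote along the inclusion, adding fresh copies of the retained-but-unreachable cells via the initialization's heaplet part; from the $\eta'$-side promote along the injection that fixes the reachable locations and sends all unreachable locations to their fresh copies, re-initializing the discarded cells (and the retained unreachable ones) with their old contents. Both sides land on the same heaplet, and the two injections agree on $\rho[w]$ because public cells are reachable. Note that this is exactly where the hypothesis ``$\eta'$ contains all reachable cells'' is used: reachable cells mention only reachable locations, so their contents are untouched by the second renaming.

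Your inductive plan, by contrast, founders precisely where you say it does. The ``commutation lemma'' you posit between $\leq$ and $\sim$ is weaker than the equivalence above but would still have to be proved by essentially the same renaming construction -- at which point the induction is superfluous. Worse, the recombination step for the universal clauses ($\To$, $\sepimp$, $\forall$) requires a join $\zeta'\vee\bar\zeta$ of two extensions of a common heaplet, and such a join need not exist: the cells added by $\zeta'$ and the transported copies of the discarded cells can collide on the same locations with different contents, so one must again invoke $\sim$ to rename them apart. Finally, in the $\star$ case the induction hypothesis, as you state it, is not applicable to the components: a component $\eta_1'$ need not contain all cells reachable from $s$ and $w$, since those may lie in the other component; repairing this forces a reformulation of the statement being inducted on. All of these difficulties evaporate once one notices that shrinkage is a statement about equality of points in $\hat P\hat H w$, not about the logical structure of $\phi$.
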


\section{Examples}\label{sec:exam}
Let us illustrate subtle features of our semantics by some examples.
\begin{example}
Consider the formula $\exists \ell\c \Ref_{\oname{Int}}. \, \ell\ito 5$  
from the introduction in the empty context $\argument$. Then $\argument,\rho,\eta\models\exists\ell. \, \ell\ito 5$
iff for some $\epsilon\c w'\into w''$, and some $x\in\ul{\Ref_{\oname{Int}}}w''$,
$x,\id_{w''}, (\hat H \epsilon)\eta\models\ell'\ito 5$. The latter is true iff 
$\pr_x ((\hat H \epsilon)\eta) = 5$. Note that $w'$ may not contain $\ell$ and it 
is always possible to choose $\eps$ so that~$w''$ contains $\ell$ and 
$\pr_x ((\hat H \epsilon)\eta) = 5$. Hence, the original formula is always~valid.
\end{example}

\begin{example}\label{expl:impl} The clauses in Fig.~\ref{fig:log.sem} are very similar to the 
standard Kripke semantics of intuitionistic logic. Note however, that the clause for implication 
strikingly differs from the expected one
\begin{itemize}
 	\item $s,\rho,\eta\models\phi\To\psi$ ~~if~~ for all
 $\eta\leq\eta'$, $s,\rho,\eta'\models\phi$ implies $s,\rho,\eta'\models\psi$,
\end{itemize}
though. The latter is indeed not validated by our semantics, as witnessed by the following
example. Consider the following formulas $\phi$ and $\psi$ respectively:
 \begin{align}
 	\ell\c \Ref_{\Ref_{\mathit{Int}}}\,&\ctx\exists \ell'.\,\exists x.\, \ell\ito \ell' \land \ell' \ito x\c\pro\label{eq:imp-ex1}\\*
 	\ell\c \Ref_{\Ref_{\mathit{Int}}}\,&\ctx\exists \ell'.\, \ell\ito \ell' \land \ell' \ito 6\c\pro\label{eq:imp-ex2}
 \end{align}
The first formula is valid over heaplets, in which $\ell$ refers to a reference to 
some integer, while the second one is only valid over heaplets, in which $\ell$ refers to a 
reference to~$6$.
Any $\eta'\geq\eta = (\id_w, (\{\ell''\}\subseteq\{\ell,\ell''\}, [\ell'' \mto 6]))$ satisfies both~\eqref{eq:imp-ex1}
and~\eqref{eq:imp-ex2} or none of them. However, the implication $\phi\To\psi$ 
still is not valid over~$\eta$ in our semantics, for
\begin{align*}
\eta\sim&\; 
(w \ito w \oplus {(\ell'\c  \mathit{Int})}, (\{\ell',\ell''\}\subseteq\{\ell,\ell',\ell''\}, [\ell' \mto 5, \ell'' \mto 6]))\\
 	\leq &\; (w \ito w \oplus {(\ell'\c  \mathit{Int})}, (\{\ell,\ell',\ell''\}\subseteq\{\ell,\ell',\ell''\}, [\ell\mto \ell', \ell' \mto 5, \ell'' \mto 6]))
\end{align*}
and the latter heaplet validates $\phi$ but not $\psi$.
\end{example}

\begin{example}
Least $\mu$ and greatest $\nu$ fixpoints can be encoded in higher order logic~\cite{BieringBirkedalEtAl07}.
As an example, consider 
\begin{align*}
\mathit{isList} = \mu\gamma.\,\ell.\,\,\ell \ito\mathit{null} \lor \exists \ell',x.\, \ell\ito (x, \ell')\star \gamma(\ell'),
\end{align*}
which specifies the fact that $\ell$ is a pointer to a head of a list (eliding 
coproduct injections in $\inl\mathit{null}$ and $\inr(x, \ell')$). By definition,
$\mathit{isList}$ satisfies the following recursive equation: 
\begin{align*}
\mathit{isList}(\ell) = \ell\ito\mathit{null} \lor \exists \ell',x.\, \ell\ito (x, \ell')\star \mathit{isList}(\ell')
\end{align*}
Let us expand the semantics of the right hand side. We have
\begin{align*}
\sem{\ell\c&\Ref_{\oname{list}}, \mathit{isList}\c\pred{(\Ref_{\oname{list}})} \vdash
l \ito\mathit{null} \lor \exists \ell',x.\, \ell\ito (x, \ell')\star \mathit{isList}(\ell')}_w ( \mathit{isList})\\
 =&\;\{(\rho\c w\to w',(\ul{\Ref_{\oname{list}}}\rho)(\ell), \delta \in\hat H w')_\sim\mid \pr_{\rho(\ell)}(\delta) = \mathit{null}
\} \cup\\
&\qquad\sem{\ell\c\Ref_{\oname{list}}, \mathit{isList}\c\pred{(\Ref_{\oname{list}})} \vdash \exists \ell',x.\, \ell\ito (x, \ell')\star \mathit{isList}(\ell')}_w (\mathit{isList})\\
 =&\;\{(\rho\c w\to w',(\ul{\Ref_{\oname{list}}}\rho)(\ell), \delta \in\hat H w')_\sim\mid\\
&\qquad\pr_{\rho(\ell)}(\delta) = \mathit{null}~\lor~ \exists \ell',~ x.\,\pr_{\rho(\ell)} \delta = (x, \ell')\land (\rho, \ell',\delta \smin\rho(\ell))_\sim
\in\mathit{isList}\}
\end{align*}
where $\delta \smin\rho(\ell)$ denotes the $\delta$ with the cell $\rho(\ell)$
removed. In summary, $(\rho\c w\to w',(\ul{\Ref_{\oname{list}}}\rho)(\ell), \delta \in\hat H w')_\sim$ is in 
$\sem{\ell\c\Ref_{\oname{list}}, \mathit{isList}\c\pred{(\Ref_{\oname{list}})} \vdash
\mathit{isList}(\ell)}_w (\mathit{isList})$ if and only if either $\pr_{\rho(\ell)}\delta = \mathit{null}$
or there exists an $l'\in w'$ such that $\pr_{\rho(\ell)} \delta = (x, \ell')$
and $(\rho, \ell', \delta \smin\rho(\ell))_\sim
	\in\mathit{isList}$.
\end{example}

\section{Conclusions and Further Work}\label{sec:conc}
Compositionality is an uncontroversial desirable property in semantics and 
reasoning, which admits strikingly different, but equally valid interpretations,
as becomes particularly instructive when modelling dynamic memory allocation. 
From the programming perspective it is desirable to provide compositional
means for keeping track of integrity of the underlying data, in particular, for preventing 
\emph{dangling pointers}. Reasoning however inherently requires introduction of partially 
defined data, such as \emph{heaplets}, which due to the compositionality principle 
must be regarded as first class semantic units. 

Here we have made a step towards reconciling recent extensional monad-based
denotational semantic for full-ground store~\cite{KammarLevyEtAl17} with 
higher order categorical reasoning frameworks~\cite{BieringBirkedalEtAl07} by
constructing a suitable intuitionistic BI-hyperdoctrine. Much remains to be
done. A highly desirable ingredient, which is currently missing in our logic
in Fig.~\ref{fig:logic.lang} is a construct relating programs and logical
assertions, such as the following dynamic logic style modality
\begin{align*}
\anonrule{}{%
  \G\cctx p\c A\qquad\G\ctx\phi\c\pred A   
  }{%
  \G\ctx [p]\phi\c\pro
  }
\end{align*}
which would allow us e.g.\ in a standard way to encode \emph{Hoare triples} 
$\htri{\phi}{p}{\psi}$ as implications $\phi\impl [p]\psi$. 
This is difficult 
due to the outlined discrepancy in the semantics for construction and reasoning. 
The categories of initializations for~$p$ and $\phi$ and the corresponding hiding 
monads are technically incompatible. In future work we aim to deeply analyse this
phenomenon and develop a semantics for such modalities in a principled fashion.

Orthogonally to these plans we are interested in further study of the full ground store monad
and its variants. One interesting research direction is developing algebraic 
presentations of these monads in terms of operations and 
equations~\cite{PlotkinPower03}. Certain generic methods~\cite{MaillardMellies15} 
were proposed for the simple store case (Example~\ref{def:gs}), and it remains to 
be seen if these can be generalized to the full ground store~case.

\clearpage

\bibliographystyle{plain}
\bibliography{bi-hyper-refs}

\clearpage
\appendix
\allowdisplaybreaks

\section{Appendix: Omitted Details}

\subsection{Proof of Lemma~\ref{prop:diamond}}
Assuming that $(\rho,x)\preceq (\rho_1,x_1)$ is witnessed by some $\eps_1\c w\into w_1$ 
and $(\rho,x)\preceq (\rho_2,x_2)$ is witnessed by some $\eps_2\c w\into w_2$, 
$(\rho_1,x_1)\preceq (\rho',x')$ and $(\rho_2,x_2)\preceq (\rho',x')$ are witnessed 
by the induced injections $w_1\into w_1\oplus_w w_2$ and  
$w_2\into w_1\oplus_w w_2$ respectively.
\qed

\subsection{Proof of Theorem~\ref{thm:simp-store}}
We will need two lemmas.
\begin{lemma}\label{lem:simpleStore1}
In the simple store model, for each $X\c\BE\to\Set$, $\geoU_\star (X^{H})\iso X(\argument)^{\CV^{(\argument)}}\c\BW\to\Set$,
	which
	 is equipped with the following functorial action:
	\begin{align*}
		(X(\argument)^{\CV^{(\argument)}}) (\rho \c w\to w') (p\c  {\CV^w}\to Xw) (s\c w '\to\CV) =
			X (\rho, s\comp\rho^\complement) (p(s\comp\rho))
	\end{align*}
	(Note that $(\rho, s\comp\rho^\complement)\in\BE(w,w')$,  hence $X(\rho, s\comp\rho^\complement)\c Xw\to Xw'$).
\end{lemma}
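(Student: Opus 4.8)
The plan is to compute the right Kan extension $\geoU_\star(X^H)$ pointwise through the defining adjunctions together with the Yoneda lemma, reducing everything to the representable case. First I would unwind $\geoU_\star(X^H)(w)$: writing $yw = \BW(w,\argument)$, we have
\begin{align*}
\geoU_\star(X^H)(w) &\iso [\BW,\Set](yw, \geoU_\star(X^H)) \iso [\BE,\Set](\geoU^\star yw, X^H)\\
&\iso [\BE,\Set](\geoU^\star yw \times H, X),
\end{align*}
using the Yoneda lemma, the geometric-morphism adjunction $\geoU^\star\dashv\geoU_\star$, and the exponential adjunction in $[\BE,\Set]$. Since $\geoU$ is the identity on objects, $\geoU^\star yw$ is the functor $e\mapsto\BW(w,e)$ on $\BE$.

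The crux is then a natural isomorphism in $[\BE,\Set]$,
\[
 \geoU^\star yw \times H \;\iso\; \CV^w \times \BE(w,\argument),
\]
which is where the simple store assumption is essential. Concretely, for $e\in|\BE|$ an injection $\rho\c w\to e$ splits $e$ as $\img\rho\uplus(e\ominus\rho)$, whence $He=\CV^e\iso\CV^{\img\rho}\times\CV^{e\ominus\rho}\iso\CV^w\times\CV^{e\ominus\rho}$, precisely because $\oname{range}(\star)$ is the constant presheaf $\CV$, so values transport freely along $\rho$. This sends $(\rho,h)$ to $(h\comp\rho,(\rho,h|_{e\ominus\rho}))$ with $(\rho,h|_{e\ominus\rho})\in\BE(w,e)$. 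I would then verify that this assignment is natural in $e$: for $\eps=(\sigma,\zeta)\c e\into e'$ both composites relabel $\rho$ to $\sigma\comp\rho$ and extend the heap by $\zeta$ on the freshly allocated cells, so the required square commutes. In the general full ground store model $\oname{range}(S)(e)$ genuinely depends on $e$ and this splitting fails, which is exactly why the formula is particular to simple store.

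Feeding this iso back in and using that $\CV^w$ is a constant set, so $\CV^w\times\BE(w,\argument)\iso\coprod_{\CV^w}\BE(w,\argument)$, one continues
\[
 [\BE,\Set](\geoU^\star yw \times H, X)\iso\prod_{\CV^w}[\BE,\Set](\BE(w,\argument),X)\iso\prod_{\CV^w} Xw=(Xw)^{\CV^w},
\]
the last isomorphism again by Yoneda. This establishes the pointwise bijection $\geoU_\star(X^H)(w)\iso(Xw)^{\CV^w}$.

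It remains to identify the functorial action. Since every step above is natural in $w$, the isomorphism is one of functors $\BW\to\Set$ once the right-hand side carries the transported action; the final task is to compute this transported action and confirm it matches the stated formula. Tracing an element $p\in(Xw)^{\CV^w}$ backwards to $\geoU_\star(X^H)(w)$, applying $\rho\c w\to w'$, and tracing forward again, a heap $s\c w'\to\CV$ first restricts to $s\comp\rho\in\CV^w$, feeding $p(s\comp\rho)\in Xw$, while the complementary values $s\comp\rho^\complement$ on $w'\ominus\rho$ reassemble into the initialization $(\rho,s\comp\rho^\complement)\c w\into w'$, whose $X$-image yields $X(\rho,s\comp\rho^\complement)(p(s\comp\rho))$ — exactly the claimed action. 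I expect this last bookkeeping to be the only real obstacle: the two adjunctions and Yoneda are formal, but pinning down precisely how $s$ decomposes under the transported action, and checking it agrees with $X(\rho,s\comp\rho^\complement)(p(s\comp\rho))$ on the nose, requires carefully chasing each isomorphism together with the covariant actions of $H$ and $\BE(w,\argument)$.
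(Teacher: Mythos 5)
Your proof is correct, and it reaches the paper's conclusion by a differently packaged but closely related route. The paper computes $\geoU_\star(X^H)w$ directly as the end $\int_{\rho\c w\to w'}\Set(\CV^{w'},Xw')$ and observes that a compatible family $f$ is determined by its component $f(\id_w)$, since any pair $(\rho,s)$ arises as $(\geoU\eps\comp\id_w,\,s)$ for the initialization $\eps=(\rho,s\comp\rho^\complement)$. Your key isomorphism $\geoU^\star\BW(w,\argument)\times H\iso\CV^w\times\BE(w,\argument)$, sending $(\rho,h)$ to $(h\comp\rho,(\rho,h|_{e\ominus\rho}))$, encodes exactly that observation, but routing it through Yoneda and the two adjunctions buys you something the paper's phrasing leaves implicit: the paper only explicitly argues that restriction to $\id_w$ is injective, tacitly leaving to the reader that every $p\c\CV^w\to Xw$ extends to a compatible family, whereas your $\prod_{\CV^w}[\BE,\Set](\BE(w,\argument),X)\iso (Xw)^{\CV^w}$ delivers a genuine bijection in one stroke. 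The transport of the functorial action then comes out identically in both treatments. The one step you should spell out rather than assert is the naturality in $e$ of your splitting isomorphism: it does hold, because composition in $\BE$ pushes the old heaplet forward and adjoins the newly initialized cells, matching the covariant action of $H$, and this check is precisely where the simple-store hypothesis (constancy of $\oname{range}$, so that values need no transport along $\rho$) is consumed, as you correctly flag.
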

\begin{proof}
Let $X \c\BE \to\Set$. We will show that $X(\argument)^{\CV^{(\argument)}}$ is the right Kan extension
of~$X^H$ along $\geoU$, i.e.\ $X(\argument)^{\CV^{(\argument)}} = \geoU_\star (X^H)$. By definition,

\begin{align*}
(\geoU_\star X^H)w 
\iso\int_{\rho\c w\to w'\in w \dar \geoU} \Set(H w',Xw')
 = \int_{\rho\c w\to w'\in w \dar \geoU} \Set(\CV^{w'},Xw').
\end{align*}
The resulting end is the subset of $\prod_{\rho\c w\to w'} \Set(\CV^{w'},Xw')$,
consisting of those dependent maps $f$, which satisfy 
\begin{align*}
  (X\eps)\comp f(\rho)\comp\CV^{\geoU\eps} = f(\geoU\eps\comp\rho)
\end{align*}
for all $\rho\c w\to w'$ and ${\eps\c w'\into w''}$. Every such map is determined
by its action on $\id_w\c w\to w$, because for every $\rho\c w\to w'$, given 
$s\c w'\to\CV$, $f(\rho)(s) = f(\geoU\eps\comp\id_w)(s) = (X\eps\comp f(\id_w))(s\comp\rho)$ where 
$\eps = (\rho,s\comp\rho^\complement)\in\BE(w,w')$. Hence, $(\geoU_\star X^H)w$ 
is indeed isomorphic to $(Xw)^{\CV^w}$. The functorial action of our end is induced 
by the functorial action of $\prod_{\rho\c(\argument)\to w'} \Set(\CV^{w'},Xw')$
sending every $\tau\c w_1\to w_2$ to $f\mto f(\argument\comp\tau)$. By composing 
this functorial action with the isomorphism $(\geoU_\star X^H)w\iso (Xw)^{\CV^w}$, we obtain that 
\begin{align*}
	(X(\argument)^{\CV^{(\argument)}}) (\tau \c w_1\to w_2) (p\c\CV^{w_1}\to Xw_1)(s\c w_2\to\CV)
	= (X(\tau,s\comp\tau^\complement)\comp p)(s\comp\tau),
\end{align*}
which is equivalent to the goal.
\qed\end{proof}
Next, we observe that for the simple store, the coend over a covariant functor
figuring in the definition of~$T$ can be transformed to a coend over a properly mix-variant
functor, thus, the whole category $\BE$ can be eliminated from the game.
\begin{lemma}\label{lem:simple-store2}
In the simple store model, $\Hid(X\comp\geoU\times H)w$ is equivalently the quotient
of\/ $\sum_{\rho\c w \to w'} X w'\times \CV^{w'}$ under the equivalence
relation $\sim$, generated by the clauses:
\begin{align*}
(\rho, (x, s\comp\rho'))\sim (\rho'\comp\rho, ((X\rho')(x), s))
\end{align*}
with $\rho\c w\to w'$, $\rho'\c w'\to w''$, $s\c w''\to\CV$ and $x \in X w'$.
Hence,
\begin{align*}
\Hid(X\comp\geoU\times H)w\iso\int^{\rho\c w\to w'\in w \dar\BW} Xw'\times \CV^{w'}. 
\end{align*} 
\end{lemma}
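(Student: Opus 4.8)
The plan is to unfold $\Hid$ into the explicit colimit described right after~\eqref{eq:hid} and then recognize the resulting quotient as the $\BW$-coend in the statement. Write $G=(X\comp\geoU)\times\Hpl\c\BE\to\Set$, so that $\Hid(G)w$ is the quotient of $\sum_{\rho\c w\to w'}Gw'$ by the symmetric--transitive closure of $\preceq$, where for $\eps\c w_1\into w_2$ one has $(\rho,z)\preceq(\geoU\eps\comp\rho,(G\eps)(z))$. In the simple store model $Gw'=Xw'\times\CV^{w'}$, so the first step is to compute $G\eps$ on a pair $(x,t)$ with $x\in Xw'$ and $t\in\CV^{w'}$ a heap. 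The component $X\comp\geoU$ acts by $X(\geoU\eps)=X\rho'$, while $\Hpl=\BE(\iobj,\argument)$ acts by postcomposition; so for $\eps=(\rho'\c w'\to w'',\,\eta'\in\Hplb(w''\ominus\rho',w''))$ I would spell out that $\Hpl\eps(t)=\eps\comp t$ is the heap $s\in\CV^{w''}$ determined by $s\comp\rho'=t$ together with $s$ restricting to $\eta'$ on $w''\ominus\rho'$.

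The crux of the first claim is then a reparametrization: giving an initialization $\eps=(\rho',\eta')$ together with a heap $t$ over $w'$ is exactly the same datum as giving the injection $\rho'$ together with an \emph{arbitrary} heap $s$ over $w''$, via $t=s\comp\rho'$ and $\eta'$ the restriction of $s$ to $w''\ominus\rho'$; the freedom in choosing the fresh values $\eta'$ is precisely the freedom in the values of $s$ off $\img\rho'$. Under this bijection the generator $(\rho,z)\preceq(\geoU\eps\comp\rho,(G\eps)(z))$ reads verbatim as $(\rho,(x,s\comp\rho'))\preceq(\rho'\comp\rho,(X\rho'(x),s))$, which is exactly the generating clause of the stated relation. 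Since the two families of generators coincide literally, so do their symmetric--transitive closures, and $\Hid(G)w$ is the asserted quotient. (The diamond property of Proposition~\ref{prop:diamond} can be quoted to describe $\sim$ concretely, but is not needed here, as the generators already match.)

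For the displayed isomorphism I would identify this quotient with the coequalizer presentation of $\int^{\rho\c w\to w'\in w\dar\BW}Xw'\times\CV^{w'}$. This coend is taken over the comma category $w\dar\BW$ of the mixed-variance functor $((w_1',\rho_1),(w_2',\rho_2))\mapsto Xw_2'\times\CV^{w_1'}$, covariant in the second slot through $X$ and contravariant in the first through $\CV^{(\argument)}$. Its standard coequalizer identifies, for each morphism $\tau\c(w',\rho)\to(w'',\rho'')$ of $w\dar\BW$ (that is, $\tau\c w'\to w''$ with $\tau\comp\rho=\rho''$) and each $(x\in Xw',\,s\in\CV^{w''})$, the two classes $(\rho,(x,s\comp\tau))$ and $(\rho'',(X\tau(x),s))$. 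Taking $\tau=\rho'$ this is precisely the generating relation obtained above, so the two quotients coincide on the nose, yielding the isomorphism.

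I expect the only genuinely delicate point to be the bookkeeping in the first two steps: pinning down the covariant action of the representable heap functor on a $\BE$-morphism and repackaging the complement-heaplet $\eta'$ into a full heap $s$ on the larger layout $w''$. Once this bijection between $(\eps,t)$ and $(\rho',s)$ is made precise, both identifications are formal rewritings of the respective coequalizer diagrams and carry no further content.
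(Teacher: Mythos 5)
Your proof is correct. The paper in fact states Lemma~\ref{lem:simple-store2} without proof (the appendix proves only Lemma~\ref{lem:simpleStore1} and then immediately assembles Theorem~\ref{thm:simp-store}), so there is no argument to compare against; yours supplies the missing one in the natural way. The two steps you flag as the real content are exactly right and both check out: the covariant action of the representable $\Hpl$ on $\eps=(\rho',\eta')$ sends a heap $t\in\CV^{w'}$ to the unique $s\in\CV^{w''}$ with $s\comp\rho'=t$ and $s$ agreeing with $\eta'$ on $w''\ominus\rho'$ (the pushforward along $\oname{range}(\star)(\rho')$ is the identity since $\oname{range}(\star)$ is constant, which is precisely what makes the reparametrization $(\eps,t)\leftrightarrow(\rho',s)$ a bijection and confines the lemma to the simple store), and the resulting generators coincide verbatim with those of the coequalizer presentation of the mixed-variance coend over $w\dar\BW$.
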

The proof of Theorem~\ref{thm:simp-store} is now obtained as follows.

\begin{flalign*}
&&\geoU_\star(\Hid(\geoU^\star X\times H)^H)w  &\;\iso (\Hid(X\comp\geoU\times H)w)^{\CV w} &\by{Lemma~\ref{lem:simpleStore1}}\\
&&  &\;\iso \biggl(\int^{\rho\c w\to w'\in w \dar \BW} Xw'\times \CV^{w'}\biggr)^{\CV^w}.&\by{Lemma~\ref{lem:simple-store2}}
\end{flalign*}
\qed
\begin{figure}[t!]
\begin{center}%
{\parbox{\textwidth}{%
\small%
\begin{flalign*}
&&
\anonrule{get}{%
  \sem{\G\vctx \ell\c\Ref_S} = \alpha\c\ul{\G}\to\sem{\Ref_S}
  }{%
  \sem{\G\cctx\bang \ell\c \CType(S)} = \geoU_\star(\Hid\oname{put})^H\comp\eta\comp\alpha\c\ul{\G}\to T\ul{\CType(S)}
  }
&&
\end{flalign*}
\begin{align*}
\anonrule{put}{%
  \sem{\G\vctx \ell\c\Ref_S}=\alpha\c\ul{\G}\to\sem{\Ref_S}\quad\sem{\G\vctx v\c\CType(S)}=\beta\c\ul{\G}\to\ul{\CType(S)}
  }{%
  \sem{\G\cctx \ell\ass v\c 1}=\geoU_\star(\Hid\oname{get})^H\comp\eta\comp\brks{\alpha,\beta} \c\ul{\G}\to T1 
  }
\end{align*}
\begin{align*}
&&
\anonrule{new}{%
\begin{array}{r@{}l}
  \sem{\G, x\c\Ref_{S}& \vctx v\c\CType(S)} = \alpha\c\ul{\G}\times\ul{\Ref_{S}}\to\ul{\CType(S)}\\
  \sem{\G, x\c\Ref_{S}& \cctx p\c A} = \beta\c\ul{\G}\times\ul{\Ref_{S}}\to T\ul{A}
\end{array}
  }{%
  \sem{\G\cctx\letref{x\ass v}{p}\c A} = \beta^\klstar\comp\tau\comp\Brks{\id,
\geoU_\star(\oname{new}^\klstar)^H\comp\eta\comp\curry\alpha}\c\ul{\G}\to T\ul{A}
}
&&
\end{align*}
}}
\end{center}
\caption{Semantics of non-standard language constructs.}
\label{fig:prog.sem}
\end{figure}

\subsection{Semantics of the Program Language}

Let us sketch the semantics for the language in Fig.~\ref{fig:prog.lang}. Since our language 
is a proper extension of the generic fine-grain call-by-value~\cite{LevyPowerEtAl02},
which has a standard semantics w.r.t.\ a given strong monad, we only focus on the 
non-standard term constructs. Recall that a value judgement $\G\vctx v\c A$ is 
interpreted as a morphism $\sem{\G\vctx v\c A}\c\ul{\G}\to\ul{A}$ and a
computation judgement $\G\cctx p\c A$ is interpreted as a Kleisli 
morphism $\sem{\G\cctx p\c A}\c\ul{\G}\to T\ul{A}$ where the semantics 
of types is as in Example~\ref{ex:main}, which yields the following explicit
expression for the heap functor
\begin{align*}
	H w = \prod_{(\ell\c S)\in w} \ul{\CType(S)}w
\end{align*}
The operations for working with the store are then interpreted according to the 
assignments in Fig.~\ref{fig:prog.sem}: $\eta$ and~$\tau$ refer to unit and strength 
of $T$ correspondingly; $(\argument)^\klstar$ in the rule for $\oname{letref}$ 
refers to Kleisli liftings (both of $T$ and $P$). The auxiliary natural transformations 
$\oname{put}$, $\oname{get}$ and $\oname{new}$ for writing, reading and allocating
correspondingly are defined as follows:
\begin{flalign*}
\qquad&\oname{get}\c\geoU^\star\sem{\Ref_S}\times H\to\geoU^\star\ul{\CType(S)}{\times H}\\
&\oname{get}_w(\ell\in w^\mone(S),\eta\in Hw)= (\eta(\ell),\eta)\\[1.5ex]
&\oname{put}\c (\geoU^\star\sem{\Ref_S}\times \geoU^\star\ul{\CType(S)})\times H\to 1\times H\\
&\oname{put}_w(\ell\in w^\mone(S),v\in\ul{\CType(S)}w, \eta\in Hw) = (\star,\delta)\\
&\qquad\text{where~} \delta(\ell) = v, \delta(\ell') = \eta(\ell')\text{~~if~~} \ell\nequiv \ell'\\[1.5ex]
&\oname{new}\c\geoU^\star(\ul{\CType(S)}^{\ul{\Ref_{S}}})\times H\to P(\geoU^\star\ul{\Ref_{S}}\times H)\\
&\oname{new}_w(\alpha\c\BW(w,\argument)\times\geoU^\star\ul{\Ref_{S}}\to\geoU^\star\ul{\CType(S)},\eta\in Hw)\\
&\qquad = (\inl \c w \to w\oplus\{\ell\c S\}, ((\ul{\Ref_{S}} \inr)(\ell),\\
&\qquad\qquad \Hplb(\id,\inl)(\eta)\oplus[\ell\c S\mto\alpha_{w\oplus\{\ell\c S\}}(\inl,(\ul{\Ref_{S}} \inr)(\ell))]))_\sim
\end{flalign*}
For simplicity, we only presented the case of one variable in $\oname{letref}$ -- the 
case of many variables is completely analogous.

\subsection{Proof of Theorem~\ref{thm:int-BA}}
For an internal poset $B$ in a topos $\BT$ with a subobject classifier $\Omega$, 
let $\dar\c B\to\Omega^B$ be the \emph{principal ideal} operator obtained by currying the greater-or-equal relation
$B\times B\to\Omega$. Then the internal join $\bigior\c\Omega^B \to B$ is defined 
as left order-adjoint to~$\dar$. 
In the internal language of $\BT$, $\dar x = \{y\mid y\leq x\}$ and the adjointness 
condition for $\bigior$ can be spelled out as follows:
\begin{align*}
			\forall S \in\PSet B, y \in S.& \, y \leq \bigior S\\
			\forall S \in\PSet B, x \in B.& \,
		(\forall y \in S. \, y \leq x)\impl \bigior S \leq x
\end{align*}
where $\PSet = \Omega^{(\argument)}$ is the (covariant) powerobject functor.

We will need the following connection between indexed joins~$\bigor$ as in 
Definition~\ref{def:ic-HA} and internal joins~$\bigior$.
\begin{proposition}\label{prop:int-ext}
Let $B$ be an internal poset in a topos. Then indexed join and internal 
join structures on $B$ are equivalent under the following mutual conversions:
\begin{flalign*}
&&\bigor_{f\c I\to J} (g \c I \to B) &\;= \bigl(J\xto{\curry(\oname{eq}\comp(\id\times f))} \Omega^I\xto{\exists_g}\Omega^B\xto{\bigior} \Omega\bigr) &\\
&&\bigior &\;= \bigor_{n\c \oname{\ni}_B\ito\Omega^B} (e\c \oname{\ni}_B\ito B)
\end{flalign*}
where $\oname{eq}\c J\times J\to\Omega$ is the internal equality on $J$, 
$\brks{n,e}\c\oname{\ni}_B\ito\Omega^B\times B$ is a pullback of $\top\c 1\to\Omega$
along the evaluation morphism $\ev\c\Omega^B\times B\to\Omega$. Meets are connected 
analogously.
\end{proposition}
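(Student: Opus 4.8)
\noindent
The plan is to read both structures as presentations of the statement ``$B$ is an internal sup-lattice'', and to show that the two displayed conversions are mutually inverse, each landing in a valid structure of the target kind. I work throughout in the internal (Kripke--Joyal) language of the topos. There the first conversion reads $\bigor_f(g)(j)=\bigior\bigl(g[f^{\mone}(j)]\bigr)$: the map $\curry(\oname{eq}\comp(\id\times f))$ sends $j$ to the fibre $f^{\mone}(j)\subseteq I$, then $\exists_g$ forms its $g$-image in $B$, and $\bigior\c\Omega^B\to B$ takes the internal join; the second reads $\bigior=\bigor_n(e)$.

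First I would treat the direction internal~$\Rightarrow$~indexed. Assuming $\bigior\dashv\dar$ internally, I define $\bigor_f$ by the first formula. Order-adjointness is a one-line internal computation: $\bigor_f(g)\le h$ means $\bigior(g[f^{\mone}(j)])\le h(j)$ for all $j$, which by the internal adjointness unfolds to ``$g(i)\le h(j)$ whenever $f(i)=j$'', i.e.\ $g\le h\comp f$. The Beck--Chevalley condition for these $\bigor_f$ is exactly Beck--Chevalley for images in the subobject fibration: in a pullback square the top edge restricts to an isomorphism between the fibre of the left leg over $i'$ and the fibre of the right leg over $f'(i')$, so the two $g$-images in $B$, and hence their joins, agree. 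Functoriality $\bigor_{f'\comp f}=\bigor_{f'}\comp\bigor_f$ is immediate.

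Next, the direction indexed~$\Rightarrow$~internal. Put $\bigior=\bigor_n(e)$. The one identity I need is $\bigior\comp S=\bigor_p(e')$ for every $S\c Z\to\Omega^B$, where $p\c\tilde S\to Z$ and $e'\c\tilde S\to B$ arise from pulling back the membership mono $\oname{\ni}_B\ito\Omega^B\times B$ along $S$, so that $\tilde S=\{(z,b)\mid b\in S(z)\}$; this is Beck--Chevalley applied to that pullback, using $\bigor_n(e)=\bigior$. Internal adjointness then drops out of the adjointness of $\bigor_p$: for $x\c Z\to B$ one has $\bigior\comp S\le x$ iff $\bigor_p(e')\le x$ iff $e'\le x\comp p$, and the last condition says $b\le x(z)$ whenever $b\in S(z)$, i.e.\ $S\le\dar\comp x$. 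Naturality in $Z$ yields the internal adjunction $\bigior\dashv\dar$.

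Finally I would check the two round trips. From an internal $\bigior$, forming $\bigor_\bullet$ and then $\bigor_n(e)$ returns $\bigior$ directly, since the fibre of $n$ over $S$ is $S$ itself with $e$ restricting to the inclusion $S\ito B$, whose image is $S$. For the other round trip I start from indexed joins and factor $\brks{f,g}=m\comp e$ with $e$ a (regular) epi and $m$ mono; then $\tilde G=\img\brks{f,g}$ computes $g[f^{\mone}(\argument)]$, so the reconstructed operator is $\bigior\comp G=\bigor_{p_G}(e'_G)$ by the identity above. Writing $f=p_G\comp e$ and $g=e'_G\comp e$ and using functoriality $\bigor_{p_G\comp e}=\bigor_{p_G}\comp\bigor_e$, the whole equality reduces to $\bigor_e(g)=e'_G$. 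This last step is the main obstacle, and it is the point where epiness enters: reindexing $(\argument)\comp e$ along the epi $e$ is order-reflecting (since every epi in a topos is effective, so $\img(h\comp e)=\img h$), hence a fully faithful right adjoint, so the adjunction $\bigor_e\dashv(\argument)\comp e$ has invertible counit and $\bigor_e((\argument)\comp e)=\id$; applied to $e'_G$ (with $e'_G\comp e=g$) this gives $\bigor_e(g)=e'_G$ and thus $\bigor'_f(g)=\bigor_f(g)$. The corresponding statements for meets follow by the evident dualisation, with $\bigior,\dar$ replaced by $\bigiand$ and its right adjoint.
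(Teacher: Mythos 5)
Your proposal is correct, but there is nothing in the paper to compare it against: Proposition~\ref{prop:int-ext} is stated in the appendix without proof, as a background fact of internal category theory (in the spirit of the results of~\cite[B2]{Johnstone02} that the paper invokes elsewhere), and is then simply used in the proofs of Theorem~\ref{thm:int-BA} and Lemma~\ref{lem:B-cHA}. Your argument supplies the missing proof, and it is sound. The two load-bearing steps are exactly where they should be: first, the identity $\bigior\comp S=\bigor_p(e')$, obtained by applying Beck--Chevalley to the pullback of the membership mono $\brks{n,e}$ along $S\c Z\to\Omega^B$, which drives both the indexed-to-internal direction and the second round trip; second, in that round trip, the reduction via the image factorization $\brks{f,g}=m\comp e$ to the identity $\bigor_e(g)=e'_G$, which genuinely requires that precomposition with $e$ be order-reflecting. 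Your justification of the latter is the right one: $e$ is a regular epi (being the epi part of an image factorization, and in a topos all epis are regular anyway), so $\img(h\comp e)=\img(h)$, hence $h\comp e$ factoring through the order subobject of $B\times B$ forces $h$ to factor through it as well; this makes $(\argument)\comp e$ a full order-embedding, so the adjunction $\bigor_e\dashv(\argument)\comp e$ has invertible counit. Your use of functoriality $\bigor_{p_G\comp e}=\bigor_{p_G}\comp\bigor_e$ is legitimate since order-adjoints are unique, so functoriality is automatic rather than an extra hypothesis. One cosmetic remark: the displayed formula in the statement types $\bigior$ as $\Omega^B\to\Omega$; this is a typo for $\Omega^B\to B$, which you silently and correctly repair.
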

We proceed with the proof of Theorem~\ref{thm:int-BA}.
In every De Morgan topos, in particular in $[\hat\BE,\Set]$, $2$ is a complete Boolean 
algebra under the structure induced by a retraction between $2$
and $\Omega$~(\cite[Proposition~2.6.2]{Johnstone02}). More concretely, 
$\Omega=1+\Omega_{\top}$, $\bot=\inl\c 1\to\Omega$ and the obvious lattice operations 
on~$2$ are respected by the injection $[\bot,\top]\c 2\to\Omega$. By generalities,
$2$ is an internally complete partial order, and hence an internally complete Boolean
algebra. Concretely, internal joins are defined as $\Omega^2\xto{\snd}\Omega=1+\Omega_{\top}\xto{\bot+\top\comp\bang} 2$.

In $[\hat\BE,\Set]$, the subobject 
classifier~$\Omega$ sends $w$ to the set of \emph{cosieves} 
over~$w$, i.e.\ sets of initializations $\eps\c w\into w'$, closed under composition
with arbitrary initializations $w'\into w''$. The corresponding internal joins for 
$2$ can be explicitly described as follows: 
\begin{align*}
\bigior\nolimits_w (\phi_\bot\in\Omega_w,\phi_\bot\in\Omega_w) = 
  \left\{\begin{array}{lr}
    \bot &\text{~~if~~}\phi_\top=\emptyset\\
    \top &\text{~~otherwise}
  \end{array}\right.
\end{align*}
Now, in a topos, internal 
joins are equivalent to indexed joins, which for every $f\c I\to J$, yields 
$\bigor_f\c [\hat\BE,\Set](I,2)\to [\hat\BE,\Set](J,2)$ given as follows:
$(\bigor_{f}\beta)_w(j\in Jw) = \top$ iff there exist $\eps \c w \into w'$ and $i \in I w'$ such that
$f_{w'} (i) = (J \eps) (j)$ and $\beta_{w'}(i) = \top$. In particular, note that
$\bigl(\bigor_{\nabla}[\fst,\snd]\bigr)_w(a\in 2,b\in 2) = \top$ iff $a=\top$ or~$b=\top$. 

Now, the joins on $\hat\geoU_\star(2^X)$ are obtained as compositions 
\begin{align*}
[\BW,\Set](I,\hat\geoU_\star(2^X))
\;\iso
&\;[\hat\BE,\Set](\hat\geoU^\star I\times X,2)
\xto{\bigor_{\hat\geoU^\star f\times X}} 
[\hat\BE,\Set](\hat\geoU^\star J\times X,2)\\*
\iso 
&\;[\BW,\Set](J,\hat\geoU_\star(2^X))
\end{align*}
from a general result~\cite[B2.3.7]{Johnstone02}, using the fact that 
$\hat\geoU^\star (\argument)\times X\dashv\hat\geoU^\star(\argument)^X$ where 
the left adjoint is pullback-preserving. Explicitly:
\begin{align*}
\Bigl(\bigor_f \phi\c I& \to\hat\geoU^\star(2^X)\Bigr)_w (j \in J w)(\rho\c w\to w')\\*
 =&\, \{ x \in Xw'\mid
\exists\eps\c w	'\into w'', \exists i \in I w''.\,\\*
&\qquad f_{w''}(i) = J(\hat\geoU\eps\comp \rho)(j)\land (X\eps)(x)\in 
\phi_{w''}(i)(\id_{w''})\},
\end{align*}
which is shown as follows. Note the isomorphism $\Psi\c [\BW,\Set](I,\hat\geoU_\star(2^X))
\iso[\hat\BE,\Set](\hat\geoU^\star I\times X,2)$:
\begin{align*}
	(\Psi (\alpha\c I \to\hat\geoU_\star(2^X)))_w(i,x) = \top &\;\iff 
	x\in (\alpha_w(i))(\id_w),\\
	y\in (\Psi^\mone(\beta\c\hat\geoU_\star I \times X \to 2))_w (i \in Iw)(\rho\c
	w \to w') &\;\iff\beta_{w'}((I \rho)(i), y) = \top
\end{align*}
where $x\in Xw$, $y\in Xw'$ and $i \in Iw$.
Then, given $x\in X w'$, $j\in Jw$, $\rho\c w\to w'$, and $f\c I\to J$, 
\begin{flalign*}
&&x\in\Bigl(\bigor_f&\phi\c I\to\hat\geoU^\star(2^X)\Bigr)_w (j)(\rho)\\*
&&\iff\;&x\in\Bigl(\Psi^\mone \Bigl(\bigor_{\hat\geoU^\star f\times X}\Psi\phi\Bigr)\Bigr)_w (j)(\rho)\\
&&\iff\;& \Bigl(\bigor_{\hat\geoU^\star f\times X}\Psi\phi\Bigr)_{w'} ((J\rho) j,x)=\top&\by{def.~$\Psi$}\\
&&\iff\;& \exists\eps\c w'\into w'', \exists i\in Iw'', \exists k\in Xw''.\\
&&&\quad(\hat\geoU^\star f)_{w''}(i) = ((\hat\geoU^\star J)\eps)((J\rho) j)\land\\
&&&\quad k = (X\eps)(x)\land (\Psi\phi)_{w''} (i,k)=\top&\by{def.~$\bigor_{\hat\geoU^\star f\times X}$}\\
&&\iff\;& \exists\eps\c w'\into w'', \exists i\in Iw''.\\
&&&\quad f_{w''}(i) = J(\hat\geoU\eps)((J\rho) j)\land (X\eps)(x)\in\phi_{w''} (i)(\id_{w''})&\by{def.~$\Psi$}\\
&&\iff\;& \exists\eps\c w'\into w'', \exists i\in Iw''.\\*
&&&\quad f_{w''}(i) = J(\hat\geoU\eps\comp\rho)(j)\land (X\eps)(x)\in\phi_{w''} (i)(\id_{w''}).
\end{flalign*}
By using the isomorphism~\eqref{eq:Phi-iso}, 
this yields 
the requisite explicit definition for joins. The case of meets is analogous.

Let us verify that the induced binary joins $\lor=\bigor_{\nabla}[\fst,\snd]$ are indeed
computed as pointwise set unions. Let $\phi,\psi\subseteq\hat P\hat H w$, $\rho\c w\to w'$
and $x \in Xw'$. Then by definition, 
$(\rho\comma x)_\sim\in\phi\lor_w\psi$ iff there exist $\eps\c w '\into w''$
and $\phi',\psi'\subseteq\hat P\hat H w''$, such that $(\phi',\psi') = 
((\check\PSet\comp (\hat PX) (\hat\geoU\eps\comp\rho))\phi,(\check\PSet\comp (\hat PX) (\hat\geoU\eps\comp \rho))\psi)$
and $(\id_{w''}, (X\eps)(x))_\sim\in\phi'$ or $(\id_{w''}, (X\eps)(x))_\sim\in\psi'$.
Equivalently, $(\rho\comma x)_\sim\in\phi\lor_w\psi$ iff there exist $\eps\c w '\into w''$ such that
$(\id_{w''}, (X\eps)(x))_\sim\in\oname{hide}_{\hat\geoU\eps\comp\rho}^\mone[\phi]$ or 
$(\id_{w''}, (X\eps)(x))_\sim\in\oname{hide}_{\hat\geoU\eps\comp\rho}^\mone[\psi]$.
Now,
\begin{flalign*}
&&(\id_{w''}, (X\eps)(x))_\sim&\,\in\oname{hide}_{\hat\geoU\eps\comp\rho}^\mone[\phi]\\
&&\iff&\; \oname{hide}_{\hat\geoU\eps\comp\rho} (\id_{w''}, (X\eps)(x))_\sim\in\phi\\
&&\iff&\; (\hat\geoU\eps\comp\rho,(X\eps)(x))_\sim\in\phi&\by{\eqref{eq:P-cov-act}}\\*
&&\iff&\; (\rho, x)_\sim\in\phi&\by{defn.~of~$\sim$}
\end{flalign*}
and analogously for $\psi$. In summary $(\rho\comma x)_\sim\in\phi\lor_w\psi$ iff 
$(\rho, x)_\sim\in\phi$ or $(\rho, x)_\sim\in\psi$, as required. 
Analogously, binary meets are computed as set intersections, by a dual argument.
\qed

\subsection{Proof of Proposition~\ref{prop:p-exp}}
Let $X\c\hat\BE\to\Set$ and $w\in |\hat\BE|$. Then
\begin{align*}
(\check\PSet\comp P X) w 
\;\iso&\;([\BW,2^{(\argument)}]\comp P X) w \\
=&\; \Set\biggl(\int^{\rho\c w\to w'\in w \dar \hat\geoU} X w',2\biggr)\\*
\iso&\; \int_{\rho\c w\to w'\in w \dar \hat\geoU} \Set(X w',2)\\[1ex]
\iso&\; (\hat\geoU_\star 2^X)w
\end{align*}
which is functorial in $w$ and natural in $X$.\qed

\subsection{Semantics of the Logic: Classical Case}
Let us first spell out the isomorphism
\begin{align}\label{eq:pred-iso}
(\check\PSet\comp \hat P\hat H)^X 
\iso (\hat\geoU_\star(2^{\hat H}))^X
\iso \hat\geoU_\star(2^{\hat\geoU^\star X\times\hat H})
\iso \check\PSet\comp \hat P(\hat\geoU^\star X\times\hat H).
\end{align}
Recall that by definition, $(Y^X)w = \Nat(\BW(w,\argument)\times X,Y)$
for any $X,Y\c\BW\to\Set$. Then $\alpha\c\BW(w,\argument)\times X\to\check\PSet\comp (\hat P\hat H)$
corresponds to $\alpha'\c\BW(w,\argument)\times X\to\hat\geoU_\star(2^{\hat H})$,
such that 
\begin{align*}
&(\rho'\c w'\to w'',\eta\in\hat Hw'')_\sim\in\alpha(\rho\c w\to w', x\in Xw')\\
\text{iff}\qquad &(\eta\in\hat Hw'')\in\alpha'(\rho\c w\to w', x\in Xw')(\rho'\c w'\to w'').
\intertext{Then $\alpha'$ corresponds to $\phi'\in \hat\geoU_\star(2^{\hat\geoU^\star X\times\hat H}) w$,
such that}
&(\eta\in\hat Hw'')\in\alpha'(\rho\c w\to w', x\in Xw')(\rho'\c w'\to w'')\\
\text{iff}\qquad &((X\rho') x\in Xw'',\eta\in\hat Hw'')\in\phi'(\rho'\comp\rho\c w\to w'').
\intertext{
Finally, $\phi'$ corresponds to $\phi\in\check\PSet\comp\hat P(\hat\geoU^\star X\times\hat H)$
such that}
&(x\in Xw',\eta\in\hat H w')\in\phi'(\rho\c w\to w')\\
\text{iff}\qquad &(\rho\c w\to w',(x\in Xw',\eta\in\hat H w'))_\sim\in\phi.
\intertext{
In summary,~\eqref{eq:pred-iso} connects $\alpha$ with $\phi$ in such a way that
} 
&(\rho'\c w'\to w'',\eta\in\hat Hw'')_\sim\in\alpha(\rho\c w\to w', x\in Xw')\\
\text{iff}\qquad &(\rho'\comp\rho\c w\to w'',((X\rho')x\in Xw'',\eta\in\hat H w''))_\sim\in\phi.
\end{align*}
For the backward implication of this equivalence one can always assume $\rho'=\id_{w'}$ 
-- the fact that any other choice of $\rho'$ produces the same result follows 
from naturality of $\alpha$.

Given $w\in |\BW|$, $s\in\ul{\G}w$, $\rho\c w\to w'$, $\eta\in\hat Hw'$, $a\in Xw'$, let us use  
$s,\rho,\eta\models\phi$ as a synonym for $(\rho,\eta)_\sim\in\sem{\Gamma\ctx\phi\c\pro}_w(s)$
and $s,\rho,(a,\eta)\models\phi$ as a synonym for $(\rho,(a,\eta))_\sim\in\sem{\Gamma\ctx\phi\c\pred A}_w(s)$.
Unfolding the abstract constructions yields the following semantics of terms 
by induction:
\begin{itemize}[itemsep=1ex]
\item $s,\rho,\eta\models\bot$ ~~holds never;
\item $s,\rho,\eta\models\top$ ~~holds always;
\item $s,\rho,\eta\models\phi\tto\psi$ ~~if~~ $s,\rho,\eta\models\phi$ implies $s,\rho,\eta\models\psi$;
\item $s,\rho,\eta\models\phi\land\psi$ ~~if~~ $s,\rho,\eta\models\phi$ and $s,\rho,\eta\models\psi$;
\item $s,\rho,\eta\models\phi\lor\psi$ ~~if~~ $s,\rho,\eta\models\phi$ or $s,\rho,\eta\models\psi$;
\item $s,\rho,\eta\models\phi(v)$ ~~if~~ $s,\rho,((\sem{\G\vctx v\c A}_{w'}\comp\ul\Gamma\rho)s,\eta)\models\phi$; 
\item $s,\rho,(a,\eta)\models x.\,\phi$ ~~if~~ $a=(X\rho) b$ and $(s,b),\rho,\eta\models\phi$;
\item $s,\rho,\eta\models  \ell\ito v$ ~~if~~  $\eta=(w''\subseteq w',\delta\in\CH(w'',w'))$ 
and\\[1ex]\erule\quad $\delta(r\c S) = (\sem{\G\vctx v\c \CType(S)}_{w'}\comp\ul{\G}\rho)s$  
\\[1ex]\erule\quad where $(\sem{\G\vctx \ell\c\Ref_S}_{w'}\comp\ul{\G}\rho)s = (r\c S)\in w''$;
\item $s,\rho,\eta\models v = u$ ~~if~~
  $(\sem{\G\vctx v\c A}_{w''}\comp\ul{\G}\rho'\comp\ul{\G}\rho)(s) = (\sem{\G\vctx u\c A}_{w''}\comp\ul{\G}\rho'\comp\ul{\G}\rho)(s)$
  for some $\rho'\c w'\to w''$;
\item $s,\rho,\eta\models\exists\phi$ ~~if~~ $\ul{\G}(\hat\geoU\epsilon\comp\rho)s,\id_{w''},(a,\hat H\epsilon\comp\eta)\models\phi$ for some 
$\epsilon\c w'\into w''$, $a\in\ul{A} w''$;
\item $s,\rho,\eta\models\forall\phi$ ~~if~~ $\ul{\G}(\hat\geoU\epsilon\comp\rho)s,\id_{w''},(a,\hat H\epsilon\comp\eta)\models\phi$ for all 
$\epsilon\c w'\into w''$, $a\in\ul{A} w''$.
\end{itemize}

\subsection{Proof of Lemma~\ref{lem:B-cHA}}
Let us check that $\ucl_w$ is natural in $w$. Let $\rho\c w\to w'$ and let 
$\phi\subseteq\hat P \hat H w$. We need to show that 
\begin{align*}
\{(\rho',\eta')_\sim\in&\,\hat P\hat Hw' \mid \oname{hide}_\rho(\rho',\eta')_\sim\in\ucl_w(\phi)\}\\
=&\, \{(\rho',\eta')_\sim\in\hat P\hat Hw' \mid \oname{hide}_\rho(\rho',\eta)_\sim\in \phi, \eta\leq\eta'\}.
\end{align*}
Since $\oname{hide}_\rho(\rho',\eta')_\sim = (\rho'\comp\rho,\eta')_\sim$, 
$\oname{hide}_\rho(\rho',\eta')_\sim\in\ucl_w(\phi)$ iff for a suitable $\eta\leq\eta'$
$(\rho'\comp\rho,\eta)_\sim\in\phi$. The latter is the same as $\oname{hide}_\rho(\rho',\eta)_\sim\in \phi$.
Thus we obtained the desired equality.
 
By definition, $\ucl\comp\iota=\id$, i.e.\ $(\iota,\ucl)$ is an internal 
retraction. On the other hand,~$\iota$ is a right order-adjoint of $\ucl$: indeed, 
both $\iota$ and $\ucl$ are clearly monotone, hence $\phi\leq\iota(\psi)$ entails
$\ucl(\phi)\leq\ucl(\iota(\psi)) = \psi$, and $\ucl(\phi)\leq\psi$ entails 
$\phi\leq\iota(\ucl(\phi))\leq\iota(\psi)$, for, obviously, by definition, 
$\phi\leq\iota(\ucl(\phi))$.

Since upward closure is preserved by finite meets and joins, $\TA$ is an 
internal sublattice of $\check\PSet\comp\hat P\hat H$. Moreover, we transfer large
internal joins from $\check\PSet\comp\hat P\hat H$ to~$\TA$ as follows:
\begin{align}\label{eq:join-TA}
\bigior\Psi  = \ucl\Bigl(\bigior\{\iota(\psi)\mid\psi\in\Psi\}\Bigr)
\end{align}
Let us check that this definition is valid, i.e.\ that the defined $\bigior$ is 
a left order-adjoint to the principal ideal operator $\dar$:
\begin{align*}
\ucl\Bigl(\bigior\{\iota(\psi)&\mid\psi\in\Psi\}\Bigr) \leq \phi\\
\iff&\bigior\{\iota(\psi)\mid\psi\in\Psi\} \leq \iota(\phi)\\
\iff&\{\iota(\psi)\mid\psi\in\Psi\} \subseteq \dar\iota(\phi) = \{\phi'\mid \phi'\leq \iota(\phi)\}\\
\iff&\Psi=\{\psi\mid\psi\in\Psi\} \subseteq  \{\phi'\mid \phi'\leq \phi\} = \dar\phi.
\end{align*}
Thus, $\TA$ is a complete join-semilattice, and therefore, a complete lattice with 
large meets in a standard way defined as follows:
\begin{align*}
\bigiand \Psi = \bigior \{ \phi \mid \forall \psi \in \Psi. \, \phi \leq \psi \}.
\end{align*}
We would like to show that, analogously to the $\bigior$ case:
\begin{align}\label{eq:meet-TA}
\bigiand\Psi = \ucl\Bigl(\bigiand\{\iota(\psi)\mid\psi\in\Psi\}\Bigr).
\end{align}
Indeed, we have
\begin{align*}
\bigiand \Psi 
=&\; \bigior \{ \phi \mid \forall \psi \in \Psi. \, \phi \leq \psi \}\\
=&\; \ucl\Bigl(\bigior \{ \iota(\phi) \mid \forall \psi \in \Psi. \, \phi \leq \psi \}\Bigr)\\
=&\; \ucl\Bigl(\bigior \{ \phi \mid \forall \psi \in \Psi. \, \ucl(\phi) \leq \psi \}\Bigr)\\
=&\; \ucl\Bigl(\bigior \{ \phi \mid \forall \psi \in \Psi. \, \phi \leq \iota(\psi) \}\Bigr)\\
=&\; \ucl\Bigl(\bigior \{ \iota(\phi) \mid \forall \psi \in \{\iota(\psi)\mid\psi\in\Psi\}. \, \iota(\phi) \leq \psi \}\Bigr)\\
=&\; \ucl\Bigl(\bigiand\{\iota(\psi)\mid\psi\in\Psi\}\Bigr).
\end{align*}

\noindent Distributivity of binary meets over infinite joints is transported from 
$\PSet\circ (\hat P \hat H)$ to $\TA$ as follows:
\begin{align*}
\phi\land\bigior\Psi 
=&\;\ucl\Bigl(\iota(\phi)\land\iota\Bigl(\ucl\Bigl(\bigior\{\iota(\psi)\mid\psi\in\Psi\}\Bigr)\Bigr)\Bigr)\\
=&\;\ucl\Bigl(\iota(\phi)\land\bigior\{\iota(\psi)\mid\psi\in\Psi\}\Bigr)\\
=&\;\ucl\Bigl(\bigior\{\iota(\phi)\land\iota(\psi)\mid\psi\in\Psi\}\Bigr)\\
=&\;\ucl\Bigl(\bigior\{\iota(\ucl(\iota(\phi)\land\iota(\psi)))\mid\psi\in\Psi\}\Bigr)\\
=&\;\bigior\{\phi\land\psi\mid\psi\in\Psi\}.
\end{align*}
Hence, $\TA$ is an internally complete Heyting algebra.

From~\eqref{eq:join-TA} and~\eqref{eq:meet-TA}, we obtain the corresponding 
formulas for indexed meets and joins:
\begin{align*}
\bigor_f \phi = \ucl \circ \bigor_f (\iota \circ \phi)
&&
\bigand_f \phi = \ucl \circ \bigand_f (\iota \circ \phi)
\end{align*}
where $f\c I\to J$ and $\phi\c I \to \TA$. Indeed, e.g.\ for joins, using 
Proposition~\ref{prop:int-ext}:
\begin{align*}
	\bigor_f (\phi \c I \to \TA) (j\in Jw) 
	=&\; \bigior \{ \phi (i)\mid f(i) = j \}\\
	=&\; \ucl \Bigl(\bigior \{ \iota (\phi (i))\mid f (i) = j \}\Bigr)\\
	=&\; \Bigl(\ucl \circ \bigor_f (\iota \circ \phi)\Bigr) (j),
\end{align*}
and analogously for meets. Using Theorem~\ref{thm:int-BA}, we obtain 
explicit formulas for $\bigor$ and $\bigand$ on $\TA$:
\begin{align*}
\Bigl(\bigor_f\phi\c I& \to\TA\Bigr)_w (j \in J w)\\*
 =&\, \ucl\{ (\rho\c w\to w', \eta \in\hat Hw')_\sim\mid
\exists\eps\c w	'\into w'', \exists i \in I w''.\,\\*
&\qquad f_{w''}(i) = J(\hat\geoU\eps\comp \rho)(j)\land (\id_{w''}, (\hat H\eps)(\eta))_\sim\in 
\phi_{w''}(i)\},\\
\Bigl(\bigand_f\phi\c I& \to\TA\Bigr)_w (j \in J w)\\
 =&\, \ucl\{ (\rho\c w\to w', \eta \in \hat Hw')_\sim\mid
\forall\eps\c w'\into w'', \forall i \in I w''.\,\\
&\qquad f_{w''}(i) = J(\hat\geoU\eps\comp \rho)(j)\impl (\id_{w''}, (\hat H\eps)(\eta))_\sim\in 
\phi_{w''}(i)\}.
\end{align*}
Finally, we show that the applications of $\ucl$ have no effect, as the
resulting sets are already upward closed.\mpnote{Here we use that join is a well-defined
operation, i.e.\ it does not depend on the chosen representative of the equivalence
class. I've proven that.}
Assume that $\eps\c w	'\into w''$, $i \in I w''$, $f_{w''}(i) = J(\hat\geoU\eps\comp \rho)(j)$,
$(\id_{w''}, (\hat H\eps)(\eta))_\sim\in\phi_{w''}(i)$ and $\eta\leq\eta'$.
Since $\phi_{w''}(i)$ is upward closed and $\hat H \epsilon$
is order preserving, also $(\id_{w''}, (\hat H \hat \epsilon)\eta')_\sim \in \phi_{w''} (i)$. 
The analogous argument applies to meets. Now, $\TA$ is an internal complete sublattice 
of $\check\PSet\comp\hat P\hat H$, since meets and joins are computed using the same 
formulas.
\qed

\subsection{Proof of Lemma~\ref{lem:cl-preserve}}
By definition, $\phi \star_w \psi \subseteq \ucl_w (\phi \star_w \psi)$. We show the
converse. Assume that $(\rho\c w \to w'\comma \delta')_\sim\in\ucl_w (\phi \star_w
\psi)$. Then there exists $(\rho, (w_a \subseteq w'\comma \eta \in\Hplb(w_a, w')))_\sim\in\phi\star_w\psi$ such
that $(w_a \subseteq w', \eta \in\Hplb(w_a, w'))\leq \delta'$. By Lemma~\ref{lem:sep-props}~(iii), 
there exist $w_1$, $w_2$, such that $w_1 \uplus w_2 = w_a$, $(\rho, (w_1 \subseteq w', \Hplb(w_1
\subseteq w_a, w')\eta)_\sim\in\phi$ and $(\rho, (w_2 \subseteq w'\comma
\Hplb(w_2 \subseteq w_a, w')\eta)_\sim\in\psi$. 

By definition, $\delta'= (w_b \subseteq w', \eta'\in\Hplb(w_b, w'))$, for suitable 
$w_b$ and $\eta'$. Again, by definition, $ w_b$ can be presented as $w_a 
\cup (w_b \smin w_a) = w_1 \cup w_2 
\cup (w_b \smin w_a)$. 
By upward closure of $\psi$, $(\rho, (w_2 \cup (w_b
\smin w_a)\subseteq w'\comma \Hplb(w_2 \cup (w_b \smin w_a)\subseteq
w_b\comma w')\eta)_\sim\in\psi$ and we already know that $(\rho, (w_1 \subseteq
w', \Hplb(w_1\subseteq w_a, w')\eta))_\sim\in\phi$. Therefore,
$(\rho, \delta')_\sim = (\rho, (w_b \subseteq w', \eta'))_\sim\in (\phi \star_w \psi)$.
\qed

\end{document}